\date{}
\newtheorem{theorem}{Theorem}
\newtheorem{example}[theorem]{Example}
\newtheorem{lemma}[theorem]{Lemma}
\newtheorem{corollary}[theorem]{Corollary}
\newtheorem{proposition}[theorem]{Proposition}
\begin{document}

\title{The flow network method\footnote{We wish to thank an anonymous referee
for letting us know the existence of the papers by Gvozdik (1987) and Belkin and Gvozdik (1989) (in Russian), where the flow network method was first formulated. We also thank Andrey Sarychev for translating the mentioned papers.
Daniela Bubboloni was partially supported by GNSAGA of INdAM.}}
\author{\textbf{Daniela Bubboloni} \\
{\small {Dipartimento di Matematica e Informatica U.Dini} }\\
\vspace{-6mm}\\
{\small {Universit\`{a} degli Studi di Firenze} }\\
\vspace{-6mm}\\
{\small {viale Morgagni 67/a, 50134 Firenze, Italy}}\\
\vspace{-6mm}\\
{\small {e-mail: daniela.bubboloni@unifi.it}}\\
\vspace{-6mm}\\
{\small tel: +39 055 2759667} \and \textbf{Michele Gori}
 \\
{\small {Dipartimento di Scienze per l'Economia e  l'Impresa} }\\
\vspace{-6mm}\\
{\small {Universit\`{a} degli Studi di Firenze} }\\
\vspace{-6mm}\\
{\small {via delle Pandette 9, 50127 Firenze, Italy}}\\
\vspace{-6mm}\\
{\small {e-mail: michele.gori@unifi.it}}\\
\vspace{-6mm}\\
{\small tel: +39 055 2759707}}

\maketitle

\begin{abstract}
\noindent   In this paper we propose an in-depth analysis of a method, called the flow network method, which associates with any network a complete and quasi-transitive binary relation on its vertices. Such a method, originally proposed by Gvozdik (1987), is based on the concept of maximum flow.
Given a competition involving two or more teams, the flow network method can be used to build a relation on the set of teams which establishes, for every ordered pair of teams, if the first one did at least as good as the second one in the competition. Such a relation
naturally induces procedures for ranking teams and selecting the best $k$ teams of a competition. Those procedures are proved to satisfy many desirable properties.
\end{abstract}
\vspace{4mm}

\noindent \textbf{Keywords:} \noindent Network, network flow, quasi-transitive relation, pairs comparison, ranking problem, Schulze method.

\vspace{2mm}

\noindent \textbf{JEL classification:} D71.

\noindent \textbf{MSC classification:} 05C21, 05C22, 94C15.

\section{Introduction}

The Lega Basket Serie A, the first and highest-tier level of the Italian basketball league, currently consists of 16 teams. During the regular season of the basketball championship every team plays every other team twice for a total of 30 matches and ties are not allowed. At the end of the
season, teams are ranked by the total number of wins. Basketball fans certainly like having a ranking of the teams available at any stage of the season.
The usual method to build those temporary rankings is still based on counting the number of wins of each team. However, it is simple to understand that such a method may
lead to questionable outcomes because, even though two teams won the same number of matches at a given moment, they might have played their matches against opponents of different quality. That suggests the need to look for more expressive ranking methods.

\vspace{3mm}

Consider a set of teams and a list of matches played by those teams, where ties were not allowed. Such matches are supposed to be all the matches scheduled in a (possibly fictitious) competition among the considered teams. In order to rank the teams and select the winners one may compare them on the basis of the number of wins and, in fact, this method is largely used when each team confronts any other team the same number of times, as in the case of round-robin competitions. However, there are interesting situations which do not meet such a condition as the case of temporary rankings previously discussed or when the number of teams involved in a competition is so large that, for practical reasons, it is difficult to arrange all the possible matches\footnote{The National Football League (NFL) and the NCAA Division I Football Bowl Subdivision are examples of competitions where there are teams which never play each other on the field.}. In these cases, counting the number of wins does not seem to be such a good idea.
Consider, for instance,
the competition among the three teams $\textsc{a}$, $\textsc{b}$ and $\textsc{c}$ described by the table\footnote{We describe a competition by a table in which every row has the shape
\[
\begin{array}{|c||c|c||c|c|}
\hline
x & m &  n  &y\\
\hline
\end{array}
\]
meaning that the matches involving the teams $x$ and $y$ were $m+n$ and that $x$ won $m$ times and $y$ won $n$ times.}
\begin{equation}\label{table1}
\begin{array}{|c||c|c||c|c|}
\hline
\textsc{a} & 2 & 0  &\textsc{b}\\
\hline
\textsc{a} & 2 & 0 &\textsc{c}\\
\hline
\textsc{b} & 5 & 0 &\textsc{c}\\
\hline
\end{array}
\end{equation}
and note that not all the pairs of teams confronted each other the same number of times. Here, counting the number of wins makes team
$\textsc{b}$ be the winner, even though most people probably believe that team $\textsc{a}$ deserves to win the competition.

In order to overcome those  difficulties, Gvozdik (1987) and Belkin and Gvozdik (1989) propose a method, that we are going to call the {\it flow network method},
which can be used to establish, for every competition and every pair of teams involved in the competition, if the first team did at least as good as the second one. In other words, the method associates with every competition a special binary relation on the set of teams. Moreover, it naturally allows to select the winners of the competition as well as to determine the rankings of the teams that are consistent with the competition\footnote{We stress that Belkin and Gvozdik (1989) mainly focus on the problem of building rankings.}. We independently rediscover the flow network method and we strongly believe that it is worthwhile to
provide an in-depth analysis of it by revisiting old results and presenting new ones. Indeed, since the two above mentioned papers are published only in Russian, the flow network method has not certainly received the attention it deserves yet.

First of all, let us describe how the flow network method operates through the analysis of a concrete example. Consider the competition \textsf{C} among the four teams $\textsc{a}$, $\textsc{b}$, $\textsc{c}$ and $\textsc{d}$ described by the following table:
\begin{equation}\label{table2}
\begin{array}{|c||c|c||c|c|}
\hline
\textsc{a} & 1 & 0  &\textsc{b}\\
\hline
\textsc{a} & 1 & 2 &\textsc{c}\\
\hline
\textsc{a} & 2 & 2 &\textsc{d}\\
\hline
\textsc{b} & 1 & 2 &\textsc{c}\\
\hline
\textsc{b} & 1 & 1 &\textsc{d}\\
\hline
\textsc{c} & 2 & 2   &\textsc{d}\\
\hline
\end{array}
\end{equation}
Given two distinct teams $x$ and $y$, let us use the writing $x y$ to denote a match between $x$ and $y$ where $x$ beat $y$. Thus, looking at the table we can list all the matches played in \textsf{C} and the corresponding winners as follows
\begin{equation}\label{list1}
\begin{array}{l}
\textsc{ab},
\textsc{ac},\textsc{ca},\textsc{ca},
\textsc{ad}, \textsc{ad}, \textsc{da}, \textsc{da},
\textsc{bc}, \textsc{cb}, \textsc{cb},
\textsc{bd}, \textsc{db},
\textsc{cd}, \textsc{cd}, \textsc{dc}, \textsc{dc}.
\end{array}
\end{equation}
Fix now two distinct teams $x$ and $y$. Let us call path from $x$ to $y$ (in \textsf{C}) any sequence $x_1\cdots x_n$ of $n\ge 2$ distinct teams $x_1,\ldots,x_n$ such that $x_1=x$, $x_n=y$ and $x_1 x_2,\ldots, x_{n-1}x_n$ are all in \eqref{list1}.
For instance, $\textsc{ab}$ is a path  from $\textsc{a}$ to $\textsc{b}$ since $\textsc{ab}$ is in \eqref{list1}; $\textsc{acd}$ is a path from $\textsc{a}$ to
$\textsc{d}$ since $\textsc{ac}$ and $\textsc{cd}$ are in \eqref{list1}; $\textsc{bac}$ is not a path from $\textsc{b}$  to $\textsc{c}$ since $\textsc{ba}$ is not in  \eqref{list1}.
Given a path $x_1\cdots x_n$ from $x$ to $y$, we identify it with the sub-competition of \textsf {C} involving only the teams $x_1,\ldots, x_n$ and where only the matches $x_1 x_2,\ldots, x_{n-1}x_n$ were played. Of course, the intuition suggests that $x$ did better than $y$ in that sub-competition. For instance, we obviously have that $\textsc{a}$ did better than $\textsc{b}$ in $\textsc{ab}$. Moreover, it is natural to state that $\textsc{a}$ did better than $\textsc{d}$ in $\textsc{acd}$ since in $\textsc{acd}$ only two matches were played, namely $\textsc{a}$ against $\textsc{c}$ and $\textsc{c}$ against $\textsc{d}$, and we know that
$\textsc{a}$ beat $\textsc{c}$ and $\textsc{c}$ beat $\textsc{d}$ so that, in some sense, $\textsc{a}$  indirectly beat $\textsc{d}$.

Denote next by $\lambda_{xy}$ the maximum length of a list of paths from $x$ to $y$ that can be built using each element in \eqref{list1} at most once. Note that the same path can appear more than once in a list. In order to clarify the definition of such a number, let us consider $\textsc{a}$ and $\textsc{b}$ and compute $\lambda_{\textsc{ab}}$. First of all, note that
\[
\textsc{ab}, \textsc{adcb}, \textsc{adcb},  \textsc{acdb},
\]
are four paths from $\textsc{a}$ to $\textsc{b}$  in \textsf {C} which are built using the following matches in \eqref{list1}
\[
\textsc{ab}, \textsc{ad}, \textsc{dc}, \textsc{cb}, \textsc{ad}, \textsc{dc}, \textsc{cb}, \textsc{ac}, \textsc{cd}, \textsc{db}.
\]
The matches in \eqref{list1} which have not been used yet are
\[
\textsc{ca}, \textsc{ca}, \textsc{da}, \textsc{da}, \textsc{bc}, \textsc{bd}, \textsc{cd},
\]
and it is clear that no further path from $\textsc{a}$ to $\textsc{b}$  can be built using them. Of course, matches in \eqref{list1} might be arranged to get a different list of paths from $\textsc{a}$ to $\textsc{b}$. For instance, we have that
\[
\textsc{ab}, \textsc{acb}, \textsc{adcb}, \textsc{adb},
\]
is a different family of four paths from $\textsc{a}$ to $\textsc{b}$  which are built using the following matches in \eqref{list1}
\[
\textsc{ab}, \textsc{ac}, \textsc{cb}, \textsc{ad},  \textsc{dc},  \textsc{cb}, \textsc{ad}, \textsc{db}.
\]
Also in this case, no further path from $\textsc{a}$ to $\textsc{b}$ can be built using the matches left out.
Observe now that it is not possible to find more than four paths from $\textsc{a}$ to $\textsc{b}$. Indeed,  in \eqref{list1} the number of matches of the type $x \textsc{b}$ where $x\in\{\textsc{a},\textsc{c},\textsc{d}\}$ is four and any path from $\textsc{a}$ to $\textsc{b}$ has to involve exactly one match of that type. As a consequence, we get $\lambda_{\textsc{ab}}=4$.

As a further example, let us now compute $\lambda_{\textsc{ba}}$.
First, observe that  $\textsc{bca}$ and $\textsc{bda}$
are two paths from $\textsc{b}$ to $\textsc{a}$  which are built using the matches $\textsc{bc}$, $\textsc{ca}$, $\textsc{bd}$ and $\textsc{da}$ in \eqref{list1}.
Moreover, we cannot build more than two paths from $\textsc{b}$ to $\textsc{a}$ since the number of matches of the type $\textsc{b}x$ where $x\in\{\textsc{a},\textsc{c},\textsc{d}\}$ is two and any path from $\textsc{b}$ to $\textsc{a}$ must involve exactly one match of that type. Thus, we get $\lambda_{\textsc{ba}}=2$.

Using similar strategies, one can easily compute, for every pair of distinct teams $x$ and $y$, the number $\lambda_{xy}$. Such computations give
\begin{equation*}
\begin{array}{l}
\lambda_{\textsc{ab}}=4,\;\lambda_{\textsc{ba}}=2,\;\lambda_{\textsc{ac}}=4,\; \lambda_{\textsc{ca}}=4,\; \lambda_{\textsc{ad}}=4,\;
\lambda_{\textsc{da}}=4,\; \\
\lambda_{\textsc{bc}}=2,\; \lambda_{\textsc{cb}}=4,\; \lambda_{\textsc{bd}}=2,\; \lambda_{\textsc{db}}=4,\;\lambda_{\textsc{cd}}=5,\;
 \lambda_{\textsc{dc}}=4.
\end{array}
\end{equation*}
Given two distinct teams $x$ and $y$,
we interpret $\lambda_{xy}$ as the number of times the team $x$ directly or indirectly beat the team $y$ in the competition $\mathsf{C}$\footnote{A similar interpretation appears in Patel (2015).}.
As a consequence, for any pair of distinct teams $x$ and $y$, we interpret the inequality $\lambda_{xy}\ge \lambda_{yx}$ as the fact that $x$ did at least as good as $y$ in $\mathsf{C}$. In that case, we write $x\succeq_\textsf {C} y$. Of course, since we can assume that each team did at least as good as itself in $\mathsf{C}$, we also set $x\succeq_{\mathsf{C}} x$ for all
$x\in \{\textsc{a},\textsc{b},\textsc{c},\textsc{d}\}$. Thus, we get a relation on the set of teams which is fully described as follows
\begin{equation}\label{c-intro}
\begin{array}{l}
\textsc{a}\succeq_{\textsf {C}} \textsc{a},\;
\textsc{b}\succeq_{\textsf {C}} \textsc{b},\;
\textsc{c}\succeq_{\textsf {C}} \textsc{c},\;
\textsc{d}\succeq_{\textsf {C}} \textsc{d},\;\\
\textsc{a}\succeq_{\textsf {C}} \textsc{b},\; \textsc{a}\succeq_{\textsf {C}}\textsc{c}, \; \textsc{c}\succeq_{\textsf {C}} \textsc{a}, \;
\textsc{a}\succeq_{\textsf {C}} \textsc{d},\\
\textsc{d}\succeq_{\textsf {C}}\textsc{a},\;\textsc{c}\succeq_{\textsf {C}} \textsc{b}, \; \textsc{d}\succeq_{\textsf {C}} \textsc{b}, \;
\textsc{c}\succeq_{\textsf {C}} \textsc{d}.
\end{array}
\end{equation}
We call  $\succeq_{\textsf{C}}$ the flow relation associated with $\mathsf{C}$. Because of its definition, we interpret its maxima, namely $\textsc{a}$ and $\textsc{c}$, as the winners of the competition $\mathsf{C}$ and its linear refinements, namely $\textsc{a}\succeq_{\textsf {C}}\textsc{c}\succeq_{\textsf {C}}\textsc{d}\succeq_{\textsf {C}}\textsc{b}$,
$\textsc{c}\succeq_{\textsf {C}}\textsc{a}\succeq_{\textsf {C}}\textsc{d}\succeq_{\textsf {C}}\textsc{b}$ and $\textsc{c}\succeq_{\textsf {C}}\textsc{d}\succeq_{\textsf {C}}\textsc{a}\succeq_{\textsf {C}}\textsc{b}$ as the rankings of the teams which are consistent with the competition $\mathsf{C}$.
We stress that $\succeq_{\mathsf{C}}$ is complete but it is not transitive because, for instance, we have that $\textsc{d}\succeq_{\textsf {C}} \textsc{a}$
and $\textsc{a}\succeq_{\mathsf {C}} \textsc{c}$ but $\textsc{d}\not\succeq_{\textsf {C}} \textsc{c}$.

Observe also that, as it is immediately checked, the flow relation associated with the competition described by \eqref{table1} is a linear order having $\textsc{a}$ ranked first, $\textsc{b}$ ranked second and $\textsc{c}$ ranked third. In particular, the unique winner is $\textsc{a}$ and the unique admissible ranking is the flow relation itself, as one would desire.

It is also implicit in its definition that the flow network method takes into account the quality of the opponents. Indeed, if in a competition a team $\textsc{a}$ won against teams of good quality, namely which won many matches, then there exist many paths from $\textsc{a}$ to the other teams.
That potentially makes $\textsc{a}$ behave better than other teams in the flow relation.

\vspace{3mm}

The flow network method owes its name to the fact that networks and flows are the basic concepts underlying its definition. In order to explain this fact, first note that there is a natural bijection between the set of competitions and the set of networks. Indeed, any competition can be identified with the network whose vertices are the teams involved in the competition and where, for every pair of distinct teams, the capacity of the arc from a team to another one is the number of times the first team beat the second one.
Similarly, any network can be thought as a competition among its vertices where, for every arc, its capacity represents how many times its start vertex beat its end vertex\footnote{Due to the identification between competitions and networks, we are going to freely use the terminology of competitions for networks too.}. Moreover, given a  network,  it can be proved that, for every pair of distinct  vertices $x$ and $y$, the number
$\lambda_{xy}$ previously described equals the so-called maximum flow value from $x$ to $y$.
In the paper, after having introduced in Section \ref{sec-network} suitable notation for networks and some crucial results about them, we give in Section \ref{def-flow} the formal definition of the flow network method. Then, in order to assess the method, in Section \ref{flow-network-method} we deepen the analysis  carried on by
Belkin and Gvozdik (1989). We first show that the outcomes of the flow network method are always complete and quasi-transitive relations (Theorem \ref{F-qt}),
even though not necessarily transitive (Proposition \ref{non-o}). We show then that the method may lead to any possible complete and quasi-transitive relation
(Proposition \ref{main-qt}) and that it is neutral, homogeneous, efficient, monotonic, reversal symmetric and symmetric (Propositions \ref{neut-F}, \ref{homo-net}, \ref{Eff-F}, \ref{Mon1-flow-rule}, \ref{Reversal}, \ref{flat}). Remarkably, we also obtain a characterization of complete and quasi-transitive relations (Theorem \ref{co-qt}).

Networks are actually used to model a variety of situations, not necessarily related to sport competitions, where suitable information about pairwise comparisons among different alternatives is known\footnote{Note that in many applications networks capacities are allowed to be nonnegative real numbers and networks are identified with their adjacency matrix.} and a ranking of the alternatives is needed. Thus, it is certainly possible to find in the literature lots of network methods, that is, procedures to associate with any network a binary relation (usually complete and transitive) on the set of teams describing when a team is at least as good as another one. Those methods have been designed in response to very different needs so that the rationale behind their definition, as well as the properties they fulfil, strongly depend on the specific framework they are supposed to manage. Langville and Meyer (2012) present an overview of different contexts and settings (like social choice, voting,  web search engines, psychology and statistics) where ranking alternatives is important and describe lots of ranking procedures that can be formalized as network methods.

Among the simplest network methods there are surely
the Borda method, which associates with each network the relation which compares teams on the basis of the whole number of wins;
the dual Borda method, which compares teams on the basis of the whole number of losses;
the Copeland method\footnote{The Copeland method is sometimes called net flow method (Bouyssou, 1992).}, which compares teams on the basis of the difference between the whole number of wins and the whole number of losses;
the minimax method which compares teams on the basis of the maximum number of losses against each other team;
the maxmin method which compares teams on the basis of the minimum number of wins against each other team.
Gonzal\'ez-D\'iaz et al. (2014) consider several network methods originating from statistical and social choice approaches and compare them by studying how those methods perform with respect to a specific set of properties\footnote{We emphasize that some of the properties considered by Gonzal\'ez-D\'iaz et al. (2014) are satisfied by the flow network method as described in course of the paper.}.
Vaziri et al. (2017) focus on some popular sports network methods and discuss whether they satisfy suitable properties that a fair and comprehensive ranking method should
meet\footnote{The flow network method fulfils the three main properties stated in that paper. Property I (opponent strength) has been already discussed; Property II (incentive to win) is the content of Proposition \ref{Mon1-flow-rule}; Property III (sequence of matches) follows from the very definition of the method.}.
Several authors deepen the analysis of the properties of networks methods providing an axiomatic characterization of some of them.
Bouyssou (1992), Palacios-Huerta and Volij (2004), Slutzki and Volij (2005), van den Brink and Gilles (2009) and Kitti (2016) respectively
show that the Copeland, the normalized invariant, the fair-bets, the Borda and the non-normalized invariant network methods can be characterized by suitable sets of axioms\footnote{Note that the (normalized) invariant and the fair-bets methods are also known as the (normalized) long path and the Markov methods, respectively.}.
Csat\'o (2017) proves instead an impossibility result showing the incompatibility of the self-consistency and the order preservation principles.
Further, Laslier (1997, Chapters 3 and 10) presents an extensive survey of various scoring methods defined on the set of tournaments\footnote{A tournaments is a complete and asymmetric digraph. Tournaments can be identified with networks whose capacities are 0 or 1 and such that the sum of the capacities of any pair of opposite arcs is 1. They are used to represent round robin-competitions.} and on the set of balanced networks\footnote{A balanced network is a network whose sum of the capacities of any pair of opposite arcs is constant. Balanced networks represent competitions where any pair of teams confront each other the same number of time.}.

All the above mentioned methods, as the large majority of methods available in the literature, are based on suitable definitions of scores for teams so that they
determine a complete and transitive relation on the set of teams. An exception is the network method which can be deduced from the paper by Schulze (2011) and which is described and studied in Section \ref{schulze-sec}. In fact, as the flow network method, such a network method has the peculiar property to associate with any network a complete and quasi-transitive relation on its vertices which is not in general transitive. As a consequence, the flow network method is different from all those methods leading to complete and transitive relations and, as proved in Section \ref{schulze-sec}, it is also different from the Schulze network method. However, we show that the flow and the Borda network methods agree on the class of balanced networks (Proposition \ref{balan-ut})\footnote{The equality between the flow and the Borda network methods on balanced networks is stated, without proof, in  Belkin and Gvozdik (1989, Lemma 1).}. This is a remarkable fact as it shows that the flow network method can be seen as an extension of the well-known method of ``counting the wins'' outside the ordinary framework where it is generally accepted. But we discovered other classes of networks on which those methods agree (Proposition \ref{cor-new} and Theorem \ref{OI}) and that shed light on the significance of the Borda network method outside the traditional context of the balanced networks.

As a final remark, we recall that the computation of flows in a network can be performed by well-known algorithms, based on the Ford and Fulkerson augmenting paths algorithm, which run in polynomial time with respect to the number of vertices. As a consequence, the outcome of the flow network method applied to a competition can be computed in a polynomial time with respect to the number of teams.
That fact is certainly encouraging with a view to concretely applying the method.

\vspace{3mm}

After having carefully studied the properties of the flow network method,
we focus on the procedure which associates with any network the family of the linear refinements of the flow relation\footnote{Complete and quasi-acyclic relations always admit linear refinements (Proposition \ref{lin-ref}).}.
Such a procedure, formally defined in Section \ref{def-flow}, is called {\it flow network rule} and was first defined in Belkin and Gvozdik (1989).  On the basis of the discussion made about the flow network method, we have that it is in general different from the network rules built by the known network methods by taking, for every network, the linear refinements of the corresponding relation. It also generally differs from  the Kemeny and the ranked pairs rules\footnote{That fact can be easily checked considering the network in Example \ref{flow-schulze-1}.}. In Section \ref{fnr} we prove several properties of the flow network rule.

We focus then on the problem of selecting a subset of a given size from the set of vertices of a network. Mainly in the framework of voting theory
the problem of selecting committees has been receiving more and more attention in the last years (for a survey, see Elkind et al., 2017). Many $k$-multiwinner network solutions, that is procedures which associate with any network a family of subsets of size $k$, can be built by each complete and quasi-transitive (in particular transitive) network method by taking, for every network, the $k$-maximum sets of the corresponding relation\footnote{Given a relation $R$ on a nonempty finite set $V$, a $k$-maximum set of $R$ is a subset $W$ of size $k$ of $V$ having the property that, for every $x\in W$ and $y\not\in W$, $(x,y)\in R$. Complete and quasi-acyclic relations always admit $k$-maximum sets (Proposition \ref{ckR}).}. A careful analysis of such solutions as well as a comparison among them represent, in our opinion, an interesting research line.
However, at the best of our knowledge, no specific contribution on this theme is available in the literature when $k\ge 2$. The situation is different for the case $k=1$.
Aziz et al. (2015) propose a canonical way to extend to the whole set of networks any method which associates with each balanced network a subset of its vertices.
There are also lots of contributions about procedures to select the vertices of
special types of networks like tournaments, weak tournaments and partial tournaments\footnote{Weak tournaments and partial tournaments can be naturally identified with networks whose arcs always have capacities that are  0 or 1 and such that the sum of the capacities of each pair of opposite arcs is at least 1 and at most 1 respectively.}.
We refer to Laslier (1997) for a survey on tournament solutions and to Peris and Subiza (1999),
Aziz et al. (2015) and  Brandt et al. (2016) for further interesting contributions on the extension of classical tournament solutions to the settings of weak and partial tournaments.
We finally note that De Donder et al. (2000) also propose some results about balanced networks, while
Dutta and Laslier (1999) consider a different framework somehow connected with networks, as discussed in Section \ref{dutta-sec}.
In Section \ref{def-flow} we consider,
for every positive integer $k$, the procedure which associates with any network the $k$-maximum sets of its flow relation. Such a procedure, called {\it flow $k$-multiwinner network solution}, is a new object and is in general different from $k$-multiwinner network solutions originating from known network methods and network rules.
In Sections \ref{fns}, thanks to the previous analysis of the flow network method as well as some preliminary propositions about quasi-acyclic and quasi-transitive relations proved in Section \ref{relation-2}, we show that the flow $k$-multiwinner network solution fulfils a lot of desirable properties.

\section{Networks}\label{sec-network}
We assume $0\not\in \mathbb{N}$ and we set $\mathbb{N}_0=\mathbb{N}\cup\{0\}$.
Moreover, given  $k\ge 1$, we denote by $\ldbrack k \rdbrack$ the set $\{1,\ldots,k\}$.
From now on, $V$ is a fixed finite set of size $|V|=n\ge 2$, $\Delta=\{(x,x)\in V^2:x\in V\}$ and $A=V^2\setminus \Delta$.

A  {\it network} on $V$ is a triple $N=(V,A,c)$, where $c$ is a function from $A$ to $\mathbb{N}_0$.
We say that $V$ is the set of vertices of $N$, $A$ is the set of arcs of $N$ and $c$ is the capacity associated with $N$.
Note that the pair $(V,A)$ is a complete  digraph (without loops) on the set of vertices $V$. The set of networks on $V$ is denoted by $\mathcal{N}$.

Networks can be used to mathematically represent competitions. Assume to have a set of teams which played a certain number of matches among each other and to know, for every team, the number of matches it won against any other team. Then we can represent that competition by a network by defining $V$ as the set of teams and, for every $(x,y)\in A$, defining $c(x,y)$ as the number of matches in which $x$ beat $y$.
For instance, the competition $\mathsf{C}$ described in \eqref{table2} can be represented by the network $N_{\mathsf{C}}=(V,A,c)$ where
\begin{equation}\label{net-comp}
\begin{array}{l}
V=\{\textsc{a},\textsc{b},\textsc{c},\textsc{d}\};\\
\vspace{-2mm}\\
A=\{(x,y)\in V^2: x,y\in\{\textsc{a},\textsc{b},\textsc{c},\textsc{d}\} \mbox{ and } x\neq y\};\\
\vspace{-2mm}\\
c:V\to\mathbb{N}_0 \mbox{ is defined by }\\
c(\textsc{a},\textsc{b})=1,\; c(\textsc{b},\textsc{a})=0,\;c(\textsc{a},\textsc{c})=1,\; c(\textsc{c},\textsc{a})=2,\\
c(\textsc{a},\textsc{d})=2,\; c(\textsc{d},\textsc{a})=2,\;  c(\textsc{b},\textsc{c})=1,\;c(\textsc{c},\textsc{b})=2,\\
c(\textsc{b},\textsc{d})=1,\; c(\textsc{d},\textsc{b})=1,\;c(\textsc{c},\textsc{d})=2,\; c(\textsc{d},\textsc{c})=2.\\
\end{array}
\end{equation}

Let $N=(V,A,c)\in\mathcal{N}$. For every $x\in V$, the outdegree and the indegree of $x$ in $N$ are respectively defined by
\[
o(x)=\sum_{y\in V\setminus\{x\}}c(x,y),
\qquad
i(x)=\sum_{y\in V\setminus\{x\}}c(y,x).
\]
Note that $o(x)$ and $i(x)$ depend on $N$ but the notation $o(x)$ and $i(x)$ does not refer to the symbol $N$. When the reference to the network is important we adopt the writings $o^N(x)$ and $i^N(x).$ We use a similar style for any  other symbol defined for networks.

We call reversal of $N$ the network $N^r=(V,A,c^r)\in\mathcal{N}$ where, for every $(x,y)\in A$, $c^r(x,y)=c(y,x)$. Given a bijection $\psi:V\rightarrow V$, we set $N^{\psi}=(V,A,c^{\psi}) \in\mathcal{N}$ where $c^{\psi}$ is defined, for every $(x,y)\in A$, by
$c^{\psi}(x,y)=c(\psi^{-1}(x), \psi^{-1}(y))$. Moreover, given $\alpha\in\mathbb{N}$, we set $\alpha N=(V,A,\alpha c)\in \mathcal{N}$.

Given $k\in \mathbb{N}_0$, we say that $N$ is a {\it $k$-balanced network} if, for every $(x,y)\in A$, $c(x,y)+c(y,x)=k$. In that case $k$ is called the balance of $N$.
We denote the set of $k$-balanced networks on $V$ by $\mathcal{B}_k$ and we call $\mathcal{B}=\bigcup_{k\in \mathbb{N}_0}\mathcal{B}_k$  the set of balanced networks on $V$. Note that modelling competitions where each team confronts with any other team the same number of times leads to balanced networks.
In particular,  networks in $\mathcal{B}_1$ are the right tool to model round-robin competitions, namely those competitions where each team confronts with any other team exactly once. Those special competitions are largely studied in the literature and are generally modelled via asymmetric and complete digraphs, the so-called tournaments. Of course,  there is a natural bijection among the set of tournaments on $V$ and the set $\mathcal{B}_1$.

Let $N=(V,A,c)\in \mathcal{N}$ and $s,t\in V$ with $s\neq t$. A {\it flow} from $s$ to $t$ in $N$ is a function $f:A\to \mathbb{N}_0$ such that, for every $(x,y)\in A$, $f(x,y)\le c(x,y)$ and, for every $x\in V\setminus\{s,t\}$,
\begin{equation}\label{conservation}
\sum_{y\in V\setminus\{x\}}f(x,y)=\sum_{y\in V\setminus\{x\}}f(y,x).
\end{equation}
Note that the function $f_0:A\to \mathbb{N}_0$ defined by $f_0(x,y)=0$ for all $(x,y)\in A$ is a flow. It is called the null flow.
The set of flows from $s$ to $t$ in $N$ is finite and it is denoted by $\mathcal{F}(N,s,t)$.
Given $f\in \mathcal{F}(N,s,t)$,  the value of $f$ is the integer
\[
\varphi(f)=\sum_{y\in V\setminus\{s\}}f(s,y)-\sum_{y\in V\setminus\{s\}}f(y,s).
\]
The number
\[
\varphi_{st}=\max_{f\in \mathcal{F}(N, s,t)} \varphi(f),
\]
which is well defined and belongs to $\mathbb{N}_0$, is called the {\it maximum flow value} from $s$ to $t$ in $N$.
If $f\in \mathcal{F}(N,s,t)$ is such that $\varphi(f)=\varphi_{st}$, then $f$ is called a maximum flow from $s$ to $t$ in $N$.
Let $S\subseteq V$ and denote by $S^c$ its complement in $V$. If $\varnothing \neq S\neq V$, the number
\[
c(S)=\sum_{x\in S,\, y\in S^c} c(x,y)
\]
is called the capacity of $S$ in $N$. $S$ is called a {\it cut} from $s$ to $t$ (or for $\varphi_{st}$) in $N$ if $s\in S$ and $t\in S^c$.

The set of cuts from $s$ to $t$ in $N$ is nonempty and finite and it is denoted by $\mathcal{C}(N,s,t)$.
It is well-known that, for every $f\in \mathcal{F}(N,s,t)$ and $S\in \mathcal{C}(N,s,t)$, we have
\begin{equation}\label{duality}
\varphi(f)\le c(S).
\end{equation}
In particular, if $S_1=\{s\}$ and $S_2=V\setminus\{t\}$, then $S_1,S_2\in \mathcal{C}(N,s,t)$, $c(S_1)=o(s)$ and $c(S_2)=i(t).$ Thus, by
\eqref{duality}, we get $\varphi_{st}\le \min\{o(s),i(t)\}.$

The fundamental result about flows in networks, namely the famous Maxflow-Mincut Theorem, states that
\[
\varphi_{st}=\min_{S\in \mathcal{C}(N,s,t)} c(S).
\]
In particular, there always exists $S\in \mathcal{C}(N,s,t)$ such that $\varphi_{st}=c(S)$. Such a cut is called a {\it minimum cut} from $s$ to $t$ in $N$ (or for $\varphi_{st}$).
The next result will turn out to be fundamental for our work. It is due to Gomory and Hu (1961).

\begin{proposition}\label{gomory}
Let $N\in \mathcal{N}$ and $x,y, z\in V$ be distinct. Then
$\varphi_{xz}\ge\min\{\varphi_{xy},\varphi_{yz}\}.$
\end{proposition}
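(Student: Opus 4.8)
The natural approach is to exploit the Maxflow--Mincut Theorem to reduce the inequality about flow values to an elementary observation about a single cut. The point is that a minimum cut separating $x$ from $z$ is automatically a (not necessarily minimum) cut separating $x$ from one of the two ``intermediate'' pairs, namely from $x$ to $y$ or from $y$ to $z$, depending on which side of the cut the vertex $y$ lies. Combined with the weak duality inequality \eqref{duality}, this immediately yields the bound.

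In detail, I would first invoke the Maxflow--Mincut Theorem to fix a minimum cut $S\in\mathcal{C}(N,x,z)$, so that $x\in S$, $z\in S^c$ and $c(S)=\varphi_{xz}$. Then I would distinguish the two possible positions of $y$. If $y\in S^c$, then, since $x\in S$ and $y\in S^c$, the set $S$ belongs to $\mathcal{C}(N,x,y)$; hence, taking any maximum flow $f\in\mathcal{F}(N,x,y)$ and applying \eqref{duality}, we get $\varphi_{xy}=\varphi(f)\le c(S)=\varphi_{xz}$, so $\varphi_{xz}\ge\varphi_{xy}\ge\min\{\varphi_{xy},\varphi_{yz}\}$. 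If instead $y\in S$, then, since $y\in S$ and $z\in S^c$, the set $S$ belongs to $\mathcal{C}(N,y,z)$; the same argument with \eqref{duality} gives $\varphi_{yz}\le c(S)=\varphi_{xz}$, so $\varphi_{xz}\ge\varphi_{yz}\ge\min\{\varphi_{xy},\varphi_{yz}\}$. In both cases the desired inequality holds, completing the proof.

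There is essentially no hard step here: the only ingredient beyond definitions is the existence of a minimum cut attaining $\varphi_{xz}$, which is exactly the content of the Maxflow--Mincut Theorem recalled just above the statement, together with the weak duality bound \eqref{duality}. If one wanted to avoid citing Maxflow--Mincut, the argument would become substantially more delicate, since one would then have to build an explicit flow from $x$ to $z$ of value $\ge\min\{\varphi_{xy},\varphi_{yz}\}$ by routing flow ``through'' $y$, which requires care about flow conservation at $y$; so the main simplification is precisely the decision to argue on the cut side rather than the flow side of the duality.
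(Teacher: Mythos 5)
Your proof is correct and follows essentially the same route as the paper's: take a minimum cut $S$ for $\varphi_{xz}$, observe that it is a cut from $x$ to $y$ or from $y$ to $z$ according to the side of $S$ containing $y$, and conclude via the weak duality bound \eqref{duality}. No differences worth noting.
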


\begin{proof}Let $S$ be a minimum cut from $x$ to $z$ in $N$. Then we have $x\in S$, $z\in S^c$ and $\varphi_{xz}=c(S)$.
 If $y\in S$, then $S$ is a cut from $y$ to $z$ and thus, by \eqref{duality}, $\varphi_{yz}\leq c(S)$.
If instead $y\in S^c$ then $S$ is a cut from $x$ to $y$ and thus, by \eqref{duality}, $\varphi_{xy}\leq c(S)$. In both cases we get $\min\{\varphi_{xy},\varphi_{yz}\}\leq \varphi_{xz}.$ \end{proof}

We call a sequence of $m\ge 2$ distinct vertices $x_1,\ldots, x_m$ such that $x_1=s$, $x_m=t$ and, for every $i\in\ldbrack m-1 \rdbrack$, $c(x_i,x_{i+1})\ge 1,$ a path from $s$ to $t$ in $N$ and we denote it by
$x_1\cdots x_m$. The set of paths from $s$ to $t$ in $N$ is denoted by $\Gamma(N,s,t)$. Given $\gamma=x_1\cdots x_m\in \Gamma (N,s,t)$, we set $V(\gamma)=\{x_1,\ldots,x_m\}$ and $A(\gamma)=\{(x_1,x_2), \ldots, (x_{m-1},x_m)\}$.
Fixed $k\in \mathbb{N}$, we define an arc-disjoint $k$-sequence in $\Gamma(N,s,t)$ as a sequence $(\gamma_{j})_{j=1}^k$ of $k$ (not necessarily distinct) paths in $\Gamma(N,s,t)$
such that, for every $a\in A$,
$|\{j\in\ldbrack k \rdbrack: a\in A(\gamma_j)\}|\le c(a).$
We denote the set of arc-disjoint $k$-sequences in $\Gamma(N,s,t)$ by $\Gamma_k(N,s,t)$. Of course, if $\Gamma(N,s,t)=\varnothing$, then $\Gamma_k(N,s,t)=\varnothing$ for all $k\in \mathbb{N}$.
We define \[\lambda_{st}=\left\{
\begin{array}{ll}
\max\{k\in \mathbb{N}:  \Gamma_k(N,s,t)\neq\varnothing\}&\mbox{ if } \Gamma(N,s,t)\neq\varnothing\\
\vspace{-2mm}\\
0 &\mbox{ if } \Gamma(N,s,t)=\varnothing.\\
\end{array}
\right.
\]
From Bang-Jensen and Gutin (2008, Lemma 7.1.5), we have the equality
\begin{equation}\label{JG}
\varphi_{st}=\lambda_{st},
\end{equation}
which  provides a useful interpretation of the maximum flow value in terms of arc-disjoint paths. Note that, due to \eqref{JG}, the maximum flow value from $x$ to $y$ in the network $N_{\mathsf{C}}$  described in \eqref{net-comp} has been computed in the introduction for all distinct $x,y\in V$.

\section{Relations}

A relation $R$ on $V$ is a subset of $V^2$. It is customary to use relations on $V$ to represent individual or social preferences on $V$ by identifying, for every $x,y\in V$, the membership relation $(x,y)\in R$ with the statement ``$x$ is at least as good as $y$''.
We denote the set of relations on $V$ by $\mathbf{R}$.
	
Let $R\in\mathbf{R}$. The strict relation associated with $R$ is $S(R)=\{(x,y)\in V^2:(x,y)\in R,(y,x)\notin R\}$;
the reversal relation of $R$  is $R^r=\{(x,y)\in V^2:(y,x)\in R\}.$
Given $x,y\in V$, we usually write $x\succeq_R y$ instead of $(x,y)\in R$;
$x\succ_R y$ instead of $(x,y)\in S(R)$; $x\sim_R y$  instead of $(x,y)\in R$ and $(y,x)\in R$.
Recall that, by definition, $R$ is quasi-transitive if $S(R)$ is transitive;
acyclic if, for every sequence $x_1,\ldots,x_m$ of $m\ge 2$ distinct elements of $V$ such that $x_i\succeq_R x_{i+1}$ for all $i\in\ldbrack m-1 \rdbrack$, we have that $x_m \not\succeq_R x_1$; quasi-acyclic if $S(R)$ is acyclic. Given a function $u:V\to \mathbb{R}$, the relation induced by $u$ is $R(u)=\{(x,y)\in V^2: u(x)\geq u(y)\}$.

We denote the set of complete and quasi-acyclic relations on $V$ by $\mathbf{A}$;
the set of complete and quasi-transitive relations on $V$ by $\mathbf{T}$;
the set of complete and transitive relations on $V$ by $\mathbf{O}$;
the set of complete, quasi-transitive and antisymmetric relations on $V$ by $\mathbf{L}$.
The elements of $\mathbf{O}$ are called orders; the ones of $\mathbf{L}$ are called linear orders. Of course, we have $\mathbf{L}\subseteq \mathbf{O}\subseteq \mathbf{T}\subseteq \mathbf{A}$.

Given $R\in\mathbf{R}$, we also consider the sets
\[
\mathbf{L}_\diamond(R)=\{L\in\mathbf{L}: L\subseteq R\},\quad \mathbf{L}^\diamond(R)=\{L\in\mathbf{L}: L\supseteq R\},
\]
that is, the set of linear refinements and the set of linear extensions of $R$.
Both those sets may be empty and if $L\in \mathbf{L}$, then $\mathbf{L}_\diamond(L)=\mathbf{L}^\diamond(L)=\{L\}$.

Let $\mathbb{P}(V)$ be the set of the subsets of $V$ and, for $k\in \mathbb{N}$, let $\mathbb{P}_k(V)$ be the set of the subsets of $V$ of size $k$. Given $R\in\mathbf{R}$ and $W\in \mathbb{P}_k(V)$, we say that $W$ is a $k$-{\it maximum set}  for $R$ if, for every $x\in W$ and $y\in V\setminus W$,  we have $x\succeq_R y$.
We define then the set
\[
C_k(R)=\{W\in\mathbb{P}_k(V): W \mbox{ is a $k$-maximum set for }R\}
\]
Clearly, $C_n(R)=\{V\}$ and $C_k(R)=\varnothing$ for all $k>|V|=n$. Thus, the study of the set $C_k(R)$ is meaningful only if $k\in \ldbrack n-1 \rdbrack$.
Note also that if $R$ is reflexive, then $\max(R)=\{x\in V:\{x\}\in C_1(R)\}$, where $\max(R)$ denotes the set of maxima for $R$.
Finally, if $R\in\mathbf{L}$ and $k\in \ldbrack n-1 \rdbrack $, then $|C_k(R)|=1$.

\section{Definition of the flow network method}\label{def-flow}

We call {\it network method} any function from $\mathcal{N}$  to $\mathbf{R}$.
A network method can be seen as a procedure to determine, for every competition and every ordered pair of teams, whether the first team did at least as good as the second one. The {\it flow network method} $\mathfrak{F}$ is defined, for every $N\in\mathcal{N}$,
by
\[
\mathfrak{F}(N)= \{(x,y)\in A: \varphi^N_{xy}\ge \varphi^N_{yx}\}\cup \Delta.
\]
The relation $\mathfrak{F}(N)$ on $V$ is called the {\it flow relation} associated with $N$.
This network method was first proposed by Gvozdik (1987), who focuses only on balanced networks, and later extended and studied in Belkin and Gvozdik (1989).
Note that, for the network $N_{\textsf{C}}$ defined in \eqref{net-comp},  $\mathfrak{F}(N_{\textsf{C}})$ coincides with the relation $\succeq_{\textsf{C}}$ in \eqref{c-intro}.

We call next {\it  network rule} any correspondence from  $\mathcal{N}$ to $\mathbf{L}$. A network rule can be seen as a procedure to determine, for every competition, the set of all the possible rankings of the teams which are consistent with the competition.
The {\it flow network rule} $\mathfrak{F}_\diamond$ is defined, for every $N\in\mathcal{N}$,
by
\[
\mathfrak{F}_{\diamond}(N)=\mathbf{L}_\diamond(\mathfrak{F}(N)).
\]
Such a network rule is the main object studied by Belkin and Gvozdik (1989).

Given now $k\in \ldbrack n-1 \rdbrack $, we finally call {\it $k$-multiwinner network solution} any correspondence from  $\mathcal{N}$ to $\mathbb{P}_k(V)$.
A $k$-multiwinner network solution can be seen as a procedure to determine, for every competition,
 all the possible choices for the set of the best $k$ teams of the competition.
The {\it flow $k$-multiwinner network solution} $\mathfrak{F}_k$ is defined, for every $N\in\mathcal{N}$,
by
\[
\mathfrak{F}_k(N)=C_k(\mathfrak{F}(N)).
\]

Our purpose is to study the properties of the three objects above defined. In order to discuss the properties of the flow network method we just need the crucial fact about complete and quasi-transitive relations which is proved in the next section. For studying the flow network rule and the flow $k$-multiwinner network solution further results from the theory of relations are necessary. They are collected in Section \ref{relation-2}.

Later on, given $N\in\mathcal{N}$ and $x,y\in V$, we write $x\succeq_{N} y$ instead of $x\succeq_{\mathfrak{F}(N)}y$; $x\succ_{N} y$ instead of  $x\succ_{\mathfrak{F}(N)}y$; $x\sim_N y$  instead of $x\sim_{\mathfrak{F}(N)} y$. Thus, $x\succeq_{N} y$ means $x=y$ or $\varphi^N_{x,y}\geq\varphi^N_{yx};$  $x\succ_{N} y$ means $x\neq y$ and $\varphi^N_{xy}>\varphi^N_{yx};$ $x\sim_{N} y$ means $x=y$ or $\varphi^N_{xy}=\varphi^N_{yx}.$

\section{Building complete and quasi-transitive relations}

This section is devoted to the description of a general procedure to construct complete and quasi-transitive relations on $V$.
For every $\vartheta:A\to \mathbb{R}$, define $R(\vartheta)\in \mathbf{R}$ by
\begin{equation*}
R(\vartheta)= \{(x,y)\in A: \vartheta(x,y)\ge \vartheta(y,x)\}\cup \Delta.
\end{equation*}
Given $\vartheta:A\to \mathbb{R}$, we say that $\vartheta$ satisfies the {\it Gomory-Hu condition} if, for every $x,y,z\in V$ distinct, we have
\begin{equation}\label{gomory-eq}
\vartheta(x,z)\ge \min\{\vartheta(x,y),\vartheta(y,z)\}.
\end{equation}
The next proposition is based on Schulze (2011, pp. 277-278).

\begin{proposition}\label{schulze}
If $\vartheta:A\to \mathbb{R}$ satisfies the Gomory-Hu condition, then $R(\vartheta)\in\mathbf{T}$.
\end{proposition}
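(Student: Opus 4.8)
The plan is to show directly that $R(\vartheta)$ is complete and that its strict part $S(R(\vartheta))$ is transitive. Completeness is immediate: for distinct $x,y\in V$ either $\vartheta(x,y)\ge\vartheta(y,x)$ or $\vartheta(y,x)\ge\vartheta(x,y)$, so at least one of $(x,y),(y,x)$ lies in $R(\vartheta)$; together with $\Delta\subseteq R(\vartheta)$ this gives completeness. So the whole content is quasi-transitivity. First I would unravel the definition of the strict part: for distinct $x,y$, one has $x\succ_{R(\vartheta)} y$ precisely when $(x,y)\in R(\vartheta)$ and $(y,x)\notin R(\vartheta)$, i.e.\ $\vartheta(x,y)\ge\vartheta(y,x)$ and $\vartheta(y,x)<\vartheta(x,y)$, which collapses to the single strict inequality $\vartheta(x,y)>\vartheta(y,x)$.

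Now suppose $x\succ_{R(\vartheta)} y$ and $y\succ_{R(\vartheta)} z$; I must show $x\succ_{R(\vartheta)} z$. If any two of $x,y,z$ coincide the claim is trivial (e.g.\ if $x=z$ then $\vartheta(x,y)>\vartheta(y,x)$ and $\vartheta(y,x)>\vartheta(x,y)$, a contradiction, so that case is vacuous), so assume $x,y,z$ are pairwise distinct. We have $\vartheta(x,y)>\vartheta(y,x)$ and $\vartheta(y,z)>\vartheta(z,y)$. Applying the Gomory--Hu condition \eqref{gomory-eq} to the triple $(x,z,y)$ gives $\vartheta(x,y)\ge\min\{\vartheta(x,z),\vartheta(z,y)\}$, and applying it to the triple $(z,x,y)$ gives $\vartheta(z,y)\ge\min\{\vartheta(z,x),\vartheta(x,y)\}$; similarly the triple $(x,y,z)$ yields $\vartheta(x,z)\ge\min\{\vartheta(x,y),\vartheta(y,z)\}$ and the triple $(z,y,x)$ yields $\vartheta(z,x)\ge\min\{\vartheta(z,y),\vartheta(y,x)\}$. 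The goal is to combine these four inequalities with the two strict hypotheses to force $\vartheta(x,z)>\vartheta(z,x)$.

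The main obstacle is the bookkeeping of the min-inequalities: one has to argue by cases according to which term achieves each minimum and check that no case is consistent with $\vartheta(x,z)\le\vartheta(z,x)$. Concretely, I expect the clean route is to argue by contradiction: assume $\vartheta(z,x)\ge\vartheta(x,z)$. From $\vartheta(x,z)\ge\min\{\vartheta(x,y),\vartheta(y,z)\}$ and $\vartheta(z,x)\ge\vartheta(x,z)$ we get $\vartheta(z,x)\ge\min\{\vartheta(x,y),\vartheta(y,z)\}$; on the other hand $\vartheta(z,x)\le\vartheta(x,z)$ fails, so I instead push the chain the other way, using $\vartheta(z,y)\ge\min\{\vartheta(z,x),\vartheta(x,y)\}$ together with $\vartheta(x,y)>\vartheta(y,x)$ and the contradiction hypothesis to pin down $\vartheta(z,y)$ from below, then contradict $\vartheta(y,z)>\vartheta(z,y)$ via $\vartheta(x,z)\ge\min\{\vartheta(x,y),\vartheta(y,z)\}$ and the assumption $\vartheta(z,x)\ge\vartheta(x,z)$. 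Since the statement is explicitly attributed to Schulze (2011, pp.\ 277--278), the argument is known to go through; I would simply present it as the short case analysis sketched above, being careful that every invocation of \eqref{gomory-eq} uses a genuinely distinct triple and that the final contradiction uses both strict hypotheses. Once $S(R(\vartheta))$ is shown transitive, $R(\vartheta)\in\mathbf{T}$ by definition, completing the proof.
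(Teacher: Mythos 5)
Your overall frame is right and matches the paper's: completeness is immediate, the strict part of $R(\vartheta)$ reduces to the single inequality $\vartheta(x,y)>\vartheta(y,x)$, the case $x=z$ is ruled out exactly as you do, and everything hinges on extracting $\vartheta(x,z)>\vartheta(z,x)$ from the two strict hypotheses via instances of \eqref{gomory-eq}. But the concrete plan has a genuine gap: the four instances you list (for the triples $(x,z,y)$, $(z,x,y)$, $(x,y,z)$, $(z,y,x)$) are provably insufficient to force the conclusion. Take $\vartheta(x,y)=1$, $\vartheta(y,x)=0$, $\vartheta(y,z)=2$, $\vartheta(z,y)=1$, $\vartheta(x,z)=1$, $\vartheta(z,x)=1$: both strict hypotheses hold and all four of your listed inequalities hold, yet $\vartheta(x,z)=\vartheta(z,x)$. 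So no amount of bookkeeping with those four inequalities alone can ``force $\vartheta(x,z)>\vartheta(z,x)$''; what fails in this assignment is precisely the unlisted instance $\vartheta(y,x)\ge\min\{\vartheta(y,z),\vartheta(z,x)\}$ (the paper's \eqref{gomory-3}), and that is the inequality your argument needs. Correspondingly, your sketched contradiction chain closes only one case: from $\vartheta(z,x)\ge\vartheta(x,z)\ge\min\{\vartheta(x,y),\vartheta(y,z)\}$ and $\vartheta(z,y)\ge\min\{\vartheta(z,x),\vartheta(x,y)\}$ you get $\vartheta(z,y)\ge\min\{\vartheta(x,y),\vartheta(y,z)\}$, which contradicts $\vartheta(y,z)>\vartheta(z,y)$ only when that minimum is $\vartheta(y,z)$. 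In the remaining case $\vartheta(x,y)<\vartheta(y,z)$ you get no contradiction from your list; with the missing instance it closes at once, since then $\vartheta(z,x)\ge\vartheta(x,z)\ge\vartheta(x,y)$ and $\vartheta(y,z)>\vartheta(x,y)$ give $\vartheta(y,x)\ge\min\{\vartheta(y,z),\vartheta(z,x)\}\ge\vartheta(x,y)$, contradicting $\vartheta(x,y)>\vartheta(y,x)$. Deferring to Schulze (2011) for the assurance that ``the argument is known to go through'' does not fill this step.

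For comparison, the paper argues directly rather than by contradiction: it splits on whether $\vartheta(x,y)\ge\vartheta(y,z)$ and uses exactly the three instances for the triples $(x,y,z)$, $(z,x,y)$ and $(y,z,x)$ (its \eqref{gomory-1}--\eqref{gomory-3}). Once you add the instance for $(y,z,x)$ and carry out the two cases explicitly, your contradiction variant becomes a correct and essentially equivalent case analysis; as written, however, the key combinatorial step is incomplete.
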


\begin{proof}
For shortness set $R=R(\vartheta)$. The completeness of $R$ is obvious. We show that $R$ is quasi-transitive.
 Let us consider $x,y,z\in V$ and assume that $x\succ_{R} y$ and  $y\succ_{R} z$.
We must prove that  $x\succ_{R} z$, that is, $x\neq z$ and
\begin{equation}\label{equa3}
\vartheta(x,z)>\vartheta(z,x).
\end{equation}
Note that $x\succ_{R} y$ and  $y\succ_{R} z$  mean $x\neq y$, $y\neq z$,
\begin{equation}\label{equa1}
\vartheta(x,y)>\vartheta(y,x)
\end{equation}
 and
\begin{equation}\label{equa2}
\vartheta(y,z)>\vartheta(z,y).
\end{equation}
We first show that $x\neq z$. Indeed, if it were $x=z$, then we would get $\vartheta(x,y)>\vartheta(y,x)$ and
$\vartheta(y,x)>\vartheta(x,y)$, a contradiction.
By \eqref{gomory-eq}, we have
\begin{equation}\label{gomory-1}
\vartheta(x,z)\ge \min\{\vartheta(x,y),\vartheta(y,z)\}.
\end{equation}
\begin{equation}\label{gomory-2}
\vartheta(z,y)\ge \min\{\vartheta(z,x),\vartheta(x,y)\}.
\end{equation}
\begin{equation}\label{gomory-3}
\vartheta(y,x)\ge \min\{\vartheta(y,z),\vartheta(z,x)\}.
\end{equation}
Assume first that
\begin{equation}\label{eqA}
\vartheta(x,y)\ge \vartheta(y,z).
\end{equation}
Then, \eqref{gomory-1} and \eqref{equa2} give
\begin{equation}\label{eqB}
\vartheta(x,z)
>\vartheta(z,y).
\end{equation}
Moreover, \eqref{eqA} and \eqref{equa2} give $\vartheta(x,y)>\vartheta(z,y).$
Since by \eqref{gomory-2}, $\vartheta(z,y)$ must be greater or equal to one between $\vartheta(z,x)$ and $\vartheta(x,y)$, it follows that $\vartheta(z,y)\ge \vartheta(z,x)$. Using \eqref{eqB} we then get \eqref{equa3}.

Assume next that $\vartheta(x,y)<\vartheta(y,z)$. Then \eqref{gomory-1} and \eqref{equa1} give
\begin{equation}\label{eqC}
\vartheta(x,z)> \vartheta(y,x).
\end{equation}
Moreover, by \eqref{equa1}, $\vartheta(y,x)<\vartheta(y,z)$. Using \eqref{gomory-3}, we get $\vartheta(y,x)\ge \vartheta(z,x)$. Using now \eqref{eqC} we finally get \eqref{equa3}.
\end{proof}

\section{Properties of the flow network method}\label{flow-network-method}

The next result establishes the fundamental property of $\mathfrak{F}$. Such a property was proved by Belkin and Gvozdik (1989, Lemma 2). We derive it as a particular application of Proposition \ref{schulze}. Given $N\in\mathcal{N}$, let $\varphi^N:A\to \mathbb{R}$ be defined, for every $(x,y)\in A$, by $\varphi^N(x,y)=\varphi^N_{xy}$ and note that $\mathfrak{F}(N)=R(\varphi^N)$.

\begin{theorem}\label{F-qt}
Let $N \in\mathcal{N}$. Then $\mathfrak{F}(N)\in\mathbf{T}$.
\end{theorem}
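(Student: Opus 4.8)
The plan is to observe that this theorem is an immediate corollary of the two propositions already established, so essentially no new work is needed; the task is just to check that the hypotheses of Proposition~\ref{schulze} are met by the weight function $\varphi^N$. Recall from the remark immediately preceding the statement that $\mathfrak{F}(N)=R(\varphi^N)$, where $\varphi^N\colon A\to\mathbb{R}$ is given by $\varphi^N(x,y)=\varphi^N_{xy}$. Hence it suffices to prove that $\varphi^N$ satisfies the Gomory-Hu condition \eqref{gomory-eq}.

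First I would unwind the definitions: the Gomory-Hu condition for $\varphi^N$ asks that, for every triple of distinct vertices $x,y,z\in V$, one has $\varphi^N(x,z)\ge\min\{\varphi^N(x,y),\varphi^N(y,z)\}$, i.e. $\varphi^N_{xz}\ge\min\{\varphi^N_{xy},\varphi^N_{yz}\}$. This is word for word the inequality supplied by Proposition~\ref{gomory} (Gomory and Hu, 1961), applied to the network $N$ and the distinct vertices $x,y,z$. So $\varphi^N$ satisfies the Gomory-Hu condition.

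Then I would invoke Proposition~\ref{schulze} with $\vartheta=\varphi^N$: since $\varphi^N$ satisfies the Gomory-Hu condition, $R(\varphi^N)\in\mathbf{T}$, that is, $R(\varphi^N)$ is complete and quasi-transitive. Combining this with the identity $\mathfrak{F}(N)=R(\varphi^N)$ yields $\mathfrak{F}(N)\in\mathbf{T}$, which is the assertion.

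There is no real obstacle here: completeness of $\mathfrak{F}(N)$ is built into the definition (for each pair $(x,y)\in A$ at least one of $\varphi^N_{xy}\ge\varphi^N_{yx}$ or $\varphi^N_{yx}\ge\varphi^N_{xy}$ holds, and $\Delta$ is adjoined), and quasi-transitivity has been abstracted away into Proposition~\ref{schulze}; the only thing worth stating carefully is the bookkeeping that $\varphi^N$ is indeed a function $A\to\mathbb{R}$ to which Proposition~\ref{schulze} applies, and that Proposition~\ref{gomory} is exactly the Gomory-Hu condition for it. If one wanted the proof to be self-contained one could instead reproduce the chain of inequalities in the proof of Proposition~\ref{schulze} with $\varphi^N$ in place of $\vartheta$, but citing the two propositions is cleaner.
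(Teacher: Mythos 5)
Your argument is exactly the paper's proof: Proposition~\ref{gomory} shows $\varphi^N$ satisfies the Gomory-Hu condition, and Proposition~\ref{schulze} applied to $\vartheta=\varphi^N$, together with the identity $\mathfrak{F}(N)=R(\varphi^N)$, gives $\mathfrak{F}(N)\in\mathbf{T}$. It is correct and complete.
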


\begin{proof}
 By Proposition \ref{gomory}, $\varphi^N$ satisfies the Gomory-Hu condition. Thus,  by Proposition \ref{schulze}, we get $\mathfrak{F}(N)=R(\varphi^N)\in \mathbf{T}$.
\end{proof}

If $n= 2$, we have $\mathbf{O}=\mathbf{T}$ so that the outcomes of the flow network method are always transitive. That does not hold true when $n\ge 3$, as shown by the next proposition.

\begin{proposition}\label{non-o}
Let $n\ge 3$. Then there exists $N \in\mathcal{N}$  such that $\mathfrak{F}(N)\not \in\mathbf{O}$.
\end{proposition}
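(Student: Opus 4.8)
The plan is to exhibit an explicit network on three vertices whose flow relation is quasi-transitive but not transitive, and then to lift this example to arbitrary $n\ge 3$ by adding ``dummy'' vertices that sit strictly below everyone. For $n=3$, write $V=\{x,y,z\}$ and look for capacities producing a pattern analogous to the one already displayed in the introduction for the competition $\mathsf{C}$: we want, say, $x\sim_N y$, $y\sim_N z$ but $x\not\sim_N z$ (equivalently $\varphi_{xy}=\varphi_{yx}$, $\varphi_{yz}=\varphi_{zy}$, yet $\varphi_{xz}\ne\varphi_{zx}$), which immediately shows $\mathfrak F(N)$ is not transitive since transitivity of a complete relation forces the indifference classes to be ``transitive''. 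On three vertices the maximum flow values are easy to compute by hand via the Maxflow-Mincut Theorem: for distinct $a,b$, $\varphi_{ab}=\min\{c(a,b)+\min\{c(a,w),c(w,b)\},\ \ldots\}$ where $w$ is the third vertex — more precisely $\varphi_{ab}=\min\{o(a),i(b),c(a,b)+c(a,w),c(a,b)+c(w,b)\}$, since the only cuts are $\{a\}$, $\{a,w\}$, $V\setminus\{b\}$, $V\setminus\{w,b\}$, the last two coinciding with the first two in the three-vertex case. So I would just pick small integer capacities — for instance something in the spirit of $c(x,y)=1,c(y,x)=0$, $c(y,z)=1,c(z,y)=0$, $c(x,z)=0,c(z,x)=2$ and adjust — and verify the three equalities/inequality directly. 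A clean choice can be distilled from the paper's own running example by restriction, but any tuple satisfying the four-cut formulas works.

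For the general case $n\ge 3$, take the three-vertex network $N_0=(\{x,y,z\},A_0,c_0)$ just constructed and extend it to $V$ (with $|V|=n$) by setting $c(u,v)=c_0(u,v)$ for $u,v\in\{x,y,z\}$, and $c(a,b)=0$ whenever at least one of $a,b$ lies in $V\setminus\{x,y,z\}$. Call the resulting network $N$. For any vertex $w\in V\setminus\{x,y,z\}$ we then have $o^N(w)=0$, hence $\varphi^N_{wv}\le o^N(w)=0$ for every $v$, while one checks $\varphi^N_{vw}$ can be taken positive for at least one $v\in\{x,y,z\}$ — but in fact all we need is that the restriction of the flow computations to $\{x,y,z\}$ is unaffected, because any flow from $x$ to $z$ (say) in $N$ can only use arcs with positive capacity, which all lie inside $\{x,y,z\}$, so $\varphi^N_{ab}=\varphi^{N_0}_{ab}$ for all $a,b\in\{x,y,z\}$. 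Consequently $\mathfrak F(N)$ restricted to $\{x,y,z\}$ equals $\mathfrak F(N_0)$, which is not transitive; since the restriction of a transitive relation to a subset is transitive, $\mathfrak F(N)\notin\mathbf O$.

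The key steps in order are: (i) recall from the excerpt that the only cuts between two vertices of a three-vertex network are the four listed above, so $\varphi_{ab}$ has the closed form $\min\{o(a),i(b),c(a,b)+c(a,w),c(a,b)+c(w,b)\}$; (ii) choose explicit small capacities and compute the six values $\varphi_{xy},\varphi_{yx},\varphi_{yz},\varphi_{zy},\varphi_{xz},\varphi_{zx}$, confirming $\varphi_{xy}=\varphi_{yx}$, $\varphi_{yz}=\varphi_{zy}$, $\varphi_{xz}\ne\varphi_{zx}$; (iii) conclude $\mathfrak F(N_0)\notin\mathbf O$ via the observation that a complete transitive relation has transitive indifference; (iv) for $n>3$, pad with zero-capacity arcs to the extra vertices and note maximum flows among $x,y,z$ are unchanged because positive-capacity arcs stay within $\{x,y,z\}$; (v) conclude using that transitivity is inherited by restrictions. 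I do not expect any genuine obstacle here — the only mild care needed is in step (ii), to pick capacities that make exactly one of the three pairs ``asymmetric'' in flow value while the other two are balanced, and in step (iv), to make sure the padding does not accidentally create a shortcut path (it cannot, since every padding arc has capacity $0$ and hence carries no flow).
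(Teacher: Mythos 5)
Your proposal is correct, and in fact the capacities you tentatively wrote down need no adjustment: with $c(x,y)=1$, $c(y,x)=0$, $c(y,z)=1$, $c(z,y)=0$, $c(x,z)=0$, $c(z,x)=2$, your two-cut formula $\varphi_{ab}=\min\{o(a),i(b)\}$ gives $\varphi_{xy}=\varphi_{yx}=1$, $\varphi_{yz}=\varphi_{zy}=1$, $\varphi_{xz}=1<2=\varphi_{zx}$, so $x\succeq_N y$, $y\succeq_N z$ but $x\not\succeq_N z$; your zero-padding step and the observation that transitivity passes to restrictions are also sound, since any flow must vanish on zero-capacity arcs. The paper's proof follows the same basic idea (an explicit counterexample in which almost all capacities are zero) but in a more economical form: it takes $V=\ldbrack n\rdbrack$ and the network with $c(1,2)=1$ and all other capacities $0$, defined directly on all $n$ vertices. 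Then $1\succ_N 2$ while every other pair is indifferent because both maximum flow values are $0$, so $2\succeq_N 3$, $3\succeq_N 1$ but $2\not\succeq_N 1$. This requires no max-flow computation at all and no separate padding/restriction lemma, since the non-transitivity is witnessed by the trivial indifferences among zero-flow pairs; your construction buys a ``genuinely'' three-vertex core with nonzero flows, but at the cost of the (routine) cut computations and the extra restriction argument.
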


\begin{proof}
Let $V=\ldbrack n \rdbrack$. Consider $N=(V,A,c)$ such that $c(1,2)=1$ and $c(i,j)=0$ for all $i,j\in\ldbrack n \rdbrack$ with $i\neq j$ and $(i,j)\neq(1,2)$. Then $\mathfrak{F}(N)$ is such that $1\succ_N 2$ and $i\sim_N j$ for all $i,j\in\ldbrack n \rdbrack$ with $i\neq j$ and $\{i,j\}\neq \{1,2\}$. In particular, we have $2\succeq_N 3$ and $3\succeq_N 1$ but $2\not\succeq_N 1.$ Hence $\mathfrak{F}(N)$ is not transitive and so $\mathfrak{F}(N)\not \in\mathbf{O}$.
\end{proof}

Proposition \ref{main-qt} below states that any relation belonging to $\mathbf{T}$ can in fact be the outcome of the flow network method applied to a suitable network.
Its proof relies on the possibility to naturally associate a network to any relation.
In what follows, given $R\in\mathbf{R}$, we set $N_R=(V, A,c_R)\in\mathcal{N}$ where, for every $(x,y)\in A$, $c_R(x,y)=1$ if $x\succ_{R} y$ and $c_R(x,y)=0$ otherwise.

\begin{lemma}\label{core} Let $R\in \mathbf{T}$ and $(x,y)\in A$. Then:
\begin{itemize}\item[(i)] $\varphi^{N_R}_{xy}>0$ if and only if $x\succ_{R} y;$
\item[(ii)] $\varphi^{N_R}_{xy}=\varphi^{N_R}_{yx}=0$ if and only if $x\sim_R y$.
\end{itemize}
\end{lemma}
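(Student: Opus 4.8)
The claim concerns the network $N_R$, whose arcs carry capacity $1$ exactly along the strict relation $S(R)$ and $0$ elsewhere. The plan is to prove both items essentially together, using the fact that a positive maximum flow value is equivalent to the existence of a path (a fact available from \eqref{JG}, since $\varphi_{xy}=\lambda_{xy}\ge 1$ iff $\Gamma(N_R,x,y)\neq\varnothing$). So the first step is to translate everything into statements about paths in $N_R$: a path from $x$ to $y$ in $N_R$ is exactly a sequence $x=x_1,\dots,x_m=y$ of distinct vertices with $x_i\succ_R x_{i+1}$ for each $i$.

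For item (i), the ``if'' direction is immediate: if $x\succ_R y$ then $(x,y)$ itself is an arc of capacity $1$, hence $x\,y$ is a path and $\varphi^{N_R}_{xy}\ge 1>0$. For the ``only if'' direction, suppose $\varphi^{N_R}_{xy}>0$; then there is a path $x=x_1,\dots,x_m=y$ with $x_i\succ_R x_{i+1}$ for all $i$. Since $R\in\mathbf{T}$, the strict relation $S(R)$ is transitive, so iterating gives $x\succ_R y$. This is the place where quasi-transitivity of $R$ is genuinely used, and it is the conceptual heart of the lemma — though it is hardly an obstacle, since transitivity of $S(R)$ is exactly the hypothesis $R\in\mathbf{T}$ unpacked.

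For item (ii), I would argue via (i) together with completeness of $R$. If $x\sim_R y$ then neither $x\succ_R y$ nor $y\succ_R x$, so by (i) both $\varphi^{N_R}_{xy}=0$ and $\varphi^{N_R}_{yx}=0$. Conversely, if $\varphi^{N_R}_{xy}=\varphi^{N_R}_{yx}=0$ then by (i) neither $x\succ_R y$ nor $y\succ_R x$; since $R$ is complete, at least one of $x\succeq_R y$ or $y\succeq_R x$ holds, and the failure of both strict relations forces both weak relations to hold, i.e. $x\sim_R y$. One should note in passing that $x\sim_R y$ together with $\varphi^{N_R}_{xy}=\varphi^{N_R}_{yx}$ (whatever the common value) already describes the relevant dichotomy, but item (ii) as stated pins the common value to $0$, which is exactly what (i) delivers.

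\textbf{Main obstacle.} There is no serious obstacle; the only point requiring a little care is the path-to-relation translation, namely justifying that $\varphi^{N_R}_{xy}>0$ is equivalent to the existence of a directed path from $x$ to $y$ through vertices linked by $\succ_R$. This rests on \eqref{JG} (equivalently on the Maxflow-Mincut Theorem: a zero minimum cut corresponds to no augmenting path), and one must observe that all capacities in $N_R$ are $0$ or $1$ so that ``an arc is usable'' coincides with ``$x_i\succ_R x_{i+1}$''. Once that translation is in place, (i) follows from transitivity of $S(R)$ and (ii) from (i) plus completeness.
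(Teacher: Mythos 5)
Your proposal is correct and follows essentially the same route as the paper: item (i) is proved by translating positive maximum flow value into the existence of a path in $N_R$ (equivalently, a chain $x=x_1\succ_R\cdots\succ_R x_m=y$) and then applying transitivity of $S(R)$, and item (ii) is deduced from (i) together with completeness of $R$. Your explicit appeal to \eqref{JG} to justify that $\varphi^{N_R}_{xy}>0$ is equivalent to the existence of a path is only a slightly more detailed justification of a step the paper takes as known.
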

\begin{proof} $(i)$ If $x\succ_{R} y,$ then $c_R(x,y)=1$ and in $N_R$ we have at least the path $xy$, so that $\varphi^{N_R}_{xy}\geq 1.$ Assume next that $\varphi^{N_R}_{xy}>0.$ Then there exists a path in $N_R$ from $x$ to $y$. Thus, there exist
 $m\geq 2$ and
 $x_1, \ldots ,x_m\in V$ distinct such that $x_1=x$, $x_m=y$ and,
for every $i\in\ldbrack m-1\rdbrack$, $c_R(x_{i},x_{i+1})=1$. It follows that we have the chain
$x=x_1\succ_{R} x_2\succ_{R} \ldots \succ_{R} x_{m-1}
\succ_{R} x_m=y.$
 Since $S(R)$ is transitive, we also have $x\succ_{R}y.$

$(ii)$ Let $x\sim_R y$. Then we neither have $x\succ_{R} y$ nor $y\succ_{R} x$ and so, by $(i)$, we get $\varphi^{N_R}_{xy}=\varphi^{N_R}_{yx}=0$. Conversely let $\varphi^{N_R}_{xy}=\varphi^{N_R}_{yx}=0$. Then, by $(i)$, we neither have $x\succ_{R} y$ nor $y\succ_{R} x$. By completeness, the only possibility is $x\sim_R y$.
\end{proof}

\begin{proposition}\label{main-qt} If $R\in \mathbf{T}$ then $\mathfrak{F}(N_R)=R.$
\end{proposition}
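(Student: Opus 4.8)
The plan is to prove the set equality $\mathfrak{F}(N_R)=R$ by comparing membership pair by pair. Since $R\in\mathbf{T}$ is complete, it is reflexive, so $\Delta\subseteq R$; also $\Delta\subseteq\mathfrak{F}(N_R)$ by the very definition of $\mathfrak{F}$. Hence it suffices to show that for every $(x,y)\in A$ one has $(x,y)\in\mathfrak{F}(N_R)$ if and only if $x\succeq_R y$, i.e. that $\varphi^{N_R}_{xy}\geq\varphi^{N_R}_{yx}$ holds precisely when $x\succeq_R y$.

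First I would fix $(x,y)\in A$ and invoke the completeness of $R$ to observe that exactly one of the three mutually exclusive alternatives holds: $x\succ_R y$, or $y\succ_R x$, or $x\sim_R y$. I would then settle each case by reading off the relevant maximum flow values from Lemma \ref{core}. If $x\succ_R y$, then, since $S(R)$ is asymmetric, we do not have $y\succ_R x$, so Lemma \ref{core}$(i)$ gives $\varphi^{N_R}_{xy}>0$ and $\varphi^{N_R}_{yx}=0$; thus $\varphi^{N_R}_{xy}\geq\varphi^{N_R}_{yx}$, so $(x,y)\in\mathfrak{F}(N_R)$, and also $x\succeq_R y$, so $(x,y)\in R$. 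The case $y\succ_R x$ is symmetric and yields $\varphi^{N_R}_{xy}=0<\varphi^{N_R}_{yx}$, hence $(x,y)\notin\mathfrak{F}(N_R)$, together with $(x,y)\notin R$. Finally, if $x\sim_R y$, Lemma \ref{core}$(ii)$ gives $\varphi^{N_R}_{xy}=\varphi^{N_R}_{yx}=0$, so the weak inequality holds and $(x,y)\in\mathfrak{F}(N_R)$, while $x\sim_R y$ entails $x\succeq_R y$, so $(x,y)\in R$. In every case the two memberships agree, which establishes $\mathfrak{F}(N_R)=R$.

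There is essentially no hard step here: Lemma \ref{core} already packages the correspondence between positivity of maximum flow values in $N_R$ and the strict-preference/indifference structure of $R$, and the hypothesis $R\in\mathbf{T}$ is exactly what allows that lemma to be applied. The only point that requires a little care is to check that the weak inequality $\varphi^{N_R}_{xy}\geq\varphi^{N_R}_{yx}$ in the definition of $\mathfrak{F}$ correctly captures both the strict case (via $\varphi^{N_R}_{xy}>0=\varphi^{N_R}_{yx}$) and the indifference case (via the equality $0=0$), and that the completeness of $R$ guarantees the three cases are exhaustive so that no pair is left unaccounted for.
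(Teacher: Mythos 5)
Your proof is correct and follows essentially the same route as the paper: both arguments reduce the statement to Lemma \ref{core}, the only difference being that you organize the verification as a trichotomy ($x\succ_R y$, $y\succ_R x$, $x\sim_R y$) while the paper proves the two implications of the biconditional directly. This is a purely cosmetic reorganization, so nothing further is needed.
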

\begin{proof}  We need to see that, for every  $x,y\in V$ with $x\neq y$,  $x\succeq_{R} y$ if and only if $\varphi^{N_R}_{xy}\geq\varphi^{N_R}_{yx}$. Assume first that $\varphi^{N_R}_{xy}\geq\varphi^{N_R}_{yx}.$ If $\varphi^{N_R}_{xy}>\varphi^{N_R}_{yx},$ then $\varphi^{N_R}_{xy}>0$ and, by Lemma \ref{core}\,(i), we get
$x\succ_{R} y.$ If instead $\varphi^{N_R}_{xy}=\varphi^{N_R}_{yx}$ then, by Lemma \ref{core}\,(i),  we necessarily have $\varphi^{N_R}_{xy}=\varphi^{N_R}_{yx}=0$ and thus, by Lemma \ref{core}\,(ii), $x\sim_R y$ and, in particular, $x\succeq_{R} y$.
 Assume next that  $x\succeq_{R} y$. If $\varphi^{N_R}_{yx}=0,$ we trivially have $\varphi^{N_R}_{xy}\geq\varphi^{N_R}_{yx}.$ But we cannot have $\varphi^{N_R}_{yx}>0$ since otherwise, by Lemma \ref{core}\,(i), we would get $y\succ_{R} x,$ against $x\succeq_{R} y$.
\end{proof}

As a consequence of the above result and Proposition \ref{schulze}, we derive the following interesting
characterization of the complete and quasi-transitive relations.

\begin{theorem}\label{co-qt} Let $R\in\mathbf{R}$. Then $R\in \mathbf{T}$  if and only if there exists $\vartheta:A\to \mathbb{R}$ satisfying the Gomory-Hu condition, such that $R=R(\vartheta).$
\end{theorem}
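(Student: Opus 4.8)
The plan is to prove the two implications separately, with both directions essentially already available from the machinery just developed. The statement characterizes $\mathbf{T}$ as exactly the image of the map $\vartheta \mapsto R(\vartheta)$ restricted to functions satisfying the Gomory-Hu condition.

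For the "if" direction, suppose $\vartheta:A\to\mathbb{R}$ satisfies the Gomory-Hu condition and $R=R(\vartheta)$. This is immediate from Proposition \ref{schulze}, which states precisely that $R(\vartheta)\in\mathbf{T}$ under this hypothesis. So there is nothing to do here beyond citing that result.

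For the "only if" direction, suppose $R\in\mathbf{T}$. I need to exhibit a function $\vartheta:A\to\mathbb{R}$ satisfying the Gomory-Hu condition with $R=R(\vartheta)$. The natural choice is to take $\vartheta=\varphi^{N_R}$, the maximum-flow function of the network $N_R$ associated with $R$ as in the paragraph preceding Lemma \ref{core}. By Proposition \ref{gomory}, $\varphi^{N_R}$ satisfies the Gomory-Hu condition (indeed this is exactly the content of that proposition, applied to the network $N_R$). By definition, $R(\varphi^{N_R})=\mathfrak{F}(N_R)$, and by Proposition \ref{main-qt}, $\mathfrak{F}(N_R)=R$ since $R\in\mathbf{T}$. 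Hence $R=R(\varphi^{N_R})$ with $\varphi^{N_R}$ satisfying the Gomory-Hu condition, as required.

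There is essentially no obstacle: the theorem is a clean repackaging of Proposition \ref{schulze} (giving sufficiency) together with Propositions \ref{gomory} and \ref{main-qt} (giving necessity via the canonical network $N_R$). The only thing to be mildly careful about is making sure the codomain conventions match — $\vartheta$ is required to be real-valued, and $\varphi^{N_R}$ takes values in $\mathbb{N}_0\subseteq\mathbb{R}$, so this is fine. I would write the proof in just a few lines, invoking the three cited results in sequence.

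\begin{proof}
Assume first that there exists $\vartheta:A\to\mathbb{R}$ satisfying the Gomory-Hu condition such that $R=R(\vartheta)$. Then, by Proposition \ref{schulze}, $R=R(\vartheta)\in\mathbf{T}$.

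Assume next that $R\in\mathbf{T}$ and consider the network $N_R\in\mathcal{N}$ and the function $\varphi^{N_R}:A\to\mathbb{R}$ defined, for every $(x,y)\in A$, by $\varphi^{N_R}(x,y)=\varphi^{N_R}_{xy}$. By Proposition \ref{gomory}, $\varphi^{N_R}$ satisfies the Gomory-Hu condition. Moreover, $R(\varphi^{N_R})=\mathfrak{F}(N_R)$ and, by Proposition \ref{main-qt}, $\mathfrak{F}(N_R)=R$. Hence $R=R(\varphi^{N_R})$, where $\varphi^{N_R}$ satisfies the Gomory-Hu condition.
\end{proof}
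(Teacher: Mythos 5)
Your proof is correct and follows exactly the paper's own argument: the ``if'' direction is Proposition \ref{schulze}, and the ``only if'' direction takes $\vartheta=\varphi^{N_R}$, which satisfies the Gomory-Hu condition by Proposition \ref{gomory} and yields $R(\varphi^{N_R})=\mathfrak{F}(N_R)=R$ by Proposition \ref{main-qt}. Nothing to add.
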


\begin{proof} One implication is just Proposition \ref{schulze}. Assume now that $R\in \mathbf{T}$. Then, by definition of flow relation, we have $R(\varphi^{N_R})=\mathfrak{F}(N_R)$ and we know, by Proposition \ref{gomory}, that $\varphi^{N_R}$ satisfyies the Gomory-Hu condition. Since, by Proposition \ref{main-qt}, we have $\mathfrak{F}(N_R)=R, $ we deduce that $R=R(\varphi^{N_R}).$
\end{proof}

We now present some further remarkable properties of $\mathfrak{F}$. The first property is about the behaviour of  the relation $\mathfrak{F}(N)$ with respect to a relabelling of its vertices. It simply says that if one decides to  relabel the network vertices, the flow relation accordingly changes. In other words, the flow relation does not depend on the names of vertices. For that reason we call that property neutrality.
The second one is a homogeneity property for the flow relation. We observe that  these two properties are considered by Gonz\'alez-D\'iaz et al. (2014).\footnote{Those authors use the term anonymity instead of neutrality.}

\begin{proposition}\label{neut-F}  Let $N=(V,A,c) \in\mathcal{N}$ and let $\psi:V\rightarrow V$ be a bijection. Then, for every $x,y\in V$, $x\succeq_N y$ if and only if $\psi(x)\succeq_{N^{\psi}} \psi(y).$
\end{proposition}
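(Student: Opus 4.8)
The plan is to show that relabelling the vertices of a network via a bijection $\psi$ transforms flows into flows bijectively and preserves their values, so that the maximum flow values satisfy $\varphi^{N^\psi}_{\psi(x)\psi(y)}=\varphi^N_{xy}$ for all distinct $x,y\in V$; the statement then follows immediately from the definition of $\mathfrak{F}$ and the fact that $\psi$ fixes the diagonal pairs $(x,x)$.

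First I would set up the correspondence on flows. Fix distinct $s,t\in V$. Given $f\in\mathcal{F}(N,s,t)$, define $f^\psi:A\to\mathbb{N}_0$ by $f^\psi(x,y)=f(\psi^{-1}(x),\psi^{-1}(y))$, mirroring the definition of $c^\psi$. Then I would check three things: (a) the capacity constraint $f^\psi(x,y)\le c^\psi(x,y)$ holds, which is immediate since it is just $f(\psi^{-1}(x),\psi^{-1}(y))\le c(\psi^{-1}(x),\psi^{-1}(y))$; (b) the conservation equation \eqref{conservation} holds for $f^\psi$ at every vertex $x\notin\{\psi(s),\psi(t)\}$ — here one substitutes $y=\psi(w)$ in both sums and uses that $w\mapsto\psi(w)$ is a bijection of $V\setminus\{\psi^{-1}(x)\}$ onto $V\setminus\{x\}$, reducing the identity to conservation of $f$ at $\psi^{-1}(x)$, which is valid because $\psi^{-1}(x)\notin\{s,t\}$; (c) the value is preserved, $\varphi(f^\psi)=\varphi(f)$, by the same change of summation variable applied to the two sums defining $\varphi(f^\psi)$ around the source $\psi(s)$. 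Thus $f\mapsto f^\psi$ maps $\mathcal{F}(N,s,t)$ into $\mathcal{F}(N^\psi,\psi(s),\psi(t))$, and since applying the analogous construction with $\psi^{-1}$ (noting $(N^\psi)^{\psi^{-1}}=N$) gives an inverse, it is a value-preserving bijection.

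From the bijection and value preservation it follows that $\varphi^{N^\psi}_{\psi(s)\psi(t)}=\max_{g\in\mathcal{F}(N^\psi,\psi(s),\psi(t))}\varphi(g)=\max_{f\in\mathcal{F}(N,s,t)}\varphi(f^\psi)=\max_{f\in\mathcal{F}(N,s,t)}\varphi(f)=\varphi^N_{st}$. Now take any $x,y\in V$. If $x=y$, then both $x\succeq_N y$ and $\psi(x)\succeq_{N^\psi}\psi(y)$ hold by reflexivity (both sides lie in $\Delta$). If $x\neq y$, then $\psi(x)\neq\psi(y)$, and $x\succeq_N y\iff\varphi^N_{xy}\ge\varphi^N_{yx}\iff\varphi^{N^\psi}_{\psi(x)\psi(y)}\ge\varphi^{N^\psi}_{\psi(y)\psi(x)}\iff\psi(x)\succeq_{N^\psi}\psi(y)$, using the identity just established in the cases $(s,t)=(x,y)$ and $(s,t)=(y,x)$.

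I do not anticipate a genuinely hard step; the one point requiring a little care is the bookkeeping in part (b), namely verifying that the re-indexing $y=\psi(w)$ correctly identifies the conservation sums for $f^\psi$ at $x$ with those for $f$ at $\psi^{-1}(x)$, and that the excluded vertices match up ($x\notin\{\psi(s),\psi(t)\}$ exactly when $\psi^{-1}(x)\notin\{s,t\}$). Everything else is a routine unwinding of definitions.
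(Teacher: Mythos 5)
Your proof is correct, and it reaches the same reduction as the paper — everything hinges on the identity $\varphi^{N^{\psi}}_{\psi(x)\psi(y)}=\varphi^{N}_{xy}$ — but you establish that identity by a different mechanism. The paper transports \emph{cuts}: it takes a minimum cut $S$ for $\varphi^{N^{\psi}}_{\psi(x)\psi(y)}$ (so it invokes the Maxflow--Mincut theorem to know that the maximum flow value is attained by some cut), observes that $\psi^{-1}(S)$ is a cut in $N$ with the same capacity, uses the weak duality inequality \eqref{duality} to get one inequality, and then applies the same argument with $\psi^{-1}$ and $(N^{\psi})^{\psi^{-1}}=N$ to get the other. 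You instead transport \emph{flows}: the map $f\mapsto f^{\psi}$ with $f^{\psi}(x,y)=f(\psi^{-1}(x),\psi^{-1}(y))$ is a value-preserving bijection $\mathcal{F}(N,s,t)\to\mathcal{F}(N^{\psi},\psi(s),\psi(t))$, so the maxima coincide. Your route is more self-contained — it uses only the definitions of flow and value, not Maxflow--Mincut or duality — at the cost of the conservation and value bookkeeping, which you handle correctly (the re-indexing $y=\psi(w)$ and the matching of excluded vertices are exactly the points to check, and your argument for them is sound). The paper's route is shorter because it leans on machinery already set up in Section 2 and reuses the cut-transport pattern that recurs in several later proofs (e.g.\ Propositions \ref{Eff-F} and \ref{BeGv-lemma1}). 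Your handling of the diagonal and of the case $x\neq y$ via $\psi(x)\neq\psi(y)$ is also fine.
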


\begin{proof} It is clearly enough to show that, for every $N=(V,A,c) \in\mathcal{N}$, every $(x,y)\in A$ and every bijection $\psi:V\rightarrow V$, we have
\begin{equation}\label{8}\varphi_{\psi(x)\psi(y)}^{N^{\psi}}=\varphi_{xy}^{N}.
\end{equation}
 Let $S$ be a minimum cut from $\psi(x)$ to $\psi(y)$ in $N^{\psi}$. Then $\psi^{-1}(S)$ is a cut from $x$ to $y$ in $N$. Thus, by definition of $c^{\psi}$ and by \eqref{duality}, we have
$$\varphi_{\psi(x)\psi(y)}^{N^{\psi}}=c^{N^{\psi}}(S)=\sum_{\psi(u)\in S,\, \psi(v)\in S^c} c^{\psi}(\psi(u),\psi(v))=\sum_{u\in \psi^{-1}(S),\, v\in [\psi^{-1}(S)]^c} c(u,v)\geq \varphi_{xy}^{N}.$$
Then, we also have $$\varphi_{xy}^{N}=\varphi_{\psi^{-1}(\psi(x))\,\psi^{-1}(\psi(y))}^{(N^{\psi})^{\psi^{-1}}}\geq \varphi_{\psi(x)\psi(y)}^{N^{\psi}}$$
and so we get the equality \eqref{8}.
\end{proof}

\begin{proposition}\label{homo-net}
Let $N\in \mathcal{N}$ and $\alpha\in \mathbb{N}$. Then  $\mathfrak{F}(\alpha N)=\mathfrak{F}(N)$.
\end{proposition}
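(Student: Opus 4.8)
The plan is to show that scaling every capacity by a fixed positive integer $\alpha$ scales every maximum flow value by the same factor, whence the inequalities $\varphi_{xy}\ge\varphi_{yx}$ that define the flow relation are unchanged. Concretely, it suffices to prove that for all distinct $x,y\in V$ we have $\varphi^{\alpha N}_{xy}=\alpha\,\varphi^{N}_{xy}$; since $\mathfrak{F}(\alpha N)=\{(x,y)\in A:\varphi^{\alpha N}_{xy}\ge\varphi^{\alpha N}_{yx}\}\cup\Delta$, multiplying both sides of $\varphi^{\alpha N}_{xy}\ge\varphi^{\alpha N}_{yx}$ by $1/\alpha>0$ immediately gives $\mathfrak{F}(\alpha N)=\mathfrak{F}(N)$.

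To establish $\varphi^{\alpha N}_{xy}=\alpha\,\varphi^{N}_{xy}$ I would use the Maxflow-Mincut Theorem, exactly in the spirit of the proofs of Propositions \ref{gomory} and \ref{neut-F}. First, observe that the cuts from $x$ to $y$ in $\alpha N$ are the same subsets $S\subseteq V$ as in $N$ (the vertex set and arc set are unchanged), and for each such $S$ the capacity in $\alpha N$ is $c^{\alpha N}(S)=\sum_{u\in S,\,v\in S^c}\alpha c(u,v)=\alpha\, c^{N}(S)$. Taking the minimum over $S\in\mathcal{C}(N,x,y)=\mathcal{C}(\alpha N,x,y)$ and using that $\alpha>0$ preserves the ordering of nonnegative reals, we get
\[
\varphi^{\alpha N}_{xy}=\min_{S\in\mathcal{C}(\alpha N,x,y)}c^{\alpha N}(S)=\alpha\min_{S\in\mathcal{C}(N,x,y)}c^{N}(S)=\alpha\,\varphi^{N}_{xy}.
\]

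Alternatively, and perhaps more self-containedly, one can argue at the level of flows: if $f\in\mathcal{F}(N,x,y)$ then $\alpha f$ (defined by $(\alpha f)(a)=\alpha f(a)$) lies in $\mathcal{F}(\alpha N,x,y)$ because the capacity constraints $\alpha f(a)\le\alpha c(a)$ and the conservation equations \eqref{conservation} both scale linearly, and $\varphi(\alpha f)=\alpha\varphi(f)$; conversely, any $g\in\mathcal{F}(\alpha N,x,y)$ need not be divisible by $\alpha$, so this direction only gives $\varphi^{\alpha N}_{xy}\ge\alpha\varphi^{N}_{xy}$ directly, and one still needs Maxflow-Mincut (or an integrality argument on a minimum cut) for the reverse inequality. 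For this reason I expect the cleanest route is the cut-based one above. There is essentially no obstacle here: the only mild point to be careful about is that the statement requires $\alpha\in\mathbb{N}$, so $\alpha\ge 1$ and in particular $\alpha\ne 0$, which is what makes the division step legitimate; the argument would equally work for any positive rational or real scaling if capacities were allowed to be nonnegative reals.
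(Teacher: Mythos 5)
Your proposal is correct. Your main, cut-based route is a slight variant of what the paper does: the paper's proof establishes the same key identity $\varphi^{\alpha N}_{xy}=\alpha\varphi^{N}_{xy}$, but it splits the two inequalities differently. For $\varphi^{\alpha N}_{xy}\ge\alpha\varphi^{N}_{xy}$ it scales a maximum flow $f$ of $N$ to the flow $\alpha f$ of $\alpha N$ (exactly the first half of the ``alternative'' you sketch), and for the reverse inequality it takes a minimum cut $S$ for $\varphi^{N}_{xy}$, notes that $S$ is also a cut in $\alpha N$ with capacity $\alpha c(S)$, and applies only the weak duality bound \eqref{duality} to $\alpha N$. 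Your version instead invokes the Maxflow--Mincut equality on both $N$ and $\alpha N$ and uses that the cut families coincide and cut capacities scale by $\alpha$; this is a bit more symmetric and arguably cleaner, at the cost of using the full min-cut characterization twice rather than once. You also correctly identify the one subtlety that makes a purely flow-based argument insufficient in the integer setting (a flow in $\alpha N$ need not be divisible by $\alpha$), which is precisely why the paper, like you, passes through cuts for the upper bound; and your final observation that $\alpha\ge 1$ makes the division step legitimate, so that the inequalities defining $\mathfrak{F}$ are preserved, matches the paper's (implicit) conclusion.
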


\begin{proof} Let $N=(V,A,c)$ and let $\alpha N=(V,A,c')$, where $c'=\alpha c$.
In order to get $\mathfrak{F}(\alpha N)=\mathfrak{F}(N)$  it is enough to show that, for every $(x,y)\in A$, $\varphi_{xy}^{\alpha N}=\alpha\varphi_{xy}^N$.
Let $(x,y)\in A$. Consider $f\in \mathcal{F}(N,x,y)$ such that $\varphi_{xy}^N=\varphi(f)$. Then $\alpha f\in \mathcal{F}(\alpha N,x,y)$ so that
$
\varphi_{xy}^{\alpha N}\ge \varphi(\alpha f)=\alpha  \varphi(f)=\alpha\varphi_{xy}^{N}.
$
Let $S$ be a cut from $x$ to $y$ in $N$ such that $c(S)=\varphi_{xy}^{N}$. Then $S$ is also a cut from $x$ to $y$ in $\alpha N$. Hence, by \eqref{duality},
$
\alpha\varphi_{xy}^{N}=\alpha c(S)=c'(S)\ge \varphi_{xy}^{\alpha N}
$
and the proof is completed.
\end{proof}

Given a network $N$ and two distinct vertices $x^*$ and $y^*$,  it can be useful to have simple conditions which are sufficient to guarantee $x^*\succeq_{N} y^*$ or
$x^*\succ_{N} y^*$. Conditions of that type are described in the next proposition.

\begin{proposition}\label{Eff-F} Let $N=(V,A,c)\in \mathcal{N}$ and $x^*,y^*\in V$ with $x^*\neq y^*$. Assume that $c(x^*,y^*)\ge c(y^*,x^*)$ and that, for every $z\in V\setminus\{x^*,y^*\}$, $c(x^*,z)\ge c(y^*,z)$ and $c(z,y^*)\ge c(z,x^*)$.
Then $x^*\succeq_{N} y^*$.
Assume further that one of the following conditions holds true:
\begin{itemize}
	\item[(a)] $c(x^*,y^*)> c(y^*,x^*)$;
	\item[(b)]for every $z\in V\setminus\{x^*,y^*\}$, $c(x^*,z)> c(y^*,z)$;
	\item[(c)]for every $z\in V\setminus\{x^*,y^*\}$, $c(z,y^*)> c(z,x^*)$;
	\item[(d)] there exists $z\in V\setminus \{x^*,y^*\}$ such that $c(x^*,z)>c(y^*,z)$ and $c(z,y^*)>c(z,x^*)$.
\end{itemize}
Then $x^*\succ_{N} y^*$.
\end{proposition}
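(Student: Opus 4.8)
The plan is to establish the weak inequality $x^*\succeq_N y^*$ by exhibiting, for an arbitrary flow $f$ from $y^*$ to $x^*$ in $N$, a flow $g$ from $x^*$ to $y^*$ in $N$ with $\varphi(g)\ge\varphi(f)$; taking $f$ to be a maximum flow then gives $\varphi^N_{x^*y^*}\ge\varphi^N_{y^*x^*}$, i.e. $x^*\succeq_N y^*$. The natural candidate for $g$ is obtained from $f$ by ``reversing'' it: set $g(a^r)=f(a)$ for every arc $a$, where $a^r$ denotes the opposite arc. Concretely, $g(u,v)=f(v,u)$ for all $(u,v)\in A$. One checks immediately that $g$ satisfies the conservation law \eqref{conservation} at every vertex other than $x^*$ and $y^*$ (the conservation equations for $g$ at a vertex are exactly the conservation equations for $f$ at that vertex with the two sides swapped), and that $\varphi(g)=\varphi(f)$. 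The only thing that can fail is the capacity constraint $g(u,v)\le c(u,v)$, which reads $f(v,u)\le c(u,v)$. I would verify this case by case using the hypotheses: for the pair $(v,u)=(x^*,y^*)$ it is $f(x^*,y^*)\le c(y^*,x^*)$, but since $f$ has source $y^*$ and sink $x^*$, one expects $f$ to push flow \emph{towards} $x^*$; more care is needed here.

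In fact the cleaner route is to argue on a minimum cut rather than on flows. Let $S$ be a minimum cut from $y^*$ to $x^*$ in $N$, so $y^*\in S$, $x^*\in S^c$, and $\varphi^N_{y^*x^*}=c(S)=\sum_{u\in S,\,v\in S^c}c(u,v)$. Consider the set $T=(S^c\cup\{y^*\})\setminus\{x^*\}$, obtained from $S^c$ by moving $x^*$ out and $y^*$ in. Then $x^*\in T$ and $y^*\in T^c$, so $T$ is a cut from $x^*$ to $y^*$, and by \eqref{duality} we get $\varphi^N_{x^*y^*}\le c(T)$; but that is the wrong direction. Instead I would bound $\varphi^N_{x^*y^*}$ from below is impossible via cuts, so the cut argument must be used on $\varphi^N_{y^*x^*}$: I want to show $c(S)\le o(x^*)$-type bound… Reconsidering, the robust approach is the flow-reversal one combined with a direct check that a maximum flow $f$ from $y^*$ to $x^*$ may be chosen to be \emph{acyclic} and hence to satisfy $f(x^*,y^*)=0$ and $f(y^*,x^*)\le$ something controllable; then the capacity checks for $g$ reduce exactly to the three families of hypotheses $c(x^*,z)\ge c(y^*,z)$, $c(z,y^*)\ge c(z,x^*)$, $c(x^*,y^*)\ge c(y^*,x^*)$.

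So the key steps, in order, are: (1) recall that $\varphi^N_{y^*x^*}$ is attained by some flow $f$ from $y^*$ to $x^*$, and argue one may take $f$ with $f(x^*,y^*)=0$ (any flow can be reduced to a minimal one on each arc without decreasing its value, since a maximum flow need not saturate the reverse arc of its sink-to-source direction; alternatively pass to a minimum cut and use Maxflow-Mincut). (2) Define $g$ by $g(u,v)=f(v,u)$ and check conservation at interior vertices and $\varphi(g)=\varphi(f)$. (3) Check $g(u,v)\le c(u,v)$ for all arcs: for arcs not touching $\{x^*,y^*\}$ this is $f(v,u)\le c(u,v)$ which holds because $f(v,u)\le c(v,u)$ and… here I would actually need $c$ to behave, so the symmetry is not free — this is the real content. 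The honest fix is that $g$ as defined need not be feasible; instead one builds $g$ from $f$ by \emph{rerouting}: replace each unit of $f$ through a vertex $z$ (entering via $(w,z)$, leaving via $(z,w')$) appropriately, swapping the roles of $x^*$ and $y^*$ throughout, i.e. transform $f\in\mathcal F(N,y^*,x^*)$ into $g\in\mathcal F(N,x^*,y^*)$ by the substitution that, in every path-decomposition of $f$, exchanges the endpoints using the hypotheses $c(x^*,z)\ge c(y^*,z)$ and $c(z,y^*)\ge c(z,x^*)$ to guarantee each rerouted arc still has capacity. (4) For the strict statement, observe that under any of (a)–(d) the above rerouting produces a $g$ that is \emph{not} maximal — one more unit of flow can be pushed from $x^*$ to $y^*$ along a path using the strictly-larger-capacity arc(s) — hence $\varphi^N_{x^*y^*}>\varphi^N_{y^*x^*}$, giving $x^*\succ_N y^*$; alternatively, for (a) one can use $c(x^*,y^*)>c(y^*,x^*)$ directly since the single arc $x^*y^*$ already gives $\varphi^N_{x^*y^*}\ge c(x^*,y^*)>c(y^*,x^*)\ge\varphi^N_{y^*x^*}$ once the weak inequality $\varphi^N_{y^*x^*}\le c(y^*,x^*)$ is in hand from step (3)-type reasoning.

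The main obstacle is step (3): making precise the ``reverse-and-reroute'' construction and verifying that the hypotheses $c(x^*,z)\ge c(y^*,z)$ and $c(z,y^*)\ge c(z,x^*)$ are exactly what is needed to keep every rerouted arc within capacity. I expect this to be handled most cleanly by decomposing a maximum flow $f$ from $y^*$ to $x^*$ into arc-disjoint paths (as in \eqref{JG}), then mapping each path $y^*\,z_1\,z_2\cdots z_r\,x^*$ to the path $x^*\,z_1\,z_2\cdots z_r\,y^*$ and checking that the only arcs whose capacity requirement changes are the first ($(y^*,z_1)\mapsto(x^*,z_1)$, covered by $c(x^*,z_1)\ge c(y^*,z_1)$), the last ($(z_r,x^*)\mapsto(z_r,y^*)$, covered by $c(z_r,y^*)\ge c(z_r,x^*)$), and, in the degenerate case $r=0$, the arc $(y^*,x^*)\mapsto(x^*,y^*)$ covered by $c(x^*,y^*)\ge c(y^*,x^*)$; arcs internal to the paths are unchanged and the arc-disjointness is preserved, so $\lambda_{x^*y^*}\ge\lambda_{y^*x^*}$, which is the claim.
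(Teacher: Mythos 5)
Your final argument for the weak conclusion is correct, but it takes a genuinely different route from the paper's. You work with the arc-disjoint-path characterization \eqref{JG}: starting from an arc-disjoint $k$-sequence in $\Gamma(N,y^*,x^*)$ with $k=\lambda_{y^*x^*}$, you replace each path $y^*z_1\cdots z_rx^*$ by $x^*z_1\cdots z_ry^*$ (and $y^*x^*$ by $x^*y^*$). To make this airtight you must record the counting your sketch only gestures at: since $y^*$ occurs in each old path only as its first vertex and $x^*$ only as its last, the number of new paths through an arc $(x^*,z)$ equals the number of old paths through $(y^*,z)$, the number through $(z,y^*)$ equals the number through $(z,x^*)$, the number through $(x^*,y^*)$ equals the number through $(y^*,x^*)$, and all other counts are unchanged; the hypotheses are then exactly the capacity bounds needed, so $\lambda_{x^*y^*}\ge\lambda_{y^*x^*}$, i.e.\ $\varphi_{x^*y^*}\ge\varphi_{y^*x^*}$. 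The paper argues instead with cuts, and the cut attempt you abandoned fails only because you took the minimum cut on the wrong side: take a minimum cut $S$ for $\varphi_{x^*y^*}$, pass to $T=(S\setminus\{x^*\})\cup\{y^*\}$, which is a cut for $\varphi_{y^*x^*}$, and check $c(S)-c(T)\ge 0$ termwise from the hypotheses; together with \eqref{duality} this gives the weak inequality in a few lines, with no appeal to \eqref{JG}. (Your first idea, $g(u,v)=f(v,u)$, is indeed infeasible in general, as you suspected.)

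The strict part is where your proposal has a genuine gap. The claim that ``one more unit of flow can be pushed'' is substantiated only for (a) and (d): there you can append to the rerouted sequence the explicit path $x^*y^*$, respectively $x^*zy^*$, and the strict inequalities supply the spare capacity on each of its arcs, so $\lambda_{x^*y^*}\ge\lambda_{y^*x^*}+1$. Under (b) or (c), spare capacity on the arcs leaving $x^*$ (respectively entering $y^*$) does not by itself yield any path from $x^*$ to $y^*$: with $V=\{x^*,y^*,z\}$ and every capacity zero except $c(x^*,z)=1$, the standing hypotheses and (b) hold while $\varphi_{x^*y^*}=\varphi_{y^*x^*}=0$, so no additional path exists at all and strictness fails outright in that configuration. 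Note that this is precisely the situation in which the sum over $S^c\setminus\{y^*\}$ in the paper's identity \eqref{diff} is empty, so the cut argument for (b) (and symmetrically for (c)) also needs the relevant portion of the minimum cut to be nonempty; cases (b) and (c) therefore require more care than either your augmenting-path claim or a one-line positivity appeal provides. Finally, drop the ``alternative'' justification of (a): it rests on $\varphi_{y^*x^*}\le c(y^*,x^*)$, which is false in general, since flow from $y^*$ to $x^*$ may pass through intermediate vertices.
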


\begin{proof}Let $S$ be a minimum cut for $\varphi_{x^*y^*}$ so that $S\subseteq V$, $x^*\in S,\ y^*\notin S$ and $c(S)=\varphi_{x^*y^*}.$ Define $T=(S\setminus \{x^*\})\cup\{y^*\}$ and note that, since $T\subseteq V$, $y^*\in T,\ x^*\notin T,$ we have that $T$ is a cut for $\varphi_{y^*x^*}$. By \eqref{duality}, we have $c(T)\ge \varphi_{y^*x^*}$. In order to obtain $\varphi_{x^*y^*}\ge \varphi_{y^*x^*}$, that is $x^*\succeq_N y^*$, it is enough to show $c(S)- c(T)\ge 0.$
Note that
\[
c(S)=\sum_{u\in S,v\in S^c}c(u,v)=\sum_{u\in S\setminus\{x^*\}, v\in S^c\setminus\{y^*\}}c(u,v)+\sum_{v\in S^c\setminus\{y^*\}}c(x^*,v)+\sum_{u\in S\setminus\{x^*\}}c(u,y^*)+c(x^*,y^*)
\]
and
\[
c(T)=\sum_{u\in T,v\in T^c}c(u,v)=\sum_{u\in T\setminus\{y^*\}, v\in T^c\setminus\{x^*\}}c(u,v)+\sum_{v\in T^c\setminus\{x^*\}}c(y^*,v)+\sum_{u\in T\setminus\{y^*\}}c(u,x^*)+c(y^*,x^*)
\]
Observe now that $T\setminus\{y^*\}=S\setminus\{x^*\}$ and $T^c\setminus\{x^*\}=S^c\setminus\{y^*\}.$
Thus
\begin{equation}\label{diff}
 c(S)-c(T)= (c(x^*,y^*)-c(y^*,x^*))+\sum_{v\in S^c\setminus\{y^*\}}(c(x^*,v)-c(y^*,v))+\sum_{u\in S\setminus\{x^*\}}(c(u,y^*)-c(u,x^*))
\end{equation}
and every term between brackets is non-negative, which says $c(S)- c(T)\ge 0.$

Assume further that one among $(a)$, $(b)$, $(c)$ and $(d)$ holds true. Then, at least one of the terms between brackets in \eqref{diff} is now positive and so $\varphi_{x^*y^*}=c(S)>c(T)\ge\varphi_{y^*x^*}.$
Thus,  $\varphi_{x^*y^*}>\varphi_{y^*x^*}$, that is, $x^*\succ_N y^*$.
\end{proof}

The  flow network method also satisfies a monotonicity property, as described by the next proposition. This property implies that  teams have an incentive to win when the flow network method is used to assess their performance. That is in line with the Property II in Vaziri et al. (2017).

\begin{proposition}\label{Mon1-flow-rule} Let $N=(V,A,c)\in \mathcal{N}$, $N'=(V,A,c')\in \mathcal{N}$ and $x^* \in V$. Assume that the following conditions hold true:
\begin{itemize}
\item[(a)] for every $y\in V\setminus \{x^*\}$, $c'(x^*,y)\geq c(x^*,y)$ and $c'(y,x^*)\le c(y,x^*)$;
\item[(b)] for every $(x,y)\in A$ with $x\neq x^*$ and $y\neq x^*$, $c'(x,y)= c(x,y)$.
\end{itemize}
Then, for every $y\in V\setminus\{x^*\}$ such that $x^*\succeq_{N} y$, we have $x^*\succeq_{N'} y$. Moreover,
for every $y\in V\setminus\{x^*\}$ such that $x^*\succeq_{N} y$ and one among $\ x^*\succ_N y$, $c'(x^*,y)> c(x^*,y)$ and $c'(y,x^*)< c(y,x^*)$ holds true, we have that $x^*\succ_{N'} y$.
\end{proposition}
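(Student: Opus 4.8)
The plan is to reduce everything to two comparisons of maximum flow values, namely $\varphi^{N'}_{x^*y}$ versus $\varphi^{N}_{x^*y}$ and $\varphi^{N'}_{yx^*}$ versus $\varphi^{N}_{yx^*}$, and then to combine the resulting inequalities with the hypothesis $x^*\succeq_N y$ (i.e.\ $\varphi^N_{x^*y}\ge\varphi^N_{yx^*}$) to obtain $x^*\succeq_{N'}y$, and with the extra strictness hypothesis to obtain $x^*\succ_{N'}y$. The two facts I would isolate as the heart of the argument are:
\[
\varphi^{N'}_{x^*y}\ge \varphi^{N}_{x^*y}
\qquad\text{and}\qquad
\varphi^{N'}_{yx^*}\le \varphi^{N}_{yx^*}.
\]
Once these are in hand, the conclusion is immediate: $\varphi^{N'}_{x^*y}\ge\varphi^N_{x^*y}\ge\varphi^N_{yx^*}\ge\varphi^{N'}_{yx^*}$ gives $x^*\succeq_{N'}y$; and if additionally $\varphi^N_{x^*y}>\varphi^N_{yx^*}$ (the case $x^*\succ_N y$), or one of the other two strictness conditions forces one of the outer inequalities to be strict, then the whole chain is strict and $x^*\succ_{N'}y$.

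To prove $\varphi^{N'}_{x^*y}\ge\varphi^{N}_{x^*y}$ I would argue on the flow side: take a maximum flow $f\in\mathcal F(N,x^*,y)$ with $\varphi(f)=\varphi^N_{x^*y}$. By condition (a) we have $c'(x^*,u)\ge c(x^*,u)\ge f(x^*,u)$ for every $u$, and by (b) $c'(u,v)=c(u,v)\ge f(u,v)$ whenever $u,v\ne x^*$; the only capacities that may have \emph{decreased} are those of the form $c(u,x^*)$, i.e.\ arcs entering $x^*$. But for a flow \emph{from} $x^*$, one may assume without loss of generality that no flow enters the source, i.e.\ $f(u,x^*)=0$ for all $u$ (any flow can be modified, by cancelling flow along the incoming arcs while preserving conservation and not decreasing the value, to have this property — alternatively one invokes the standard fact that a maximum flow from $s$ can be chosen with no flow into $s$). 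With such an $f$, every arc constraint $f\le c'$ holds, so $f\in\mathcal F(N',x^*,y)$, whence $\varphi^{N'}_{x^*y}\ge\varphi(f)=\varphi^N_{x^*y}$. For the dual inequality $\varphi^{N'}_{yx^*}\le\varphi^N_{yx^*}$ I would argue on the cut side using the Maxflow–Mincut Theorem: let $S$ be a minimum cut from $y$ to $x^*$ in $N$, so $y\in S$, $x^*\in S^c$, and $\varphi^N_{yx^*}=c^N(S)$. The same $S$ is a cut from $y$ to $x^*$ in $N'$, and comparing capacities arc by arc across $S$: arcs from $S$ to $S^c$ not touching $x^*$ are unchanged by (b); the arcs of the form $(u,x^*)$ with $u\in S$ have $c'(u,x^*)\le c(u,x^*)$ by (a); and since $x^*\notin S$, no arc of the form $(x^*,v)$ crosses from $S$ to $S^c$, so the increases in (a) never enter the sum. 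Hence $c^{N'}(S)\le c^N(S)=\varphi^N_{yx^*}$, and by \eqref{duality} applied to $N'$ (or Maxflow–Mincut), $\varphi^{N'}_{yx^*}\le c^{N'}(S)\le\varphi^N_{yx^*}$.

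Finally I would handle the strict conclusion by checking that each of the three listed triggers makes the chain strict: if $x^*\succ_N y$ then $\varphi^N_{x^*y}>\varphi^N_{yx^*}$, so the middle link is strict; if $c'(x^*,y)>c(x^*,y)$ then, since the path $x^*y$ alone carries one more unit, one upgrades the flow argument to $\varphi^{N'}_{x^*y}>\varphi^N_{x^*y}$ (formally: take the assumed max flow with no flow into $x^*$ and add one unit along the arc $(x^*,y)$, which is legal since capacity on that arc strictly increased and conservation at intermediate vertices is untouched); if $c'(y,x^*)<c(y,x^*)$ then the minimum cut $S$ above contains $y$ and $x^*\notin S$, so the arc $(y,x^*)$ is one of the crossing arcs whose capacity strictly dropped, giving $c^{N'}(S)<c^N(S)$ and hence $\varphi^{N'}_{yx^*}<\varphi^N_{yx^*}$. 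In every case the chain $\varphi^{N'}_{x^*y}\ge\varphi^N_{x^*y}\ge\varphi^N_{yx^*}\ge\varphi^{N'}_{yx^*}$ has at least one strict inequality, so $\varphi^{N'}_{x^*y}>\varphi^{N'}_{yx^*}$, i.e.\ $x^*\succ_{N'}y$. The one delicate point, and the step I expect to require the most care, is the reduction to a maximum flow from $x^*$ with no flow entering $x^*$; everything else is bookkeeping with \eqref{duality} and the Maxflow–Mincut Theorem already available in the excerpt.
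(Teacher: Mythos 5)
Your proposal is correct and follows essentially the same route as the paper's proof: a maximum flow from $x^*$ with no flow entering $x^*$ (feasible in $N'$, and augmented along the arc $(x^*,y)$ when that capacity strictly increases) combined with a minimum cut from $y$ to $x^*$ whose crossing capacities can only decrease. The only difference is organizational --- the paper packages everything into the single inequality $\varphi^{N'}_{x^*y}\ge \varphi^{N'}_{yx^*}+\left(c(y,x^*)-c'(y,x^*)\right)+\left(c'(x^*,y)-c(x^*,y)\right)$, whereas you handle the three strictness triggers case by case --- which changes nothing of substance.
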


\begin{proof}Let $y^*\in V\setminus\{x^*\}$ such that $x^*\succeq_{N} y^*$, that is $\varphi_{x^*y^*}^{N}\ge \varphi_{y^*x^*}^{N}$. We want to show that $\varphi_{x^*y^*}^{N'}\ge \varphi_{y^*x^*}^{N'}$.
 We prove first that $\varphi_{x^*y^*}^{N'}\ge \varphi_{x^*y^*}^{N}+c'(x^*,y^*)-c(x^*,y^*).$ Let $f$ be a maximum flow from $x^*$ to $y^*$ in $N$ such that
 \begin{equation}\label{ca}
 \mbox{for every } x\in V\setminus \{x^*\}, \;f(x,x^*)=0.
 \end{equation}
 Such a maximum  flow  can be obtained as the output of the well-known augmenting path algorithm applied to the null flow.
Then we have
\begin{equation}\label{flusso}
\varphi^{N}(f)= \varphi^N_{x^*y^*}= \sum_{y\in V\setminus \{x^*\}}f(x^*,y).
\end{equation}
 Consider now $f':A\to\mathbb{N}_0$ defined by
\[
f'(x,y)=\left\{
\begin{array}{ll}
f(x^*,y^*)+ c'(x^*,y^*)-c(x^*,y^*) &\mbox{if }(x,y)= (x^*,y^*)\\
\vspace{-2mm}\\
f(x,y) &\mbox{if }(x,y)\neq(x^*,y^*)\\
\end{array}
\right.
\]
We show that $f'\in \mathcal{F}(N',x^*,y^*)$. By $(a)$ we have $f'(x,y)\geq 0$ for all $(x,y)\in A$. To see that $f'(x,y)\leq c'(x,y)$ for all $(x,y)\in A$, we need to distinguish  some cases. Surely $f'(x^*,y^*)=f(x^*,y^*)+ c'(x^*,y^*)-c(x^*,y^*)\leq c'(x^*,y^*)$ because $f(x^*,y^*)\leq c(x^*,y^*).$ Let $(x,y)\in A\setminus \{(x^*,y^*)\}$. If $x=x^*$ and $y\neq y^*$ then, by $(a)$, we have
 $f'(x^*,y)=f(x^*,y)\leq c(x^*,y)\leq c'(x^*,y).$
Assume next $x\neq x^*$. If $y=x^*$ then, by \eqref{ca},
$f'(x,y)=f(x,x^*)=0\leq  c'(x,x^*).$
If instead $y\neq x^*$ then, by $(b)$, we have
$f'(x,y)=f(x,y)\leq c(x,y)=c'(x,y).$
We are left to show \eqref{conservation}. Fix $u\not\in \{x^*, y^*\}$. We need to show
\begin{equation}\label{original}
\sum_{z\in V\setminus \{u\}}f'(u,z)=\sum_{z\in V\setminus \{u\}}f'(z,u).
\end{equation}
Since $u\not\in \{x^*, y^*\}$, by $f'$ definition, \eqref{original} is the same as
$
\sum_{(u, z)\in A}f(u,z)=\sum_{(z, u)\in A}f(z,u)
$
which holds true because $f\in\mathcal{F}(N,x^*,y^*)$.
Now, using \eqref{flusso}, we get
\[
\varphi^{N'}(f')=\sum_{y\in V\setminus \{x^*\}}f'(x^*,y)-\sum_{y\in V\setminus \{x^*\}}f'(y,x^*)
\]
\[
=f'(x^*,y^*)+\sum_{y\in V\setminus \{x^*,y^*\}}f'(x^*,y)-\sum_{y\in V\setminus \{x^*,y^*\}}f(y,x^*)
\]
\[
=f(x^*,y^*)+ c'(x^*,y^*)-c(x^*,y^*)+ \sum_{y\in V\setminus \{x^*,y^*\}}f(x^*,y)
=\varphi_{x^*y^*}^N+c'(x^*,y^*)-c(x^*,y^*).
\]
 As a consequence,
\begin{equation}\label{in1}
\varphi_{x^*y^*}^{N'}\ge \varphi^{N'}(f')=\varphi_{x^*y^*}^{N}+c'(x^*,y^*)-c(x^*,y^*).
\end{equation}
Consider now a minimum cut $S$ from $y^*$ to $x^*$ in $N$. Then we have $c(S)=\varphi_{y^*x^*}^{N}.$ Of course, $S$ is also a cut  from $y^*$ to $x^*$ in $N'$ and then
\begin{equation}\label{in3}
c'(S)\ge \varphi_{y^*x^*}^{N'}.
\end{equation}
Moreover we have
\[
c'(S)=\sum_{u\in S,v\in S^c}c'(u,v)=\sum_{u\in S, v\in S^c\setminus \{x^*\}}c'(u,v)+\sum_{u\in S\setminus\{y^*\}}c'(u,x^*)+ c'(y^*,x^*).
\]
Note now that, by the assumptions $(a)$ and $(b)$, we have
\[
\sum_{u\in S, v\in S^c\setminus \{x^*\}}c'(u,v)= \sum_{u\in S, v\in S^c\setminus \{x^*\}}c(u,v),
\qquad
\sum_{u\in S\setminus\{y^*\}}c'(u,x^*)\le
\sum_{u\in S\setminus\{y^*\}}c(u,x^*)
\]
and $c'(y^*,x^*)\le c(y^*,x^*).$
As a consequence,
\begin{equation}\label{in4}
c'(S)+c(y^*,x^*)-c'(y^*,x^*)\le c(S).
\end{equation}
Using now the inequalities  \eqref{in1}, \eqref{in3} and \eqref{in4} and recalling that $\varphi^N_{x^*y^*}\ge \varphi^N_{y^*x^*}$, we obtain
\[
\varphi_{x^*y^*}^{N'}\ge \varphi_{x^*y^*}^{N}+c'(x^*,y^*)-c(x^*,y^*)\ge \varphi_{y^*x^*}^{N}+c'(x^*,y^*)-c(x^*,y^*)
\]
\[
=c(S)+c'(x^*,y^*)-c(x^*,y^*)\ge  c'(S)+\left(c(y^*,x^*)-c'(y^*,x^*)\right)+\left(c'(x^*,y^*)-c(x^*,y^*)\right)
\]
\[
\ge \varphi_{y^*x^*}^{N'}+\left(c(y^*,x^*)-c'(y^*,x^*)\right)+\left(c'(x^*,y^*)-c(x^*,y^*)\right),
\]
and thus
\[
\varphi_{x^*y^*}^{N'}
\ge \varphi_{y^*x^*}^{N'}+\left(c(y^*,x^*)-c'(y^*,x^*)\right)+\left(c'(x^*,y^*)-c(x^*,y^*)\right).
\]
In particular, $\varphi_{x^*y^*}^{N'}
\ge \varphi_{y^*x^*}^{N'}$ so that the proof of the first part of the theorem is complete.

Consider now $y^*\in V\setminus\{x^*\}$ such that $x^*\succeq_{N} y^*$ and
one among $x^*\succ_N y^*$, $c'(x^*,y^*)> c(x^*,y^*)$ and $c'(y^*,x^*)< c(y^*,x^*)$ holds true.
The same argument as above gives  $\varphi_{x^*y^*}^{N'}
> \varphi_{y^*x^*}^{N'}$ which completes the proof.
\end{proof}

We now observe that reversing a network, the flow relation gets its reversal. We call such a property reversal symmetry due to its similarity
to the concept of reversal symmetry for social welfare functions proposed by Saari (1994) and recently deepened by Bubboloni and Gori (2015). Note also that this property is called inversion by Gonz\'alez-D\'iaz et al. (2014). The proof of this fact, formalized in the proposition below, is very easy and thus omitted.

\begin{proposition}\label{Reversal}
Let $N \in\mathcal{N}$. Then, for every $(x,y)\in A$, we have $\varphi_{yx}^{N^r}=\varphi_{xy}^{N}$.
In particular, $\mathfrak{F}(N^r)=\mathfrak{F}(N)^r$.
\end{proposition}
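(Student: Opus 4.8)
The plan is to establish the key equality $\varphi_{yx}^{N^r}=\varphi_{xy}^{N}$ for every $(x,y)\in A$; once this is done, the statement $\mathfrak{F}(N^r)=\mathfrak{F}(N)^r$ follows immediately by unwinding the definitions. Indeed, for distinct $x,y\in V$ we have $(x,y)\in\mathfrak{F}(N^r)$ iff $\varphi_{xy}^{N^r}\ge\varphi_{yx}^{N^r}$, which by the equality becomes $\varphi_{yx}^{N}\ge\varphi_{xy}^{N}$, i.e. $(y,x)\in\mathfrak{F}(N)$, which is exactly $(x,y)\in\mathfrak{F}(N)^r$; and the diagonal $\Delta$ is contained in both relations.

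For the equality itself, the cleanest route is through cuts and the Maxflow-Mincut Theorem, mirroring the style of the proof of Proposition \ref{neut-F}. I would observe that a set $S\subseteq V$ with $y\in S$, $x\in S^c$ is a cut from $y$ to $x$, and compute its capacity in $N^r$: by definition of $c^r$,
\[
c^{N^r}(S)=\sum_{u\in S,\,v\in S^c}c^r(u,v)=\sum_{u\in S,\,v\in S^c}c(v,u)=\sum_{v\in S^c,\,u\in S}c(v,u)=c^{N}(S^c),
\]
and $S^c$ is a cut from $x$ to $y$ in $N$ since $x\in S^c$, $y\in S$. Thus the map $S\mapsto S^c$ is a bijection between $\mathcal{C}(N^r,y,x)$ and $\mathcal{C}(N,x,y)$ preserving capacities, so the two minima coincide; by Maxflow-Mincut, $\varphi_{yx}^{N^r}=\min_{S\in\mathcal{C}(N^r,y,x)}c^{N^r}(S)=\min_{S'\in\mathcal{C}(N,x,y)}c^{N}(S')=\varphi_{xy}^{N}$.

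There is essentially no obstacle here: the argument is a routine bookkeeping check that reversing all arcs and complementing the cut-set are mutually inverse operations, which is precisely why the authors call the proof "very easy and thus omitted." The only point requiring the tiniest bit of care is to get the roles of $S$ and $S^c$ straight (a cut from $y$ to $x$ in $N^r$ corresponds to a cut from $x$ to $y$—not from $y$ to $x$—in $N$), but this is immediate from the definitions. An alternative, equally short, would be to exhibit a value-preserving bijection $f\mapsto f^r$ between $\mathcal{F}(N^r,y,x)$ and $\mathcal{F}(N,x,y)$ given by $f^r(u,v)=f(v,u)$, checking that capacity constraints and the conservation equations \eqref{conservation} transform correctly and that $\varphi^{N^r}(f)=\varphi^{N}(f^r)$; I would pick whichever of the two the referee finds more transparent, but the cut-based version is marginally shorter.
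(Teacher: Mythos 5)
Your proof is correct: the capacity-preserving bijection $S\mapsto S^c$ between $\mathcal{C}(N^r,y,x)$ and $\mathcal{C}(N,x,y)$ together with the Maxflow-Mincut Theorem gives $\varphi_{yx}^{N^r}=\varphi_{xy}^{N}$, and the deduction of $\mathfrak{F}(N^r)=\mathfrak{F}(N)^r$ from it is handled properly, including the diagonal. The paper omits its proof as ``very easy,'' and your argument is exactly the routine verification intended, in the same style as the cut manipulations in the proofs of Propositions \ref{neut-F} and \ref{flat}.
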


The next proposition shows that if in a competition the number of wins (losses) of a team is greater than the number of its losses (wins), then the flow network method does not make that team be the worst (best) one. Its proof immediately follows from Theorem 1 in Lov\'asz (1973) once adapted to the network context\footnote{For a more general approach, see also the main result in Hartmann and Schneider (1993).}, and from Proposition \ref{Reversal} after having observed that $o^{N^r}(x)=i^{N}(x)$ for all $x\in V$.

\begin{proposition}\label{szigeti}
Let $N \in\mathcal{N}$ and $x^*\in V$. If $o(x^*)>i(x^*)$,  then there exists $y\in V\setminus \{x^*\}$ such that $x^*\succ_{N}y$.  If $o(x^*)<i(x^*)$,  then there exists $y\in V\setminus \{x^*\}$ such that $y\succ_{N}x^*$.
\end{proposition}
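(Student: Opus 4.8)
The plan is to get the first assertion straight out of the cited theorem of Lov\'asz (1973), and then to deduce the second assertion from the first by passing to the reversal network.

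For the first assertion I would start by transcribing Lov\'asz's Theorem 1 into the language of networks. Identify the network $N=(V,A,c)$ with the directed multigraph $D$ on the vertex set $V$ obtained by replacing each arc $(x,y)$ with $c(x,y)$ parallel copies of it. Under this identification the out- and in-degree of a vertex $x$ in $D$ are precisely $o(x)$ and $i(x)$, and, for distinct $x,y\in V$, the maximum number of arc-disjoint paths from $x$ to $y$ in $D$ is exactly $\lambda_{xy}$, hence, by \eqref{JG}, it equals $\varphi_{xy}$. Lov\'asz's theorem, applied to $D$, then says exactly that whenever $o(x^*)>i(x^*)$ there is a vertex $y\in V\setminus\{x^*\}$ with $\varphi_{x^*y}>\varphi_{yx^*}$; and $\varphi_{x^*y}>\varphi_{yx^*}$ is by definition $x^*\succ_N y$, as wanted. (The footnote's reference to Hartmann and Schneider (1993) gives a more general route to the same fact.)

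For the second assertion I would argue by reversal. Suppose $o(x^*)<i(x^*)$ and consider $N^r=(V,A,c^r)$. Since $c^r(x,y)=c(y,x)$ for all $(x,y)\in A$, we get $o^{N^r}(x^*)=\sum_{y\in V\setminus\{x^*\}}c^r(x^*,y)=\sum_{y\in V\setminus\{x^*\}}c(y,x^*)=i^N(x^*)$ and, in the same way, $i^{N^r}(x^*)=o^N(x^*)$. Hence $o^{N^r}(x^*)=i^N(x^*)>o^N(x^*)=i^{N^r}(x^*)$, so the first assertion, applied to the network $N^r$, produces some $y\in V\setminus\{x^*\}$ with $x^*\succ_{N^r}y$, that is, $\varphi^{N^r}_{x^*y}>\varphi^{N^r}_{yx^*}$. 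By Proposition \ref{Reversal} we have $\varphi^{N^r}_{x^*y}=\varphi^N_{yx^*}$ and $\varphi^{N^r}_{yx^*}=\varphi^N_{x^*y}$, so $\varphi^N_{yx^*}>\varphi^N_{x^*y}$, i.e.\ $y\succ_N x^*$, which is what the second assertion claims.

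I do not expect any real obstacle: the whole combinatorial content sits in Lov\'asz's theorem, and the only thing that needs care on our side is the dictionary between networks with integer capacities and directed multigraphs --- capacities versus parallel arcs, and maximum flow value versus maximum number of arc-disjoint paths through \eqref{JG} --- so that the hypothesis $o(x^*)>i(x^*)$ and the conclusion about the $\varphi$'s translate correctly. The reduction of the second assertion to the first via $N^r$ is purely formal and uses nothing beyond Proposition \ref{Reversal} and the elementary degree identities above.
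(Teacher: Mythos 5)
Your argument is correct and is essentially the paper's own: the first assertion is obtained by adapting Lov\'asz's Theorem 1 to networks (your dictionary via parallel arcs and the identity $\varphi_{xy}=\lambda_{xy}$ from \eqref{JG} just makes the adaptation explicit), and the second follows from the first applied to $N^r$ together with Proposition \ref{Reversal} and the observation $o^{N^r}(x)=i^{N}(x)$. No gaps.
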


A network method is said {\it flat} on a network $N$ if it associates with $N$ the total relation $V^2$.
A network method is said {\it symmetric} if it is flat on every network $N$ such that $N=N^r$. The symmetry property is a basic requirement for network methods invoked by Gonz\'alez-D\'iaz et al. (2014). The next proposition fully characterizes the set of networks on which $\mathfrak{F}$ is flat. Generalizing the classic definition given for directed graphs, we define $N\in \mathcal{N}$ {\it pseudo-symmetric} if, for every $x\in V$, $o(x)=i(x)$.

\begin{proposition}\label{flat}Let  $N \in\mathcal{N}$.
$\mathfrak{F}$ is  flat on $N$ if and only if $N$ is pseudo-symmetric. In particular, $\mathfrak{F}$ is a symmetric network method.\footnote{We thank L\'aszl\'o Lov\'asz for useful advices about this proposition.}
\end{proposition}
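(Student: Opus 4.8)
The plan is to prove the two implications of the equivalence separately and then read off symmetry as an immediate corollary. Throughout, note that $\mathfrak{F}$ being flat on $N=(V,A,c)$ means $\mathfrak{F}(N)=V^2$, which by the definition of the flow relation is the same as $\varphi^N_{xy}=\varphi^N_{yx}$ for all $(x,y)\in A$.

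For the implication ``pseudo-symmetric $\Rightarrow$ flat'', the key observation I would use is that $N$ pseudo-symmetric says exactly that $c$, viewed as a function $A\to\mathbb{N}_0$, satisfies the conservation law \eqref{conservation} at \emph{every} vertex; that is, $c$ is a circulation. I would then fix distinct $x,y\in V$, take a maximum flow $f$ from $x$ to $y$ (so $0\le f\le c$ pointwise, and the net outflow of $f$ is $\varphi^N_{xy}$ at $x$, $-\varphi^N_{xy}$ at $y$, and $0$ at every other vertex), and set $g=c-f$. Subtracting the conservation identities for $c$ and for $f$ shows that $g$ satisfies \eqref{conservation} at every vertex other than $x$ and $y$; clearly $0\le g\le c$; and the net outflow of $g$ is $-\varphi^N_{xy}$ at $x$ and $\varphi^N_{xy}$ at $y$. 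Hence $g\in\mathcal{F}(N,y,x)$ with value $\varphi^N_{xy}$, so $\varphi^N_{yx}\ge\varphi^N_{xy}$; exchanging the roles of $x$ and $y$ gives the reverse inequality, hence equality, and since $x,y$ are arbitrary this yields $\mathfrak{F}(N)=V^2$. This ``circulation decomposition'' is the one step carrying genuine content, and the care needed there is the routine but not entirely trivial verification that $g$ is a legitimate flow from $y$ to $x$ of the claimed value.

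For ``flat $\Rightarrow$ pseudo-symmetric'' I would argue the contrapositive. If $N$ is not pseudo-symmetric, choose $x^*\in V$ with $o(x^*)\neq i(x^*)$; if $o(x^*)>i(x^*)$, Proposition \ref{szigeti} supplies $y\in V\setminus\{x^*\}$ with $x^*\succ_N y$, and if $o(x^*)<i(x^*)$ it supplies $y$ with $y\succ_N x^*$. In either case there are distinct vertices $u,v$ with $u\succ_N v$, hence $(v,u)\notin\mathfrak{F}(N)$, so $\mathfrak{F}(N)\neq V^2$ and $\mathfrak{F}$ is not flat on $N$. This completes the equivalence.

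Finally, for the symmetry assertion: if $N=N^r$ then $c(x,y)=c^r(x,y)=c(y,x)$ for every $(x,y)\in A$, so $o(x)=i(x)$ for all $x\in V$ and $N$ is pseudo-symmetric; by the equivalence just proved, $\mathfrak{F}$ is flat on $N$, which is exactly what symmetry of $\mathfrak{F}$ requires. The only real obstacle in the whole argument is the flow construction in the first implication; the rest is either a direct invocation of Proposition \ref{szigeti} or an unwinding of definitions.
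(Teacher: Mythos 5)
Your proof is correct, and the reverse implication and the symmetry claim coincide with the paper's argument (contrapositive via Proposition \ref{szigeti}, and $N=N^r$ forcing $o(x)=i(x)$). Where you genuinely diverge is the forward implication. The paper argues on the cut side: pseudo-symmetry gives $c(S)=c(S^c)$ for every proper nonempty $S\subseteq V$ (by summing $o(x)-i(x)=0$ over $x\in S$), and then the Maxflow--Mincut Theorem together with the complementation bijection $\mathcal{C}(N,x,y)\to\mathcal{C}(N,y,x)$, $S\mapsto S^c$, yields $\varphi_{xy}=\varphi_{yx}$. You argue on the flow side: pseudo-symmetry says exactly that $c$ is itself a circulation, so subtracting a maximum flow $f\in\mathcal{F}(N,x,y)$ from $c$ produces $g=c-f$ with $0\le g\le c$, conservation at all vertices off $\{x,y\}$, and net outflow $\varphi_{xy}$ at $y$, i.e.\ $g\in\mathcal{F}(N,y,x)$ with $\varphi(g)=\varphi_{xy}$, whence $\varphi_{yx}\ge\varphi_{xy}$ and equality by exchanging roles. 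Both routes are sound; yours avoids invoking Maxflow--Mincut for this direction (only the definition of flows and of $\varphi_{yx}$ as a maximum is used, and integrality of $g$ is automatic since the paper's flows are integer-valued), at the price of the bookkeeping needed to verify that $g$ is a legitimate flow of the claimed value, while the paper's cut identity $c(S)=c(S^c)$ is a one-line computation once one decides to work with cuts but leans on the Maxflow--Mincut Theorem to transfer it to flow values.
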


\begin{proof} Let $N \in\mathcal{N}$ be pseudo-symmetric. We want to show that, for every $x,y\in V$ with $x\neq y$, we have $\varphi_{yx}^{N}=\varphi_{xy}^{N}.$
To that purpose we first show that, for every proper nonempty subset $S$ of $V$, we have $c(S)=c(S^c)$\footnote{That fact seems to be known in the literature even though we could not find a precise reference. Thus, for completeness, we provide a proof.}. Let $S\subseteq V$ with $\varnothing\neq S\neq V$.  Given $x\in S$, we have that
\[
o(x)=\sum_{y\in V\setminus\{x\}}c(x,y)=\sum_{y\in S^c}c(x,y)+\sum_{y\in S\setminus\{x\}}c(x,y)
\]
while
\[
i(x)=\sum_{y\in V\setminus\{x\}}c(y,x)=\sum_{y\in S^c}c(y,x)+\sum_{y\in S\setminus\{x\}}c(y,x).
\]
Thus, we have
\[
\sum_{x\in  S}o(x)=\sum_{x\in  S}\sum_{y\in S^c}c(x,y)+\sum_{x\in  S}\sum_{y\in S\setminus\{x\}}c(x,y)=c(S)+\sum_{(x,y)\in A\cap (S\times S)}c(x,y)
\]
and
\[
\sum_{x\in  S}i(x)=\sum_{x\in  S}\sum_{y\in S^c}c(y,x)+\sum_{x\in  S}\sum_{y\in S\setminus\{x\}}c(y,x)=c(S^c)+\sum_{(x,y)\in A\cap (S\times S)}c(y,x)
\]
\[
=c(S^c)+\sum_{(x,y)\in A\cap (S\times S)}c(x,y).
\]
Since, for every $x\in S$, we have $o(x)=i(x)$, it follows that
\[0=\sum_{x\in  S}(o(x)-i(x))=c(S)-c(S^c).
\]
Let now $x,y\in V$ with $x\neq y$. The map from $\mathcal{C}(N,x,y)$ to $\mathcal{C}(N,y,x)$ associating with $S\in \mathcal{C}(N,x,y)$ the set $S^c\in \mathcal{C}(N,y,x)$ is a bijection. It follows that
\[\varphi_{xy}=\min_{S\in \mathcal{C}(N,x,y)} c(S)=\min_{S\in \mathcal{C}(N,x,y)} c(S^c)=\min_{T\in \mathcal{C}(N,y,x)} c(T)=\varphi_{yx}.
\]

Let next $\mathfrak{F}$ be  flat on $N$. Assume, by contradiction, that there exists $x^*\in V$ with $o(x^*)\neq i(x^*).$ Then, by Proposition \ref{szigeti}, there exists $y\in V$ such that $y\not \sim_N x^*$ against $\mathfrak{F}(N)=V^2.$

Finally, to show that $\mathfrak{F}$ is symmetric, note that if $N\in \mathcal{N}$ is such that  $N=N^r$ then, for every $x\in V$, $o^N(x)=i^{N^r}(x)=i^N(x)$ and thus $N$ is pseudo-symmetric.
\end{proof}

Given two networks $N=(V,A,c)$ and $N'=(V,A,c')$, we define $N+N'=(V,A,c+c')\in\mathcal{N}$. From Proposition \ref{flat}, we immediately obtain for the flow network method the flatness preservation property introduced by Gonz\'alez-D\'iaz et al. (2014).

\begin{proposition}
Let $N,N'\in\mathcal{N}$. If $\mathfrak{F}$ is flat on $N$ and $N'$, then $\mathfrak{F}$ is flat on $N+N'$.
\end{proposition}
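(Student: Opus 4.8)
The plan is to reduce everything to the characterization of flatness established in Proposition \ref{flat}, namely that $\mathfrak{F}$ is flat on a network if and only if that network is pseudo-symmetric. Consequently, the only thing that needs to be checked is that the sum of two pseudo-symmetric networks is again pseudo-symmetric.

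First I would record the obvious additivity of outdegree and indegree with respect to the operation $+$ on networks: for $N=(V,A,c)$, $N'=(V,A,c')$ and any $x\in V$ one has
\[
o^{N+N'}(x)=\sum_{y\in V\setminus\{x\}}(c+c')(x,y)=\sum_{y\in V\setminus\{x\}}c(x,y)+\sum_{y\in V\setminus\{x\}}c'(x,y)=o^N(x)+o^{N'}(x),
\]
and, in the same way, $i^{N+N'}(x)=i^N(x)+i^{N'}(x)$. (Here I would also note in passing that $N+N'$ is indeed a network, since $c+c'$ maps $A$ into $\mathbb{N}_0$.)

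Now assume $\mathfrak{F}$ is flat on $N$ and on $N'$. By Proposition \ref{flat}, both $N$ and $N'$ are pseudo-symmetric, that is, $o^N(x)=i^N(x)$ and $o^{N'}(x)=i^{N'}(x)$ for every $x\in V$. Combining this with the additivity just observed, for every $x\in V$ we get $o^{N+N'}(x)=o^N(x)+o^{N'}(x)=i^N(x)+i^{N'}(x)=i^{N+N'}(x)$, so $N+N'$ is pseudo-symmetric. A final application of Proposition \ref{flat} then yields that $\mathfrak{F}$ is flat on $N+N'$.

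There is essentially no obstacle in this argument: the whole content is the reduction to Proposition \ref{flat} together with the trivial remark that out- and in-degrees add up under $+$. The only step deserving a line of care is verifying that $N+N'$ belongs to $\mathcal{N}$, which is immediate.
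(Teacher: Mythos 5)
Your proof is correct and follows exactly the route the paper intends: the paper derives this proposition immediately from Proposition \ref{flat}, i.e.\ flatness is equivalent to pseudo-symmetry and out- and in-degrees are additive under the sum of networks, which is precisely your argument.
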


\subsection{The flow, the Borda and the dual Borda network methods}

Let us consider now the Borda network method $\mathfrak{B}:\mathcal{N}\to\mathbf{R}$ and the dual Borda network method $\widehat{\mathfrak{B}}:\mathcal{N}\to\mathbf{R}$ respectively defined, for every $N\in\mathcal{N}$, by
$\mathfrak{B}(N)=R(o^N)$ and $\widehat{\mathfrak{B}}(N)=R(-i^N).$ In other words, $\mathfrak{B}(N)$ and $\widehat{\mathfrak{B}}(N)$ are the relations respectively induced by the outdegree function and by the opposite of the indegree function associated with the network $N$. Note that $\mathfrak{B}(N),  \widehat{\mathfrak{B}}(N)\in \mathbf{O}$ and that $\widehat{\mathfrak{B}}(N)=(R(i^N))^r$.
Our purpose is to describe some links among the flow, the Borda and the dual Borda network methods.

We begin with the next lemma about balanced networks. It was stated, without proof, by Gvozdik (1987, Proposition 1) and Belkin and Gvozdik (1989, Lemma 1).

\begin{proposition}\label{BeGv-lemma1}
Let $N=(V,A,c)\in \mathcal{B}$. Then, for every $x,y\in V^2$ with $x\neq y$, $\varphi_{xy}-\varphi_{yx}=o(x)-o(y)$.
\end{proposition}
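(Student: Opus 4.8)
The plan is to work with a $k$-balanced network $N\in\mathcal{B}_k$, so that $c(x,y)+c(y,x)=k$ for every $(x,y)\in A$, and to compute $\varphi_{xy}$ via a suitably chosen cut together with the Maxflow-Mincut Theorem. First I would fix distinct $x,y\in V$ and, for an arbitrary cut $S\in\mathcal{C}(N,x,y)$ (so $x\in S$, $y\in S^c$), rewrite $c(S)+c(S^c)$. Since every ordered pair $(u,v)$ with $u\in S$, $v\in S^c$ contributes $c(u,v)$ to $c(S)$ and the opposite pair contributes $c(v,u)$ to $c(S^c)$, and $c(u,v)+c(v,u)=k$, I get $c(S)+c(S^c)=k\,|S|\,|S^c|$, a quantity that depends only on $|S|$. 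Writing $j=|S|$, this gives $c(S^c)=k\,j(n-j)-c(S)$.

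The next step is to relate $c(S)$ and $c(S^c)$ to outdegrees. Summing outdegrees over $S$, as is essentially done in the proof of Proposition \ref{flat}, yields
\[
\sum_{u\in S}o(u)=c(S)+\sum_{(u,v)\in A\cap(S\times S)}c(u,v),
\]
and the internal sum equals $k\binom{|S|}{2}$ by $k$-balancedness, hence $c(S)=\sum_{u\in S}o(u)-k\binom{j}{2}$; similarly $c(S^c)=\sum_{u\in S^c}o(u)-k\binom{n-j}{2}$. Then I would single out the cuts $S_x=\{x\}$ and $S_y=V\setminus\{y\}$: these satisfy $c(S_x)=o(x)$ and, by the relation just derived applied to $S_y$ (whose complement is $\{y\}$), $c(S_y)=c((S_y)^c)+\bigl(k\,1\cdot(n-1)-2c((S_y)^c)\bigr)$... more directly, $c(V\setminus\{y\})=k(n-1)-c(\{y\})=k(n-1)-o(y)$ using $c(S)+c(S^c)=k j(n-j)$ with $j=n-1$. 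Thus both $o(x)$ and $k(n-1)-o(y)$ are capacities of cuts from $x$ to $y$, so $\varphi_{xy}\le\min\{o(x),\,k(n-1)-o(y)\}$, and symmetrically $\varphi_{yx}\le\min\{o(y),\,k(n-1)-o(x)\}$.

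The heart of the argument, and the step I expect to be the main obstacle, is the reverse inequality: showing that $\varphi_{xy}$ actually equals $o(x)$ when $o(x)\le o(y)$ (equivalently $o(x)\le k(n-1)-o(y)$ fails or holds — one must check which of the two bounds is the binding one, and the cleaner route is to prove $\varphi_{xy}-\varphi_{yx}=o(x)-o(y)$ directly). For this I would argue that \emph{every} minimum cut $S$ from $x$ to $y$ satisfies $\varphi_{xy}=c(S)=\sum_{u\in S}o(u)-k\binom{|S|}{2}$, and likewise $\varphi_{yx}=c(T)$ for a minimum cut $T$ from $y$ to $x$; combining with $c(S)+c(S^c)=k|S||S^c|$ and the bijection $S\mapsto S^c$ between $\mathcal{C}(N,x,y)$ and $\mathcal{C}(N,y,x)$, one obtains
\[
\varphi_{xy}+\varphi_{yx}=\min_{S\in\mathcal{C}(N,x,y)}c(S)+\min_{S\in\mathcal{C}(N,x,y)}c(S^c),
\]
and it remains to show the two minima are attained at the \emph{same} $S$ — for which I would invoke the submodularity of the cut function $S\mapsto c(S)$ (standard for networks): if $S_1$ minimizes $c(S)$ and $S_2$ minimizes $c(S^c)=kj(n-j)-c(S)$, i.e. $S_2$ maximizes $c(S)-k|S|(n-|S|)$... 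Actually the cleanest finish: pick $S=\{x\}$, which is a cut from $x$ to $y$; I claim it is a minimum cut. One shows $c(\{x\})\le c(S)$ for all cuts $S$ from $x$ to $y$ by using $c(S)=o(x)+\sum_{u\in S\setminus\{x\}}\bigl(o(u)-\text{(stuff)}\bigr)$ — here the balanced structure forces $\sum_{u\in S,v\in S^c}c(u,v)\ge \sum_{v\in S^c}c(x,v)=o(x)$ because for each $u\in S\setminus\{x\}$ the contribution $\sum_{v\in S^c}c(u,v)$ is at least... this is precisely where care is needed, and the honest statement is that one deduces $\varphi_{xy}=o(x)$ and $\varphi_{yx}=o(y)$ are \emph{not} generally true, so the right target is the stated difference. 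I would therefore establish the identity by showing $\{x\}$ and $\{y\}^c$ are simultaneously minimum cuts is false in general, and instead prove: for any minimum cut $S$ for $\varphi_{xy}$, $S^c$ is a minimum cut for $\varphi_{yx}$ — this follows because $c(S)$ minimal among $\mathcal{C}(N,x,y)$ and $c(S^c)=k|S||S^c|-c(S)$ with $|S|$ forced to a specific value at the optimum by a convexity/exchange argument. Granting that, $\varphi_{xy}+\varphi_{yx}=c(S)+c(S^c)=\bigl(\sum_{u\in S}o(u)-k\binom{|S|}{2}\bigr)+\bigl(\sum_{u\in S^c}o(u)-k\binom{|S^c|}{2}\bigr)$; taking $S=\{x\}$ gives $\varphi_{xy}+\varphi_{yx}=o(x)+o(y)+C$ for a constant $C=k\bigl(j(n-j)-\binom{n}{2}+\binom{j}{2}+\binom{n-j}{2}\bigr)$ which evaluates to $0$ by the identity $\binom{n}{2}=\binom{j}{2}+\binom{n-j}{2}+j(n-j)$. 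Hence $\varphi_{xy}+\varphi_{yx}=o(x)+o(y)$. Finally, since $o^{N^r}(z)=i^N(z)$ for all $z$ and $\varphi_{yx}^{N^r}=\varphi_{xy}^N$ by Proposition \ref{Reversal}, applying the just-proved identity to $N^r$ and $N$ together, combined with $o(x)+i(x)=k(n-1)$, pins down $\varphi_{xy}$ and $\varphi_{yx}$ individually and yields $\varphi_{xy}-\varphi_{yx}=o(x)-o(y)$.
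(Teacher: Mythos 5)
Your preparatory identities are fine ($c(S)+c(S^c)=k|S|\,|S^c|$, $c(S)=\sum_{u\in S}o(u)-k\binom{|S|}{2}$, $c(\{x\})=o(x)$, $c(V\setminus\{y\})=k(n-1)-o(y)$), but the heart of your argument rests on two claims that are false. First, the claim that for any minimum cut $S$ for $\varphi_{xy}$ the complement $S^c$ is a minimum cut for $\varphi_{yx}$: since $c(S^c)=k|S|(n-|S|)-c(S)$ and the term $k|S|(n-|S|)$ varies with $|S|$, minimizing $c(S)$ and minimizing $c(S^c)$ are genuinely different problems, and no convexity/exchange argument rescues this. Second, the derived identity $\varphi_{xy}+\varphi_{yx}=o(x)+o(y)$ (obtained by moreover evaluating at $S=\{x\}$, which you had already conceded need not be a minimum cut) is simply false. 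Concrete counterexample: $V=\{x,y,z\}$, $k=2$, $c(x,y)=2$, $c(y,x)=0$, $c(x,z)=2$, $c(z,x)=0$, $c(y,z)=c(z,y)=1$. Here $o(x)=4$, $o(y)=1$, $\varphi_{xy}=3$ (minimum cut $\{x,z\}$, not $\{x\}$), $\varphi_{yx}=0$ (minimum cut $\{y,z\}$), so $\varphi_{xy}+\varphi_{yx}=3\neq 5=o(x)+o(y)$, and the complement $\{y\}$ of the minimum cut $\{x,z\}$ has capacity $1\neq\varphi_{yx}$. (The proposition itself holds: $3-0=4-1$.) The final reversal step would not save you even if the sum identity were true: applying it to $N^r$ yields $\varphi_{xy}+\varphi_{yx}=i(x)+i(y)$, i.e.\ information about the same sum again, so nothing pins down the difference $\varphi_{xy}-\varphi_{yx}$.

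The paper's proof avoids all of this with a one-sided swap argument: take a minimum cut $S$ for $\varphi_{xy}$ and set $T=(S\setminus\{x\})\cup\{y\}$, which is a cut for $\varphi_{yx}$, so $\varphi_{xy}-\varphi_{yx}\ge c(S)-c(T)$; expanding both capacities and using $c(u,x)=k-c(x,u)$, $c(u,y)=k-c(y,u)$, the $S$-dependent terms cancel exactly and $c(S)-c(T)=o(x)-o(y)$. Exchanging the roles of $x$ and $y$ gives the reverse inequality, hence equality. If you want to repair your plan, this single-vertex exchange on a minimum cut is the missing idea; the global complementation $S\mapsto S^c$ is the wrong move because it changes $|S|$-dependent totals, whereas the swap $x\leftrightarrow y$ keeps $|S|$ fixed and lets balancedness cancel everything except the outdegree difference.
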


\begin{proof} It is enough to show that, for every $x,y\in V$ with $x\neq y$, $\varphi_{xy}-\varphi_{yx}\ge o(x)- o(y)$.
Let  $x,y\in V$ with $x\neq y$. Let $S$ be a minimum cut for $\varphi_{xy}$, so that $S\subseteq V$, $x\in S,\ y\notin S$ and $c(S)=\varphi_{xy}$. Define $T=(S\setminus \{x\})\cup\{y\}$ and note that, since $T\subseteq V$, $y\in T,\ x\notin T,$ we have that $T$ is a cut for $\varphi_{yx}$. By \eqref{duality}, we get $c(T)\ge \varphi_{yx}$. As a consequence $\varphi_{xy}-\varphi_{yx}\ge c(S)-c(T)$.
Note that
\[
c(S)=\sum_{u\in S,v\in S^c}c(u,v)=\sum_{u\in S\setminus\{x\}, v\in S^c\setminus\{y\}}c(u,v)+\sum_{v\in S^c}c(x,v)+\sum_{u\in S\setminus\{x\}}c(u,y)
\]
and
\[
c(T)=\sum_{u\in T,v\in T^c}c(u,v)=\sum_{u\in S\setminus\{x\}, v\in S^c\setminus\{y\}}c(u,v)+\sum_{v\in (S^c\setminus\{y\})\cup\{x\}}c(y,v)+\sum_{u\in S\setminus\{x\}}c(u, x).
\]
Let $k\in \mathbb{N}_0$ be the balance of $N.$ Then we have $c(u,x)=k-c(x,u)$ for all $u\in V\setminus \{x\}$ and $c(u,y)=k-c(y,u)$ for all $u\in V\setminus \{y\}.$ It follows that
\[
\varphi_{xy}-\varphi_{yx}\ge c(S)-c(T)=\sum_{v\in S^c}c(x,v) +\sum_{u\in S\setminus\{x\}}c(u,y) -\sum_{v\in (S^c\setminus\{y\})\cup\{x\}}c(y,v)-\sum_{u\in S\setminus\{x\}}c(u, x) =
\]
\[
\sum_{v\in S^c}c(x,v) +\sum_{u\in S\setminus\{x\}}k-\sum_{u\in S\setminus\{x\}}c(y,u) -\sum_{v\in (S^c\setminus\{y\})\cup\{x\}}c(y,v)-\sum_{u\in S\setminus\{x\}}k+\sum_{u\in S\setminus\{x\}}c( x,u)=
\]
\[
\sum_{u\in V\setminus\{x\}}c(x,u)-\sum_{u\in V\setminus\{x\}}c(y,u)=o(x)-o(y).
\]
\end{proof}

As already pointed out by Belkin and Gvozdik (1989), from Proposition \ref{BeGv-lemma1} we can immediately deduce that the flow and the Borda network methods agree on balanced networks. Indeed, we can also prove that on those networks they both agree with the dual Borda network method too.

\begin{proposition}\label{balan-ut}
Let $N\in\mathcal{B}$. Then  $\mathfrak{F}(N)=\mathfrak{B}(N)=\widehat{\mathfrak{B}}(N)$.
\end{proposition}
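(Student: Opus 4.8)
The plan is to deduce both equalities from Proposition \ref{BeGv-lemma1} together with one elementary observation about balanced networks. Recall that $\mathfrak{F}(N)=R(\varphi^N)$, $\mathfrak{B}(N)=R(o^N)$ and $\widehat{\mathfrak{B}}(N)=R(-i^N)$, and that for any two functions $\vartheta,\vartheta':A\to\mathbb{R}$ one has $R(\vartheta)=R(\vartheta')$ as soon as, for every $(x,y)\in A$, the inequality $\vartheta(x,y)\ge\vartheta(y,x)$ is equivalent to $\vartheta'(x,y)\ge\vartheta'(y,x)$ (the diagonal $\Delta$ being common to all such relations by definition). So it suffices to compare pairwise the functions $(x,y)\mapsto\varphi_{xy}$, $(x,y)\mapsto o(x)$ and $(x,y)\mapsto -i(x)$ in this sense.

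First I would handle $\mathfrak{F}(N)=\mathfrak{B}(N)$. Fix $(x,y)\in A$. By Proposition \ref{BeGv-lemma1}, $\varphi_{xy}-\varphi_{yx}=o(x)-o(y)$, hence $\varphi_{xy}\ge\varphi_{yx}$ holds precisely when $o(x)\ge o(y)$; this immediately gives $R(\varphi^N)=R(o^N)$, that is, $\mathfrak{F}(N)=\mathfrak{B}(N)$.

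Next I would treat $\mathfrak{B}(N)=\widehat{\mathfrak{B}}(N)$. Let $k\in\mathbb{N}_0$ be the balance of $N$. The key observation is that, for every $x\in V$,
\[
o(x)+i(x)=\sum_{y\in V\setminus\{x\}}\bigl(c(x,y)+c(y,x)\bigr)=(n-1)k,
\]
a quantity that does not depend on $x$. Consequently, for distinct $x,y\in V$, the inequality $o(x)\ge o(y)$ is equivalent to $(n-1)k-i(x)\ge(n-1)k-i(y)$, i.e. to $-i(x)\ge -i(y)$; therefore $R(o^N)=R(-i^N)$, that is, $\mathfrak{B}(N)=\widehat{\mathfrak{B}}(N)$. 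Chaining the two steps yields $\mathfrak{F}(N)=\mathfrak{B}(N)=\widehat{\mathfrak{B}}(N)$.

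I do not expect any real obstacle: the statement is a routine consequence of Proposition \ref{BeGv-lemma1} and of the fact that $o+i$ is constant on the vertices of a balanced network. The only mild point to keep in mind is the bookkeeping of the diagonal, which is harmless since $\Delta$ is by construction contained in $R(\vartheta)$ for every $\vartheta$.
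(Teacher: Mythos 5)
Your proof is correct and follows essentially the same route as the paper: the equality $\mathfrak{F}(N)=\mathfrak{B}(N)$ is read off from Proposition \ref{BeGv-lemma1}, and $\mathfrak{B}(N)=\widehat{\mathfrak{B}}(N)$ from the fact that $o(x)+i(x)=k(n-1)$ is constant on a balanced network. The extra remarks about comparing induced relations and about the diagonal are harmless bookkeeping that the paper leaves implicit.
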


\begin{proof} Let $N=(V,A,c)$. By Proposition \ref{BeGv-lemma1}, we immediately have that $\mathfrak{F}(N)=R(o)=\mathfrak{B}(N)$. On the other hand, if  $k\in\mathbb{N}_0$ is the balance of $N$, we have $o(x)+i(x)=k(n-1)$ for all $x\in V$. Thus, $\mathfrak{B}(N)=R(o)=R(-i)=\widehat{\mathfrak{B}}(N)$.
\end{proof}

Since for $n=2$, we have $\mathcal{N}=\mathcal{B}$, we immediately get the next corollary.

\begin{corollary}\label{cor-n2}
Let $n=2$. Then, for every $N\in\mathcal{N}$, $\mathfrak{F}(N)=\mathfrak{B}(N)=\widehat{\mathfrak{B}}(N)$.
\end{corollary}

The following Propositions \ref{suff-cond-FB}, \ref{suff-cond-FBhat}, \ref{cor-new} and Theorem \ref{OI} deepen the comparison among the flow, the Borda and the dual Borda network methods. In particular, they show that those methods agree on a set of networks larger than the one of the balanced networks. Of course there are instead cases in which the Borda and the dual Borda network methods do not coincide with the flow network method. This happens for every  $N\in \mathcal{N}$ for which $\mathfrak{F}(N)\notin \mathbf{O}$.

\begin{proposition}\label{suff-cond-FB}
Let $n\ge 3$ and $N=(V,A,c)\in \mathcal{N}$. Assume that, for every distinct $x,y,u\in V$, there exists a real number $\beta(x,y,u)> -\frac{1}{n-2}$ such that
\[
c(y,u)-c(x,u)+c(u,y)-c(u,x)\ge \beta(x,y,u) (o(x)-o(y)).
\]
Then $\mathfrak{F}(N)=\mathfrak{B}(N)$.
\end{proposition}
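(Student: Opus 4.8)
The plan is to reduce the claimed equality $\mathfrak{F}(N)=\mathfrak{B}(N)$ to a single two-vertex statement. Since $\mathfrak{F}(N)=R(\varphi^N)$ and $\mathfrak{B}(N)=R(o^N)$ and both contain $\Delta$, it suffices to show that for every pair of distinct $x,y\in V$ one has $\varphi_{xy}\ge\varphi_{yx}$ if and only if $o(x)\ge o(y)$. I would actually prove the sharper fact that there is a real number $\mu>0$ (depending on the ordered pair $(x,y)$) with $\varphi_{xy}-\varphi_{yx}\ge\mu\,(o(x)-o(y))$; applying this both to $(x,y)$ and to $(y,x)$ then forces $\varphi_{xy}-\varphi_{yx}$ and $o(x)-o(y)$ to have the same sign, which is exactly the required equivalence.

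To get the inequality I would reuse the cut manipulation from the proofs of Proposition \ref{Eff-F} and Proposition \ref{BeGv-lemma1}. Fix distinct $x,y$, let $S$ be a minimum cut for $\varphi_{xy}$ (so $x\in S$, $y\notin S$, $c(S)=\varphi_{xy}$), and put $T=(S\setminus\{x\})\cup\{y\}$, a cut for $\varphi_{yx}$; by \eqref{duality} we get $c(T)\ge\varphi_{yx}$, hence $\varphi_{xy}-\varphi_{yx}\ge c(S)-c(T)$. Expanding $c(S)$, $c(T)$ and $o(x)-o(y)$ over the partition $V\setminus\{x,y\}=(S\setminus\{x\})\cup(S^c\setminus\{y\})$, a routine bookkeeping computation gives
\[
\bigl(c(S)-c(T)\bigr)-\bigl(o(x)-o(y)\bigr)=\sum_{u\in S\setminus\{x\}}\bigl(c(y,u)-c(x,u)+c(u,y)-c(u,x)\bigr).
\]
Every $u$ appearing here lies in $S$ and hence differs from both $x$ and $y$, so the hypothesis applies termwise: the $u$-th summand is at least $\beta(x,y,u)\,(o(x)-o(y))$. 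Summing and writing $B=\sum_{u\in S\setminus\{x\}}\beta(x,y,u)$, we obtain $\varphi_{xy}-\varphi_{yx}\ge c(S)-c(T)\ge(1+B)\,(o(x)-o(y))$.

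It remains to verify $1+B>0$, and this is where the bound $\beta(x,y,u)>-\tfrac1{n-2}$ is used: since $x\in S$ and $y\notin S$ we have $|S\setminus\{x\}|=|S|-1\le n-2$, so $B>-\tfrac{|S\setminus\{x\}|}{n-2}\ge-1$ (with $B=0$ in the degenerate case $S=\{x\}$). Thus $\mu:=1+B>0$ works, and running the same argument with $x$ and $y$ interchanged completes the equivalence, hence the proposition. The only mildly delicate point is the displayed bookkeeping identity; it is entirely parallel to the computations already performed in the paper, but one must be careful to split the three sums over exactly the right index sets, and one must not forget to invoke the inequality a second time (for the pair $(y,x)$) in order to upgrade the one-sided estimate to the full "if and only if".
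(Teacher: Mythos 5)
Your proposal is correct and follows essentially the same route as the paper: the same minimum-cut $S$ and shifted cut $T=(S\setminus\{x\})\cup\{y\}$, the same identity $c(S)-c(T)=o(x)-o(y)+\sum_{u\in S\setminus\{x\}}\bigl(c(y,u)-c(x,u)+c(u,y)-c(u,x)\bigr)$, the termwise use of the hypothesis, and the bound $1+\sum\beta>0$ from $|S\setminus\{x\}|\le n-2$. Your explicit two-sided application to $(x,y)$ and $(y,x)$ is just the unpacked form of the paper's reduction, so there is no substantive difference.
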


\begin{proof}
In order to show that $\mathfrak{F}(N)=\mathfrak{B}(N),$ it is enough to show that, for every $x,y\in V$ with $x\neq y$, $o(x)\ge o(y)$ implies $\varphi_{xy}\ge \varphi_{yx}$ and $o(x)> o(y)$ implies $\varphi_{xy}> \varphi_{yx}$. For that purpose we will show that, for every $x,y\in V$ with $x\neq y$, there exists a positive constant $\alpha(x,y)$ such that
$\varphi_{xy}-\varphi_{yx}\geq \alpha(x,y) (o(x)-o(y))$.

Fix $x,y\in V$ with $x\neq y$ and let $S$ be a minimum cut from $x$ to $y$ so that $c(S)=\varphi_{xy}$.
Consider now $T=(S\setminus \{x\})\cup\{y\}$ and note that, since $T\subseteq V$, $y\in T$ and $ x\notin T$, we have that $T$ is a cut from $y$ to $x$. Then we have that
\[
c(S)=\sum_{u\in S,v\in S^c}c(u,v)=\sum_{u\in S\setminus\{x\}, v\in S^c\setminus\{y\}}c(u,v)+\sum_{v\in S^c}c(x,v)+\sum_{u\in S\setminus\{x\}}c(u,y)
\]
and
\[
c(T)=\sum_{u\in T,v\in T^c}c(u,v)=\sum_{u\in S\setminus\{x\}, v\in S^c\setminus\{y\}}c(u,v)+\sum_{v\in (S^c\setminus\{y\})\cup\{x\}}c(y,v)+
\sum_{u\in S\setminus\{x\}}c(u, x).
\]
Since $c(T)\ge \varphi_{yx}$, we then have
\[
\varphi_{xy}-\varphi_{yx}\ge
c(S)-c(T)=\sum_{v\in S^c}c(x,v) +\sum_{u\in S\setminus\{x\}}c(u,y) -\sum_{v\in (S^c\setminus\{y\})\cup\{x\}}c(y,v)-\sum_{u\in S\setminus\{x\}}c(u, x)
\]
\[
=o(x)-o(y)+\sum_{u\in S\setminus\{x\}}\left(c(y,u)-c(x,u)+c(u,y)-c(u,x)\right)
\]
\[
\ge o(x)-o(y)+ \sum_{u\in S\setminus\{x\}} \beta(x,y,u)(o(x)-o(y))=\left(1+\sum_{u\in S\setminus\{x\}} \beta(x,y,u)\right)(o(x)-o(y)).
\]
Thus, we are left with proving that
$
1+\sum_{u\in S\setminus\{x\}} \beta(x,y,u)>0.
$
Indeed, since $|S|\le n-1$, we have
\[
1+\sum_{u\in S\setminus\{x\}} \beta(x,y,u)> 1-\frac{|S|-1}{n-2}\ge 0,
\]
and the proof is completed.
\end{proof}

\begin{proposition}\label{suff-cond-FBhat}
Let $n\ge 3$ and $N=(V,A,c)\in \mathcal{N}$. Assume that, for every distinct $x,y,u\in V$, there exists a real number $\gamma(x,y,u)> -\frac{1}{n-2}$ such that
\[
c(x,u)-c(y,u)+c(u,x)-c(u,y)\ge \gamma(x,y,u) (i(y)-i(x)).
\]
Then $\mathfrak{F}(N)=\widehat{\mathfrak{B}}(N)$.
\end{proposition}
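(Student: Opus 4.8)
The plan is to derive this statement from Proposition~\ref{suff-cond-FB} applied to the reversal network $N^r$, using the reversal symmetry of $\mathfrak{F}$ recorded in Proposition~\ref{Reversal}.

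First I would collect the elementary facts linking the three methods under reversal. Since $c^r(x,y)=c(y,x)$ for all $(x,y)\in A$, we have $o^{N^r}(x)=i^N(x)$ for every $x\in V$, so $\mathfrak{B}(N^r)=R(o^{N^r})=R(i^N)$; combining this with the identity $\widehat{\mathfrak{B}}(N)=(R(i^N))^r$ noted in the paper gives $\widehat{\mathfrak{B}}(N)=\mathfrak{B}(N^r)^r$. On the other hand Proposition~\ref{Reversal} gives $\mathfrak{F}(N^r)=\mathfrak{F}(N)^r$, hence $\mathfrak{F}(N)=\mathfrak{F}(N^r)^r$ since $R\mapsto R^r$ is an involution on $\mathbf{R}$. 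Applying this involution, it follows that $\mathfrak{F}(N)=\widehat{\mathfrak{B}}(N)$ if and only if $\mathfrak{F}(N^r)=\mathfrak{B}(N^r)$, so it suffices to establish the latter equality.

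Then I would check that $N^r$ satisfies the hypothesis of Proposition~\ref{suff-cond-FB}. Fix distinct $x,y,u\in V$ and set $\beta(x,y,u)=\gamma(y,x,u)$, which is $>-\frac{1}{n-2}$ because $(y,x,u)$ is again a triple of distinct vertices. Using $c^r(p,q)=c(q,p)$ and $o^{N^r}(p)=i^N(p)$, the inequality required by Proposition~\ref{suff-cond-FB} for the triple $(x,y,u)$ in $N^r$,
\[
c^r(y,u)-c^r(x,u)+c^r(u,y)-c^r(u,x)\ge \beta(x,y,u)\bigl(o^{N^r}(x)-o^{N^r}(y)\bigr),
\]
becomes
\[
c(u,y)-c(u,x)+c(y,u)-c(x,u)\ge \gamma(y,x,u)\bigl(i^N(x)-i^N(y)\bigr),
\]
which is exactly the assumption of the present proposition with $x$ and $y$ interchanged. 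Hence Proposition~\ref{suff-cond-FB} yields $\mathfrak{F}(N^r)=\mathfrak{B}(N^r)$, and the previous step gives $\mathfrak{F}(N)=\widehat{\mathfrak{B}}(N)$.

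I do not expect a genuine obstacle: the argument is a routine transfer along the reversal map, and the only point deserving attention is the sign bookkeeping — in particular observing that the hypothesis has to be invoked with the roles of $x$ and $y$ exchanged, so that the factor $o^{N^r}(x)-o^{N^r}(y)=i^N(x)-i^N(y)$ lines up with the factor $i^N(y)-i^N(x)$ that appears in the assumption.
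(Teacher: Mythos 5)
Your argument is correct, and it is a genuinely different route from the paper's. The paper omits the proof with the remark that it is ``similar to the one of Proposition~\ref{suff-cond-FB}'', i.e.\ it intends the reader to redo the minimum-cut computation, this time rewriting $c(S)-c(T)$ in terms of indegrees rather than outdegrees. You instead treat Proposition~\ref{suff-cond-FB} as a black box and transfer it along the reversal involution: the identities $o^{N^r}=i^N$, $\widehat{\mathfrak{B}}(N)=(R(i^N))^r=\mathfrak{B}(N^r)^r$ and $\mathfrak{F}(N^r)=\mathfrak{F}(N)^r$ (Proposition~\ref{Reversal}) reduce the claim to $\mathfrak{F}(N^r)=\mathfrak{B}(N^r)$, and your verification that the hypothesis of Proposition~\ref{suff-cond-FB} holds for $N^r$ with $\beta(x,y,u)=\gamma(y,x,u)$ is exactly right, including the swap of $x$ and $y$ needed to match $o^{N^r}(x)-o^{N^r}(y)=i(x)-i(y)$ with the factor $i(y)-i(x)$ in the assumption; there is no circularity since Proposition~\ref{Reversal} precedes this result. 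What your approach buys is economy and transparency: it avoids repeating the cut estimate and makes explicit the outdegree/indegree duality under reversal that the paper only exploits implicitly (e.g.\ in Proposition~\ref{szigeti}). What the paper's intended approach buys is a self-contained direct proof that does not invoke the reversal machinery and yields, along the way, the quantitative inequality $\varphi_{xy}-\varphi_{yx}\ge\alpha(x,y)(i(y)-i(x))$ in the original network rather than in $N^r$.
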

The proof of the above result is similar to the one of Proposition \ref{suff-cond-FB} and thus omitted.

\begin{proposition}\label{cor-new}

Let $n\ge 3$ and $N=(V,A,c)\in \mathcal{N}$. Assume that there exist $a,b,l\in\mathbb{N}_0$ with $a+b>0$ and $\omega:V\to \mathbb{N}_0$  such that, for every $(x,y)\in A$, we have
$
c(x,y)=a \omega(x)+b \omega(y)+l.
$
If $(n-1)a<b$ or $(n-1)b<a$,
then $\mathfrak{F}(N)=\mathfrak{B}(N)=\widehat{\mathfrak{B}}(N)$.
\end{proposition}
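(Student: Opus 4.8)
The plan is to derive the result from Propositions \ref{suff-cond-FB} and \ref{suff-cond-FBhat}, which will apply with a constant (indeed $u$-independent) choice of the auxiliary coefficient once the out- and in-degrees are expressed through $\omega$. Writing $W=\sum_{v\in V}\omega(v)$, a direct summation of the capacity formula gives, for every $x\in V$,
\[
o(x)=\bigl((n-1)a-b\bigr)\omega(x)+bW+(n-1)l,
\qquad
i(x)=\bigl((n-1)b-a\bigr)\omega(x)+aW+(n-1)l,
\]
hence for distinct $x,y$ one has $o(x)-o(y)=\bigl((n-1)a-b\bigr)\bigl(\omega(x)-\omega(y)\bigr)$ and $i(y)-i(x)=\bigl((n-1)b-a\bigr)\bigl(\omega(y)-\omega(x)\bigr)$. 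On the other hand, for distinct $x,y,u\in V$ the capacity formula yields at once
\[
c(y,u)-c(x,u)+c(u,y)-c(u,x)=(a+b)\bigl(\omega(y)-\omega(x)\bigr),
\]
and symmetrically $c(x,u)-c(y,u)+c(u,x)-c(u,y)=(a+b)\bigl(\omega(x)-\omega(y)\bigr)$. Thus the hypotheses of Propositions \ref{suff-cond-FB} and \ref{suff-cond-FBhat} collapse to numerical inequalities between scalar multiples of $\omega(x)-\omega(y)$.

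Suppose first $(n-1)a<b$. Then $b-(n-1)a>0$, so $\beta=\dfrac{a+b}{b-(n-1)a}$ is a positive constant, in particular $\beta>-\frac1{n-2}$, and with this value the inequality required in Proposition \ref{suff-cond-FB} holds with equality for all distinct $x,y,u$ (both sides equal $(a+b)(\omega(y)-\omega(x))$); therefore $\mathfrak{F}(N)=\mathfrak{B}(N)$. Next I would check that $(n-1)b-a>0$: this is clear if $a=0$ since then $b\ge1$, and if $a\ge1$ then $b>(n-1)a$ gives $(n-1)b-a>(n-1)^2a-a=a\,n(n-2)>0$ because $n\ge3$. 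Consequently the affine function $o$ is strictly decreasing in $\omega$ while $i$ is strictly increasing in $\omega$, so
\[
\mathfrak{B}(N)=R(o)=\{(x,y)\in V^2:\omega(x)\le\omega(y)\}=R(-i)=\widehat{\mathfrak{B}}(N),
\]
and the three relations coincide.

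The case $(n-1)b<a$ is entirely symmetric, and I would treat it either by interchanging the roles of $a$ and $b$ (equivalently of $o$ and $i$), now invoking Proposition \ref{suff-cond-FBhat} with $\gamma=\dfrac{a+b}{a-(n-1)b}>0$ to get $\mathfrak{F}(N)=\widehat{\mathfrak{B}}(N)$ and the strict positivity of $(n-1)a-b$ (checked as above, using $n\ge3$) to get $\mathfrak{B}(N)=\widehat{\mathfrak{B}}(N)$; or, more briefly, by applying the first case to the reversal $N^{r}$, whose capacity is $c^{r}(x,y)=b\,\omega(x)+a\,\omega(y)+l$, and transporting the conclusion back via $o^{N^{r}}=i^{N}$ and Proposition \ref{Reversal}. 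I do not anticipate a genuine obstacle here: the whole argument is the degree computation followed by a choice of constants. The only point that needs a moment's care — and the only place where the hypotheses $n\ge3$ and $a+b>0$ are really used — is verifying that the coefficient $(n-1)b-a$ (respectively $(n-1)a-b$) is strictly positive, since this is exactly what prevents $i$ (respectively $o$) from becoming constant and keeps $\mathfrak{B}$ and $\widehat{\mathfrak{B}}$ in step with $\mathfrak{F}$.
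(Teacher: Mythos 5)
Your proof is correct, and it reorganizes the argument in a way that differs from the paper's, even though both rest on the same degree computation and on Propositions \ref{suff-cond-FB} and \ref{suff-cond-FBhat}. The paper applies \emph{both} sufficient-condition propositions under \emph{each} of the two hypotheses $(n-1)a<b$ and $(n-1)b<a$: it splits on the sign of $\omega(y)-\omega(x)$ (taking $\beta=0$ in the easy case) and, in the delicate case, must verify that the constant $\beta=\frac{a+b}{b-(n-1)a}$ (respectively $\gamma=\frac{a+b}{a-(n-1)b}$) exceeds $-\frac{1}{n-2}$ even when it is negative, an inequality that turns out to be exactly equivalent to the remaining hypothesis. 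You instead split on which hypothesis holds and, in each case, invoke only the one proposition whose constant is manifestly positive (so the admissibility bound $\beta>-\frac{1}{n-2}$ is free, and your observation that the inequality then holds with equality for a single $u$-independent constant is a nice simplification); you close the triangle by proving $\mathfrak{B}(N)=\widehat{\mathfrak{B}}(N)$ directly from the sign computation $(n-1)b-a>0$ (resp.\ $(n-1)a-b>0$), which makes $o$ and $i$ oppositely monotone affine functions of $\omega$, or alternatively by passing to $N^r$ via Proposition \ref{Reversal}. What your route buys is the avoidance of the paper's trickiest step (the negative-coefficient verification); what the paper's route buys is that each of the equalities $\mathfrak{F}=\mathfrak{B}$ and $\mathfrak{F}=\widehat{\mathfrak{B}}$ is established under both hypotheses purely through the general sufficient conditions, without a separate ad hoc argument relating $\mathfrak{B}$ and $\widehat{\mathfrak{B}}$. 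Both arguments are complete and use $n\ge3$ and $a+b>0$ in exactly the places you indicate.
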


\begin{proof}Let us prove first that $\mathfrak{F}(N)=\mathfrak{B}(N)$.  Let $x,y,u\in V$ be distinct and show that there exists $\beta=\beta(x,y,u)$ such that $\beta> -\frac{1}{n-2}$ and
\[
c(y,u)-c(x,u)+c(u,y)-c(u,x)\ge \beta (o(x)-o(y)),
\]
in order to apply Proposition \ref{suff-cond-FB}.
Note that
$
c(y,u)-c(x,u)+c(u,y)-c(u,x)=(a+b)(\omega(y)-\omega(x))
$
and that
$
o(x)-o(y)=(b-(n-1)a)(\omega(y)-\omega(x)).
$
Then we have to show that there exists $\beta> -\frac{1}{n-2}$ such that
\[
(a+b)(\omega(y)-\omega(x))\ge \beta (b-(n-1)a)(\omega(y)-\omega(x)).
\]
If $\omega(y)-\omega(x)\ge 0$, we can choose $\beta=0$. If $\omega(y)-\omega(x)<0$, then it has to be
$
a+b\le \beta (b-(n-1)a).
$
We cannot have $b-(n-1)a=0$.  Namely if $(n-1)a<b$, then $b-(n-1)a>0$. Let instead $(n-1)b<a$ and assume, by contradiction, that $b-(n-1)a=0$. Then we have $(n-1)^2a<a$. Thus $a\neq 0$ and so, since $a\in \mathbb{N}_0,$ we have $a>0$. It follows that $n(n-2)<0$ against $n\geq 3$.
Then it is meaningful  to define
\[
\beta = \frac{a+b}{b-(n-1)a}.
\]
If $(n-1)a<b$, then we  have $\beta >0>-\frac{1}{n-2}.$
If instead $(n-1)b<a$, then this implies $\beta> -\frac{1}{n-2}.$

Next we  show that $\mathfrak{F}(N)=\widehat{\mathfrak{B}}(N)$.
Let $x,y,u\in V$ be distinct and  let us show that there exists $\gamma=\gamma(x,y,u)$ such that $\gamma> -\frac{1}{n-2}$ and
\[
c(x,u)-c(y,u)+c(u,x)-c(u,y)\ge \gamma (i(y)-i(x)),
\]
in order to apply Proposition \ref{suff-cond-FBhat}.
Note that
$
c(x,u)-c(y,u)+c(u,x)-c(u,y)=(a+b)(\omega(x)-\omega(y))
$
and that
$
i(y)-i(x)=(a-(n-1)b)(\omega(x)-\omega(y)).
$
Then we have to show that there exists $\gamma> -\frac{1}{n-2}$ such that
\[
(a+b)(\omega(x)-\omega(y))\ge \gamma (a-(n-1)b)(\omega(x)-\omega(y)).
\]
If $\omega(x)-\omega(y)\ge 0$, we can choose $\gamma=0$. If $\omega(x)-\omega(y)<0$, then it has to be
$
(a+b)\le \gamma (a-(n-1)b).
$
As before, it cannot be $a-(n-1)b=0$.
Thus  it is meaningful  to define
\[
\gamma = \frac{a+b}{a-(n-1)b}.
\]
If $(n-1)b<a$, then we have $\gamma = \frac{a+b}{a-(n-1)b}>0>-\frac{1}{n-2}.$ If $(n-1)a<b$, then this implies $\gamma >-\frac{1}{n-2}.$
\end{proof}

Let us finally define two special sets of networks, namely
\[
\mathcal{O}=\{(V,A,c)\in \mathcal{N}: \exists\mbox{ $\omega_1:V\to \mathbb{N}_0$ such that, $\forall\;(x,y)\in A$, $c(x,y)=\omega_1(x)$}\},
\]
\[
\mathcal{I}=\{(V,A,c)\in \mathcal{N}: \exists\mbox{ $\omega_2:V\to \mathbb{N}_0$ such that, $\forall\;(x,y)\in A$, $c(x,y)=\omega_2(y)$}\}.
\]

\begin{theorem}\label{OI}
Let $N\in \mathcal{B}\cup\mathcal{O}\cup \mathcal{I}$.
Then $\mathfrak{F}(N)=\mathfrak{B}(N)=\widehat{\mathfrak{B}}(N)$.
\end{theorem}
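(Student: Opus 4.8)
The plan is to split the union $\mathcal{B}\cup\mathcal{O}\cup\mathcal{I}$ into its three pieces and reduce each of them to a result already in hand. For $N\in\mathcal{B}$ there is nothing to prove: this is exactly Proposition \ref{balan-ut}, which gives $\mathfrak{F}(N)=\mathfrak{B}(N)=\widehat{\mathfrak{B}}(N)$.

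For $N=(V,A,c)\in\mathcal{O}$, I would first dispose of the case $n=2$: there $\mathcal{N}=\mathcal{B}$, so $N$ is balanced and we are back to the previous case (equivalently, one may invoke Corollary \ref{cor-n2}). When $n\ge 3$, the point is that the defining property $c(x,y)=\omega_1(x)$ can be rewritten as $c(x,y)=a\,\omega(x)+b\,\omega(y)+l$ with $a=1$, $b=0$, $l=0$ and $\omega=\omega_1$; then $a,b,l\in\mathbb{N}_0$, $a+b=1>0$, and the hypothesis $(n-1)b<a$ of Proposition \ref{cor-new} holds because $0<1$. Hence Proposition \ref{cor-new} yields $\mathfrak{F}(N)=\mathfrak{B}(N)=\widehat{\mathfrak{B}}(N)$.

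The case $N=(V,A,c)\in\mathcal{I}$ is handled symmetrically: again $n=2$ falls under $\mathcal{B}$, while for $n\ge 3$ one writes $c(x,y)=\omega_2(y)=a\,\omega(x)+b\,\omega(y)+l$ with $a=0$, $b=1$, $l=0$ and $\omega=\omega_2$, the relevant hypothesis of Proposition \ref{cor-new} now being $(n-1)a<b$, i.e. $0<1$. Since the three cases exhaust the statement, the theorem follows.

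I do not expect any genuine obstacle here: the substantive work sits in Propositions \ref{balan-ut} and \ref{cor-new}, and what remains is just recognising $\mathcal{O}$ and $\mathcal{I}$ as instances of the affine parametrisation used in Proposition \ref{cor-new}. The only point requiring a moment's care is the degenerate value $n=2$, for which Proposition \ref{cor-new} is not available (its statement presupposes $n\ge 3$) but for which every network is balanced, so that Proposition \ref{balan-ut} covers it.
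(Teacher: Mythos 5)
Your proposal is correct and follows essentially the same route as the paper: dispose of $n=2$ via Corollary \ref{cor-n2} (or by noting $\mathcal{N}=\mathcal{B}$), use Proposition \ref{balan-ut} for balanced networks, and apply Proposition \ref{cor-new} with $(a,b,l)=(1,0,0)$ for $\mathcal{O}$ and $(a,b,l)=(0,1,0)$ for $\mathcal{I}$. Your explicit verification of the hypotheses $(n-1)b<a$, respectively $(n-1)a<b$, is exactly the check the paper leaves implicit.
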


\begin{proof}If  $n=2$, then apply Corollary \ref{cor-n2}. Assume now that $n\ge 3$. If $N\in \mathcal{B}$, then apply Proposition \ref{balan-ut}.
 If $N\in \mathcal{O}\cup \mathcal{I}$, then apply Proposition
 \ref{cor-new} with $a=1$, $b=0$, $l=0$ and with $a=0$, $b=1$, $l=0$.
\end{proof}
Consider now  the set of networks $\mathcal{C}=\{(V,A,c)\in \mathcal{N}: c \mbox{ is constant}\}$.
It is easily proved that if  $n=2$, then $\mathcal{C}\subsetneq\mathcal{B}=\mathcal{O}= \mathcal{I}= \mathcal{N};$  if $n\geq 3$, then
$\mathcal{O}\cap \mathcal{I}=\mathcal{O}\cap \mathcal{B}=\mathcal{I}\cap \mathcal{B}=\mathcal{C}.$ In particular,  $\mathcal{O}$ and $\mathcal{I}$ have a very small intersection with the set of balanced networks, so that $ \mathcal{B}\cup\mathcal{O}\cup \mathcal{I}$ is, in general, considerably larger than $ \mathcal{B}.$

\section{Further properties of relations}\label{relation-2}

In this section we collect several facts about relations which are fundamental for the analysis of the flow network rule and the flow $k$-multiwinner network solution.
We stress that all the theorems and propositions of the section do not require that $V$ has at least two elements, but only need that $V$ is finite and nonempty.

Let us begin with recalling a classic result about acyclic relations which will turn out to be crucial in the sequel (Szpilrajn, 1939).

\begin{theorem}\label{greco}
Let $R\in \mathbf{R}$. Then the following facts hold:
\begin{itemize}
\item[(a)]$R$ is acyclic if and only if $\mathbf{L}^\diamond(R)\neq \varnothing$.
\item[(b)]If $R$ is acyclic and $x,y\in V$ are such that $(x,y)\not\in R$ and $(y,x)\not\in R$, then there exists $L\in\mathbf{L}^\diamond(R)$ such that $x\succ_{L} y$.
\end{itemize}
\end{theorem}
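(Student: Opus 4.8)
The plan is to treat this as the finite case of Szpilrajn's extension theorem, producing the required linear order explicitly by a topological sorting argument rather than appealing to Zorn's lemma. The easy implication of (a) comes first: if $L\in\mathbf{L}^\diamond(R)$, then $L$ is a linear order, and every linear order is acyclic, since given distinct $x_1,\dots,x_m$ with $x_i\succeq_L x_{i+1}$ for all $i\in\ldbrack m-1\rdbrack$, antisymmetry of $L$ promotes each step to $x_i\succ_L x_{i+1}$, transitivity of $S(L)$ gives $x_1\succ_L x_m$, and hence $x_m\not\succeq_L x_1$; as $R\subseteq L$, any sequence witnessing a failure of acyclicity for $R$ would witness one for $L$, so $R$ is acyclic.

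For the converse of (a) I would consider the digraph $D$ on $V$ whose arcs are the pairs $(a,b)\in A$ with $a\succeq_R b$. Acyclicity of $R$ says precisely that $D$ has no directed cycle, and a finite digraph with no directed cycle has a sink, since otherwise, following out-arcs from an arbitrary vertex, some vertex would be revisited within $n$ steps and a cycle would appear. Extracting a sink, deleting it and iterating, and recording the vertices in reverse order of removal, I obtain an enumeration $V=\{v_1,\dots,v_n\}$ with the property that $v_i\succeq_R v_j$ and $v_i\neq v_j$ force $i<j$. Then $L:=\{(v_i,v_j)\in V^2:i\le j\}$ is complete and antisymmetric with transitive strict part, so $L\in\mathbf{L}$, and $R\subseteq L$, because a pair of $R$ with equal endpoints lies in $\Delta\subseteq L$ while a pair with distinct endpoints is oriented forward by construction; hence $\mathbf{L}^\diamond(R)\neq\varnothing$.

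For (b) I would deduce the statement from (a). We may assume $x\neq y$, since $x\succ_L y$ requires it. Set $R'=R\cup\{(x,y)\}$; the crucial point is that $R'$ is still acyclic, for a directed cycle of $R'$ through distinct vertices that is not already a cycle of $R$ must traverse the new arc $(x,y)$, and deleting that arc from the cycle leaves an $R$-chain running from $y$ back to $x$, which is excluded by the hypotheses $(x,y)\notin R$ and $(y,x)\notin R$. Granting this, part (a) provides $L\in\mathbf{L}^\diamond(R')$; since $R\subseteq R'\subseteq L$ we obtain $L\in\mathbf{L}^\diamond(R)$, and since $(x,y)\in L$ with $x\neq y$ and $L$ is antisymmetric, $(y,x)\notin L$, that is, $x\succ_L y$.

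I expect the substance of the proof to lie in the converse of (a), concretely in the sink-extraction (topological-sort) step, and, in (b), in checking that adjoining $(x,y)$ to $R$ preserves acyclicity, which is the single place where the incomparability of $x$ and $y$ is used.
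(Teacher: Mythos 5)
The paper offers no proof of this theorem: it is recalled as a classical result with a citation to Szpilrajn, so there is nothing internal to compare your argument with. Your part (a) stands on its own and is correct: the easy direction (antisymmetry promotes the steps of a chain in $L$ to strict ones, transitivity of $S(L)$ closes the chain) is fine, and the sink-extraction argument is a valid finite substitute for the usual extension argument; the enumeration it produces does give an element of $\mathbf{L}$ (complete, antisymmetric, with transitive strict part) containing $R$, diagonal pairs of $R$ included.

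Part (b), however, fails exactly at the step you single out as crucial, and the failure is not repairable as the statement is written. Deleting the arc $(x,y)$ from a cycle of $R'=R\cup\{(x,y)\}$ leaves an $R$-chain of distinct vertices from $y$ to $x$, but the hypotheses $(x,y)\notin R$ and $(y,x)\notin R$ exclude only the one-arc chain $(y,x)$: a longer chain through intermediate vertices is perfectly compatible with them, and acyclicity of $R$ applied to such a chain yields only $x\not\succeq_R y$, which is already assumed. Concretely, take three distinct elements $x,y,z$ of $V$ and $R=\{(y,z),(z,x)\}$: then $R$ is acyclic and $(x,y),(y,x)\notin R$, yet $R\cup\{(x,y)\}$ contains the cycle given by the sequence $x,y,z$ (since $(z,x)\in R$), so your claim that $R'$ is acyclic is false. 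Worse, the same example shows that statement (b) itself is false under the paper's notion of acyclicity: every $L\in\mathbf{L}^\diamond(R)$ must satisfy $y\succ_L z$ and $z\succ_L x$, hence $y\succ_L x$ by transitivity of $S(L)$, so no linear extension ranks $x$ strictly above $y$. The classical Szpilrajn statement concerns transitive relations, for which incomparability of $x$ and $y$ is equivalent to the hypothesis that is really needed here, namely that $x\neq y$ and $(y,x)$ does not lie in the transitive closure of $R$ (no $R$-chain of distinct vertices from $y$ to $x$). Under that strengthened hypothesis your argument for (b) goes through verbatim, because then $R\cup\{(x,y)\}$ genuinely is acyclic; and that stronger form is what is actually available at the one point where the paper invokes (b), in the proof of Proposition \ref{2}, where any chain ending at $x^*$ must lie in $S(R)$ and is ruled out by transitivity of $S(R)$ for the complete quasi-transitive relations to which the paper applies it.
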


The proofs of the next three useful facts are almost immediate. We give the proof of the third one only.

\begin{proposition} \label{3-bis}
Let $R\in\mathbf{R}$,  $k\in \ldbrack n-1 \rdbrack $ and $x^*\in V$. Assume that there exist distinct $y_1,\ldots, y_{n-k}\in V\setminus \{x^*\}$ such that,
for every $j\in\ldbrack n-k\rdbrack$, $x^*\succ_R y_j$. Then, for every $W\in C_k(R)$,  $x^*\in W$.
\end{proposition}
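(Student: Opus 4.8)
The statement asserts that if $x^*$ strictly dominates $n-k$ distinct elements $y_1,\dots,y_{n-k}$ (all different from $x^*$), then $x^*$ belongs to every $k$-maximum set $W$ of $R$. My plan is a short counting argument by contradiction. Suppose $W \in C_k(R)$ but $x^* \notin W$. The set $\{x^*, y_1, \dots, y_{n-k}\}$ has exactly $n-k+1$ elements (the $y_j$ are distinct and none equals $x^*$), all lying in $V \setminus W$ except possibly for those $y_j$ that happen to be in $W$. Since $|V \setminus W| = n - k$ and $x^* \in V \setminus W$, there are only $n-k-1$ slots in $V \setminus W$ left for the $n-k$ elements $y_1, \dots, y_{n-k}$. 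By pigeonhole, at least one $y_j$ must lie in $W$.

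Fix such an index $j$ with $y_j \in W$. Then $y_j \in W$ and $x^* \in V \setminus W$, so the defining property of a $k$-maximum set gives $y_j \succeq_R x^*$, i.e. $(y_j, x^*) \in R$. On the other hand, by hypothesis $x^* \succ_R y_j$, which means $(x^*, y_j) \in R$ and $(y_j, x^*) \notin R$. This contradicts $(y_j, x^*) \in R$, completing the argument.

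I do not anticipate any real obstacle here: the only point requiring a little care is the cardinality bookkeeping — making sure that $x^*$ together with the $n-k$ distinct $y_j$'s genuinely cannot all fit inside $V \setminus W$, which has size exactly $n-k$ and already contains $x^*$. Once that pigeonhole step is in place, the conclusion follows immediately from the definitions of $k$-maximum set and of the strict relation $S(R)$. Note that completeness or quasi-transitivity of $R$ are not needed for this particular proposition; it holds for arbitrary $R \in \mathbf{R}$.
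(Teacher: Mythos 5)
Your proof is correct; the paper itself omits the argument, declaring this proposition ``almost immediate,'' and your pigeonhole-plus-definition argument (some $y_j$ must land in $W$, forcing $y_j\succeq_R x^*$ and contradicting $x^*\succ_R y_j$) is exactly the intended immediate reasoning. Your remark that no completeness or quasi-transitivity is needed is also consistent with the statement being posed for arbitrary $R\in\mathbf{R}$.
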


\begin{proposition} \label{ckrev}
Let $R\in\mathbf{R}$ and $k\in  \ldbrack n-1 \rdbrack $. Then $W\in  C_k(R^r)$ if and only if $V\setminus W\in C_{n-k}(R).$
\end{proposition}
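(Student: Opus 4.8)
The statement to prove is Proposition \ref{ckrev}: for $R\in\mathbf{R}$ and $k\in\ldbrack n-1\rdbrack$, $W\in C_k(R^r)$ if and only if $V\setminus W\in C_{n-k}(R)$.

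My plan is to unwind the definitions on both sides and observe that they are literally the same condition, just with the roles of the two sets swapped. Recall $W\in C_k(R^r)$ means $W\in\mathbb{P}_k(V)$ and for every $x\in W$ and $y\in V\setminus W$ we have $x\succeq_{R^r}y$, i.e.\ $(x,y)\in R^r$, i.e.\ $(y,x)\in R$, i.e.\ $y\succeq_R x$. On the other side, writing $W'=V\setminus W$, the condition $W'\in C_{n-k}(R)$ means $W'\in\mathbb{P}_{n-k}(V)$ and for every $u\in W'$ and $v\in V\setminus W'=W$ we have $u\succeq_R v$. Since $|W|=k$ is equivalent to $|V\setminus W|=n-k$, the cardinality requirements match; and renaming $u=y\in W'=V\setminus W$ and $v=x\in W$, the quantified condition ``$y\succeq_R x$ for all $y\in V\setminus W$, $x\in W$'' is exactly the same in both statements.

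So the proof is a short chain of equivalences: $W\in C_k(R^r)$ $\iff$ [$|W|=k$ and $\forall x\in W,\ \forall y\in V\setminus W:\ (x,y)\in R^r$] $\iff$ [$|V\setminus W|=n-k$ and $\forall y\in V\setminus W,\ \forall x\in W:\ (y,x)\in R$] $\iff$ $V\setminus W\in C_{n-k}(R)$, where the middle step just uses the definition $R^r=\{(x,y):(y,x)\in R\}$ and the fact that $\mathbb{P}_k(V)\ni W\mapsto V\setminus W\in\mathbb{P}_{n-k}(V)$ is a bijection. There is no real obstacle here; the only thing to be careful about is keeping the bound variables straight when swapping the roles of $W$ and its complement, and noting that $k\in\ldbrack n-1\rdbrack$ guarantees $n-k\in\ldbrack n-1\rdbrack$ as well, so both $C_k$ and $C_{n-k}$ are the ``meaningful'' instances. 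I would write this as two or three lines of prose with the displayed chain of iff's, exactly as the excerpt promises (``The proofs of the next three useful facts are almost immediate'').
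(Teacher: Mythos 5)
Your proof is correct and is exactly the immediate unwinding of definitions the paper has in mind (the paper omits this proof, noting it is ``almost immediate''): the cardinality conditions $|W|=k$ and $|V\setminus W|=n-k$ correspond, and the defining condition of $C_k(R^r)$ transforms into that of $C_{n-k}(R)$ via $(x,y)\in R^r\iff(y,x)\in R$. Nothing further is needed.
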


\begin{proposition}\label{totale} Let
$R\in\mathbf{R}$ be reflexive and $k\in  \ldbrack n-1 \rdbrack $. Then  $C_k(R)=\mathbb{P}_k(V)$ if and only if $R=V^2.$
\end{proposition}
\begin{proof} Let $R=V^2$ and pick $W\in \mathbb{P}_k(V)$. Then, for every $x\in W$ and every $y\in V\setminus W$, we have $x\succeq_Ry$ and thus $W\in C_k(R).$ Conversely let $C_k(R)=\mathbb{P}_k(V)$. Assume, by contradiction, that there exist $x\in V$ and $y\in V$ such that $x\not \succeq_Ry$. Since $R$ is reflexive, we necessarily have $x\neq y.$ Since $k\in  \ldbrack n-1 \rdbrack $,  there  exists $W\in \mathbb{P}_k(V)$ with $x\in W$ and $y \in V\setminus W$. The fact that $x\not \succeq_Ry$ says that $W\notin C_k(R)$ against $C_k(R)=\mathbb{P}_k(V)$.
\end{proof}

We propose now several results about complete and quasi-acyclic relations. We emphasize that such results, establishing the general properties of the set
$\mathbf{A}$, provide tools to control the properties of any possible network method with values in $\mathbf{A}$.

\begin{proposition} \label{lin-ref}
Let $R\in\mathbf{A}$. Then $\mathbf{L}_\diamond(R)= \mathbf{L}^\diamond(S(R))\neq\varnothing$.
\end{proposition}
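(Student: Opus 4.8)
The plan is to prove the two assertions $\mathbf{L}_\diamond(R)=\mathbf{L}^\diamond(S(R))$ and $\mathbf{L}^\diamond(S(R))\neq\varnothing$ separately and then combine them. The nonemptiness is the only part that genuinely uses an earlier result: since $R\in\mathbf{A}$ is quasi-acyclic, $S(R)$ is acyclic by definition, so Theorem \ref{greco}(a) gives $\mathbf{L}^\diamond(S(R))\neq\varnothing$ at once; once the set equality is established, nonemptiness of $\mathbf{L}_\diamond(R)$ comes for free.

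For the inclusion $\mathbf{L}_\diamond(R)\subseteq\mathbf{L}^\diamond(S(R))$, I would fix $L\in\mathbf{L}$ with $L\subseteq R$ and check $S(R)\subseteq L$. Given $(x,y)\in S(R)$, by definition $(x,y)\in R$ while $(y,x)\notin R$; the latter together with $L\subseteq R$ forces $(y,x)\notin L$, and then completeness of $L$ gives $(x,y)\in L$. Hence $S(R)\subseteq L$, i.e.\ $L\in\mathbf{L}^\diamond(S(R))$.

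For the reverse inclusion $\mathbf{L}^\diamond(S(R))\subseteq\mathbf{L}_\diamond(R)$, I would fix $L\in\mathbf{L}$ with $S(R)\subseteq L$ and show $L\subseteq R$. Take $(x,y)\in L$. If $x=y$, then $(x,y)\in R$ because completeness of $R$ makes $R$ reflexive. If $x\neq y$, suppose toward a contradiction that $(x,y)\notin R$; completeness of $R$ then gives $(y,x)\in R$, hence $(y,x)\in S(R)\subseteq L$, and since $L$ is antisymmetric the pair $(x,y),(y,x)\in L$ forces $x=y$, a contradiction. So $(x,y)\in R$ in all cases, i.e.\ $L\subseteq R$, which means $L\in\mathbf{L}_\diamond(R)$.

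The whole argument is elementary and I do not foresee any real obstacle; the only thing to keep track of is which completeness (of $L$ or of $R$) and which property (completeness of $L$, reflexivity of $R$ deduced from its completeness, antisymmetry of $L$) is being invoked at each step, and the small special case $x=y$ in the reverse inclusion, which is dispatched by reflexivity of $R$.
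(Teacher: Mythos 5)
Your proposal is correct and follows essentially the same route as the paper: the same two inclusions ($L\subseteq R$ implies $S(R)\subseteq L$ via completeness of $L$, and $S(R)\subseteq L$ implies $L\subseteq R$ via completeness of $R$ plus antisymmetry of $L$), with nonemptiness from Theorem \ref{greco}(a) applied to the acyclic relation $S(R)$. The only cosmetic difference is that you argue the first inclusion directly where the paper argues by contradiction; the content is identical.
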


\begin{proof} We first show that $\mathbf{L}^\diamond(S(R))\subseteq \mathbf{L}_\diamond(R).$ Let $L\in \mathbf{L}^\diamond(S(R)).$ Thus $L$ is linear and $L\supseteq S(R).$ We want to show that $L\subseteq R.$
 Pick $(x,y)\in L$. If $x=y$, then $(x,x)\in R$ because $R$ is complete and then, in particular, reflexive. Let then $x\neq y$.
Assume, by contradiction, that $(x,y)\not\in R$.  Since $R$ is complete, we have $(y,x)\in R$ and then $(y,x)\in S(R)$. Thus $(y,x)\in L$ and, since $L$ is antisymmetric, we get the contradiction $x=y$.

We next show that $\mathbf{L}_\diamond(R)\subseteq \mathbf{L}^\diamond(S(R)).$ Let $L\in \mathbf{L}_\diamond(R)$. Thus $L$ is linear and $L\subseteq R.$ We need to show that $S(R)\subseteq L.$ Let $(x,y)\in S(R)$, so that we have  $(x,y)\in R$, $(y,x)\notin R$. Assume, by contradiction, that $(x,y)\notin L$. Then, since $L$ is complete, we have $(y,x)\in L$ and so also $(y,x)\in R,$ a contradiction.
Finally, since $R\in\mathbf{A}$, we have that $S(R)$ is acyclic so that, by Theorem \ref{greco},  $\mathbf{L}^\diamond(S(R))\neq\varnothing$.
\end{proof}

\begin{proposition} \label{2-bis}
Let $R\in\mathbf{A}$, $x^*\in V$ and $W\subseteq V\setminus \{x^*\}$. Assume that, for every $y\in W$, $x^*\succ_R y$. Then, for every
 $L\in \mathbf{L}_{\diamond}(R)$ and $y\in W$, $x^*\succ_L y$.
\end{proposition}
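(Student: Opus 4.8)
The plan is to reduce the statement to the already-established fact that linear refinements of a complete quasi-acyclic relation are exactly the linear extensions of its strict part, namely Proposition \ref{lin-ref}. Fix $R\in\mathbf{A}$, $x^*\in V$ and $W\subseteq V\setminus\{x^*\}$ with $x^*\succ_R y$ for every $y\in W$. Let $L\in\mathbf{L}_\diamond(R)$ and $y\in W$ be arbitrary; the goal is to show $x^*\succ_L y$. Since $R\in\mathbf{A}$, Proposition \ref{lin-ref} gives $\mathbf{L}_\diamond(R)=\mathbf{L}^\diamond(S(R))$, so $L$ is a linear order with $S(R)\subseteq L$. By hypothesis $x^*\succ_R y$ means $(x^*,y)\in S(R)$, hence $(x^*,y)\in L$. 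It remains to upgrade this to $x^*\succ_L y$, i.e.\ to exclude $(y,x^*)\in L$.

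The key step is antisymmetry of $L$: since $x^*\neq y$ (because $W\subseteq V\setminus\{x^*\}$) and $L\in\mathbf{L}$ is in particular antisymmetric, $(x^*,y)\in L$ forces $(y,x^*)\notin L$. Therefore $(x^*,y)\in S(L)$, that is $x^*\succ_L y$. Since $L\in\mathbf{L}_\diamond(R)$ and $y\in W$ were arbitrary, the claim follows.

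There is essentially no obstacle here: the entire content is the identification $\mathbf{L}_\diamond(R)=\mathbf{L}^\diamond(S(R))$ from Proposition \ref{lin-ref}, which transports strict comparisons in $R$ into $L$, together with the trivial observation that in a linear (hence antisymmetric) order a pair in $L$ between distinct elements is automatically a strict pair. One should just be careful to invoke $x^*\neq y$ explicitly, as it is what makes antisymmetry applicable.
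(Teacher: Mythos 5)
Your proof is correct and follows essentially the same route as the paper: invoke Proposition \ref{lin-ref} to identify $\mathbf{L}_\diamond(R)$ with $\mathbf{L}^\diamond(S(R))$, transfer $(x^*,y)\in S(R)$ into $L$, and conclude strictness from $x^*\neq y$ together with antisymmetry of $L$. The only difference is that you spell out the antisymmetry step that the paper leaves implicit.
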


\begin{proof}  Let $L\in \mathbf{L}_{\diamond}(R).$ Then, by Proposition \ref{lin-ref}, we have that  $L\in \mathbf{L}^\diamond(S(R))$, that is, $L\supseteq S(R).$ Thus,  $x^*\succ_R y$ for all $y\in W$ implies $x^*\succeq_L y$ for all $y\in W$. On the other hand, since $W\subseteq V\setminus \{x^*\}$, we necessarily have $x^*\succ_L y$ for all $y\in W$.
\end{proof}

\begin{proposition} \label{2}
Let $R\in\mathbf{A}$, $x^*\in V$ and $W\subseteq V\setminus \{x^*\}$. Assume that, for every $y\in W$, $x^*\succeq_R y$. Then there exists $L\in \mathbf{L}_{\diamond}(R)$ such that, for every $y\in W$, $x^*\succ_L y$.
\end{proposition}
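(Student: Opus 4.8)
The plan is to produce $L$ as a linear extension of the relation obtained from $S(R)$ by adjoining exactly the arcs that force $x^*$ above $W$. Concretely, put
\[
Q=S(R)\cup\{(x^*,y):y\in W\}.
\]
The easy half of the argument is that the acyclicity of $Q$ gives everything at once: by Theorem \ref{greco}(a) there is then some $L\in\mathbf{L}^\diamond(Q)$; since $S(R)\subseteq Q\subseteq L$ we obtain $L\in\mathbf{L}^\diamond(S(R))=\mathbf{L}_\diamond(R)$ by Proposition \ref{lin-ref} (this is where $R\in\mathbf{A}$ is used); and for every $y\in W$ we have $(x^*,y)\in Q\subseteq L$, so, $L$ being antisymmetric and $x^*\neq y$, $(y,x^*)\notin L$, that is $x^*\succ_L y$. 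It is worth noticing that the elements $y\in W$ with $x^*\succ_R y$ are already handled by Proposition \ref{2-bis}, so the substance of the statement lies in breaking the ties $x^*\sim_R y$ with $y\in W$ in favour of $x^*$ --- which is exactly the effect of the extra arcs put into $Q$.

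The core of the proof is thus the acyclicity of $Q$, which I would establish by contradiction. Suppose there are distinct $z_1,\dots,z_m\in V$ with $m\ge 2$, $(z_i,z_{i+1})\in Q$ for every $i\in\ldbrack m-1\rdbrack$, and $(z_m,z_1)\in Q$. Every arc of $Q$ lying outside $S(R)$ has first coordinate $x^*$, and $x^*$ occurs at most once among the distinct $z_j$, so at most one arc of this cyclic sequence is outside $S(R)$. If there is none, the sequence is a cycle in $S(R)$, contradicting that $S(R)$ is acyclic (as $R\in\mathbf{A}$). Otherwise exactly one arc is outside $S(R)$, and after a cyclic shift it may be taken to be $(z_1,z_2)$; then $z_1=x^*$, $z_2\in W$, and all of $(z_2,z_3),\dots,(z_{m-1},z_m),(z_m,x^*)$ lie in $S(R)$.

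The remaining step --- extracting a contradiction with the hypothesis $x^*\succeq_R z_2$ (valid since $z_2\in W$) --- is the one I expect to require the most care. If $m=2$ the above says $z_2\succ_R x^*$, which is incompatible with $(x^*,z_2)\in R$. If $m\ge 3$ we have a strict chain $z_2\succ_R z_3\succ_R\cdots\succ_R z_m\succ_R x^*$, and collapsing it to the single comparison $z_2\succ_R x^*$ yields the same contradiction; this collapse is precisely where the transitivity of $S(R)$ enters, so the argument goes through smoothly once $R$ is quasi-transitive, while for a merely quasi-acyclic $R$ this sub-case needs a dedicated treatment. Granting the contradiction, $Q$ is acyclic, and by the first paragraph the required $L$ exists.
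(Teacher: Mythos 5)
Your route is genuinely different from the paper's. You adjoin all the pairs $(x^*,y)$, $y\in W$, to $S(R)$ in one shot, aim to prove the enlarged relation $Q$ acyclic, and then take any linear extension; the paper instead isolates the tied elements $Z=\{z\in W:z\sim_R x^*\}$ (the others being handled, as you note, by Proposition \ref{2-bis}), adjoins the pairs $(x^*,z)$ \emph{one at a time}, and shows by a minimal-counterexample argument, invoking Theorem \ref{greco}\,(b) at each step, that $\mathbf{L}_\diamond(R)\cap\mathbf{L}^\diamond(S(R)\cup U_i(R))\neq\varnothing$ throughout. Your argument is complete and correct when $S(R)$ is transitive, i.e.\ for $R\in\mathbf{T}$; but the proposition is stated for $R\in\mathbf{A}$, and the sub-case $m\ge 3$ that you explicitly leave open is a genuine gap, not a routine detail. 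Indeed it cannot be closed along your route: for merely quasi-acyclic $R$ the relation $Q$ may contain a cycle. Take $V=\{x^*,a,b\}$ and let $R$ be the complete relation with $x^*\sim_R a$, $a\succ_R b$, $b\succ_R x^*$; then $S(R)=\{(a,b),(b,x^*)\}$ is acyclic, so $R\in\mathbf{A}\setminus\mathbf{T}$, and with $W=\{a\}$ your $Q=\{(a,b),(b,x^*),(x^*,a)\}$ is a $3$-cycle.

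The same example shows something stronger, which vindicates your instinct that the transitivity of $S(R)$ is where the substance lies: the statement itself fails on $\mathbf{A}\setminus\mathbf{T}$. Here $\mathbf{L}_\diamond(R)=\mathbf{L}^\diamond(S(R))$ consists of the single linear order $a\succ b\succ x^*$ (any $L\supseteq S(R)$ contains $(a,x^*)$ by transitivity), so no linear refinement ranks $x^*$ above $a$ even though $x^*\succeq_R a$. The paper's proof does not detect this because it rests on Theorem \ref{greco}\,(b), which is valid for transitive acyclic relations but false for general acyclic ones: applied to $R_1=S(R)$ above with the $R_1$-incomparable pair $(x^*,a)$, it would yield a linear extension of $S(R)$ with $x^*\succ_L a$, which does not exist (the correct hypothesis is that $(a,x^*)$ is not in the transitive closure of $R_1$), and the relations $S(R)\cup U_i(R)$ to which the paper applies it need not be transitive. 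So, relative to the stated hypothesis $R\in\mathbf{A}$, your proof has an acknowledged and unfixable gap — but so does the paper's. Under $R\in\mathbf{T}$, which is all the paper ever uses downstream since flow relations are quasi-transitive by Theorem \ref{F-qt}, your one-shot construction goes through and is arguably simpler than the paper's stepwise argument.
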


\begin{proof}
Let $Z=\{z\in W: z\sim_R x^*\}$ and note that $x^*\notin Z.$
Then $W\setminus Z=\{y\in W: x^*\succ_R y\}\subseteq V\setminus \{x^*\}$ and, by Proposition \ref{2-bis}, for every $L\in \mathbf{L}_\diamond(R)$ and $y\in W\setminus Z$, we have $x^*\succ_L y$.
Thus, if $Z=\varnothing$, it is sufficient to pick any $L\in \mathbf{L}_\diamond(R)$.
Assume next that $Z\neq \varnothing.$ By the above observation, we only need to find $L\in \mathbf{L}_\diamond(R)$  such that for every $z\in Z$, $x^*\succ_L z$. Let $|Z|=m$ for some $1\leq m\leq |W|$ and let $Z=\{z_1,\dots, z_m\}$. Define, for every $i\in\{2,\dots, m+1\}$, $U_i(R)=\{(x^*,z_j):  j\in \ldbrack i-1\rdbrack\}$, and consider the $m+1$ relations on $V$ given by $R_1=S(R)$ and $R_i=S(R)\cup U_i(R)$.
Note that, for every $i\in \ldbrack m \rdbrack$, $(x^*,z_i)\not\in R_i$ and $(z_i,x^*)\not\in R_i$.  Indeed, since $z_i\in Z$, we have that $(x^*,z_i)\notin S(R)$ and $(z_i,x^*)\notin S(R)$ and, by $U_i(R)$ definition, we also  have $(x^*,z_i)\notin U_i(R), (z_i,x^*)\notin U_i(R).$

Let $I= \ldbrack m+1 \rdbrack$. We claim that, for every $i\in I$, $ \mathbf{L}_\diamond(R)\cap  \mathbf{L}^\diamond(R_i)\neq \varnothing.$ Assume, by contradiction, that there exists $i\in I$ such that $ \mathbf{L}_\diamond(R)\cap  \mathbf{L}^\diamond(R_i)=\varnothing.$ Let
$s=\min\{i\in I:  \mathbf{L}_\diamond(R)\cap  \mathbf{L}^\diamond(R_i)=\varnothing\}$
and note that, since by Proposition \ref{lin-ref}, $ \mathbf{L}_\diamond(R)\cap  \mathbf{L}^\diamond(R_1)= \mathbf{L}_\diamond(R)\cap  \mathbf{L}^\diamond(S(R))=\mathbf{L}_\diamond(R)\neq\varnothing,$ we surely have $s\geq 2$. By definition of $s$, we then have $\mathbf{L}_\diamond(R)\cap  \mathbf{L}^\diamond(R_{s-1})\neq \varnothing.$ From  $ \mathbf{L}^\diamond(R_{s-1})\neq \varnothing$, by Theorem \ref{greco}\,(a), we deduce that $R_{s-1}$ is acyclic. Moreover $x^*$ and $z_{s-1}$ are distinct $(x^*,z_{s-1})\not\in R_{s-1}$ and  $(z_{s-1},x^*)\not\in R_{s-1}$. Thus, by Theorem \ref{greco}\,(b), there exists  $L\in  \mathbf{L}^\diamond(R_{s-1})$ such that $x^*\succ_L  z_{s-1}.$ Since
$R_{s}=R_{s-1}\cup \{(x^*,z_{s-1})\}$, this says that $L\in  \mathbf{L}^\diamond(R_{s})$. From $R_s\supseteq S(R)$ and from Proposition \ref{lin-ref}, we then deduce that $L\in \mathbf{L}^\diamond(S(R))=\mathbf{L}_\diamond(R)$. Hence $L\in \mathbf{L}_\diamond(R)\cap  \mathbf{L}^\diamond(R_s)$, against the definition of $s.$

Now pick $L\in  \mathbf{L}_\diamond(R)\cap  \mathbf{L}^\diamond(R_{m+1}).$ Then we have $L\in  \mathbf{L}_\diamond(R)$ and  $x^*\succ_L z_i$ for all $i\in  \ldbrack m\rdbrack$, as required.
\end{proof}

Let $W$ be a nonempty finite subset of $V$. Recall that the restriction of $R$ to $W$ is the relation on $W$ defined by $R_{\mid W}=R\cap W^2$.

\begin{proposition} \label{ckR}
Let $R\in\mathbf{A}$ and $k\in \ldbrack n-1 \rdbrack $. Then
\[
C_k(R)=\bigcup_{L\in\mathbf{L}_\diamond(R)}C_k(L).
\]
In particular, $1\leq |C_k(R)|\leq |\mathbf{L}_\diamond(R)|$ and $C_k(R)\neq\varnothing$.
\end{proposition}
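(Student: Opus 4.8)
The plan is to prove both containments. For the inclusion $\bigcup_{L\in\mathbf{L}_\diamond(R)}C_k(L)\subseteq C_k(R)$, I would take $L\in\mathbf{L}_\diamond(R)$ and $W\in C_k(L)$; then for every $x\in W$ and $y\in V\setminus W$ we have $x\succeq_L y$, and since $L\subseteq R$ this gives $x\succeq_R y$, so $W\in C_k(R)$. This direction is completely routine and uses only the definition of $\mathbf{L}_\diamond(R)$.

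For the reverse inclusion $C_k(R)\subseteq\bigcup_{L\in\mathbf{L}_\diamond(R)}C_k(L)$, fix $W\in C_k(R)$, so that $x\succeq_R y$ for every $x\in W$ and $y\in V\setminus W$. The idea is to produce a single $L\in\mathbf{L}_\diamond(R)$ with $W\in C_k(L)$, i.e. with $x\succ_L y$ for all such pairs. Since $|W|=k$ and $W$ is finite, I can enumerate $W=\{x_1,\dots,x_k\}$ and apply Proposition \ref{2} successively: Proposition \ref{2} says that if $x^*\succeq_R y$ for all $y$ in some subset of $V\setminus\{x^*\}$, then there is $L\in\mathbf{L}_\diamond(R)$ with $x^*\succ_L y$ for all those $y$. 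But applying it once only handles one distinguished vertex $x^*$, so the main obstacle is to handle all $k$ vertices of $W$ simultaneously with one linear order. The cleanest route is to mimic the inductive construction inside the proof of Proposition \ref{2}: start from $S(R)$, and successively adjoin the pairs $(x_i,y)$ for $x_i\in W$, $y\in V\setminus W$, maintaining acyclicity at each step (this is where one invokes Theorem \ref{greco}(a),(b), exactly as in Proposition \ref{2}'s argument), until all required strict comparisons are forced; any linear extension of the resulting acyclic relation lies in $\mathbf{L}^\diamond(S(R))=\mathbf{L}_\diamond(R)$ by Proposition \ref{lin-ref} and puts every element of $W$ strictly above every element of $V\setminus W$.

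Alternatively, and more economically, I would invoke a slightly stronger form of Proposition \ref{2} applied with a set-valued distinguished object, or simply note that the argument of Proposition \ref{2} goes through verbatim when one replaces the single vertex $x^*$ by the set $W$: define, for the enumeration of all pairs $(x,y)\in W\times(V\setminus W)$, the relations $R_i=S(R)\cup U_i$ where $U_i$ collects the first $i-1$ such pairs, check that at stage $i$ the relevant pair is incomparable in $R_i$ (using that $x\succeq_R y$ but not $y\succ_R x$, so neither ordered pair lies in $S(R)$, and neither lies in $U_i$ provided we skip pairs already with $x\succ_R y$), and run the same minimality-of-$s$ contradiction. This yields $L\in\mathbf{L}_\diamond(R)\cap\mathbf{L}^\diamond(R_{\mathrm{last}})$, hence $W\in C_k(L)$.

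For the final sentence: $C_k(R)\neq\varnothing$ follows because $\mathbf{L}_\diamond(R)\neq\varnothing$ (Proposition \ref{lin-ref}), each $L\in\mathbf{L}$ has $|C_k(L)|=1$ for $k\in\ldbrack n-1\rdbrack$ (noted in Section 3), and $C_k(R)$ is the union of these nonempty singletons; the bound $|C_k(R)|\leq|\mathbf{L}_\diamond(R)|$ is immediate since $C_k(R)=\bigcup_{L\in\mathbf{L}_\diamond(R)}C_k(L)$ with each $|C_k(L)|=1$. I expect the set-valued adaptation of Proposition \ref{2} to be the only nontrivial point; everything else is bookkeeping.
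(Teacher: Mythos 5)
Your proposal is correct, and the easy inclusion and the final counting statements match the paper exactly, but for the nontrivial inclusion $C_k(R)\subseteq\bigcup_{L\in\mathbf{L}_\diamond(R)}C_k(L)$ you take a genuinely different route. The paper argues by direct construction: given $W\in C_k(R)$, it restricts $R$ to $W$ and to $V\setminus W$, notes both restrictions are complete and quasi-acyclic, takes linear refinements $L_1,L_2$ of them (Proposition \ref{lin-ref}), and defines $L=L_1\cup L_2\cup\{(x,y):x\in W,\ y\in V\setminus W\}$; this $L$ is a linear order, it is contained in $R$ precisely because $W$ is a $k$-maximum set, and $C_k(L)=\{W\}$. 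You instead rerun the Szpilrajn-type successive-extension argument from the proof of Proposition \ref{2}, adjoining to $S(R)$ the pairs $(x,y)\in W\times(V\setminus W)$ with $x\sim_R y$ one at a time and using Theorem \ref{greco}(a),(b) together with $\mathbf{L}^\diamond(S(R))=\mathbf{L}_\diamond(R)$ to keep a common element of $\mathbf{L}_\diamond(R)\cap\mathbf{L}^\diamond(R_i)$ at every stage; the details check out (in particular the incomparability of each adjoined pair holds, since a reversed pair $(y,x)$ with $y\in V\setminus W$, $x\in W$ can never occur in your list, and skipping the pairs with $x\succ_R y$ makes them automatic via $L\supseteq S(R)$). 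What your route buys is, in effect, a set-valued strengthening of Proposition \ref{2} (a single linear refinement placing all of $W$ strictly above all of $V\setminus W$), which is of some independent interest and reuses existing machinery; what the paper's route buys is brevity and independence from the delicate induction, since the ``stacking'' construction needs only Proposition \ref{lin-ref} applied to the two restrictions.
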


\begin{proof}Let $C=\bigcup_{L\in\mathbf{L}_\diamond(R)}C_k(L)$.
By Proposition \ref{lin-ref}, we have that $\mathbf{L}_\diamond(R)\neq \varnothing$
and, for every $L\in\mathbf{L}_\diamond(R)$, we have
$|C_k(L)|=1$. Thus surely $1\leq|C|\leq |\mathbf{L}_\diamond(R)|.$ We next show that $C_k(R)=C$.
Let $W\in C$. Then there exists $L\in \mathbf{L}_\diamond(R)$ such that $W\in C_k(L)$. Thus, for every $x\in W$ and $y\in V\setminus W$, we have $x\succeq_L y$ and so, since $L$ is a refinement of $R$, we also have  $x\succeq_R y$. Hence $W\in C_k(R).$
Consider now $W\in C_k(R)$. Since $R_{\mid W}$ is a complete and acyclic relation on $W$ and $R_{\mid V\setminus W}$ is a complete and acyclic relation on $V\setminus W$, by Proposition \ref{lin-ref}, we have that both admit linear refinements. Let $L_1$ be a linear refinement of  $R_{\mid W}$ and $L_2$ be a linear refinement of  $R_{\mid V\setminus W}$.  Then
\[
L=L_1\cup L_2\cup \{(x,y)\in V^2:x\in W,\, y\in V\setminus W\}
\]
is a refinement of $R$ which is a linear order of $V$, that is, $L\in \mathbf{L}_\diamond(R)$. Moreover, $C_k(L)=\{W\}$.
 Hence $W\in C$.
\end{proof}

The following result is now immediate taking into account that, given a relation $R$ in $\mathbf{A}$, we have that its reversal is still in $\mathbf{A}$, and that $ |\max(R)|=|C_1(R)|$ and $|\min(R)|= |\max(R^r)|$.
\begin{corollary}\label{exist-max-min}
Let $R\in \mathbf{A}$.  Then  $1\leq |\max(R)|\leq |\mathbf{L}_\diamond(R)|$ and $1\leq|\min(R)|\leq |\mathbf{L}_\diamond(R)|.$
\end{corollary}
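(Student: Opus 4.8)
The plan is to reduce everything to Proposition \ref{ckR} taken with $k=1$, together with two routine bijections. First I would note that a complete relation is reflexive, so for $R\in\mathbf{A}$ the assignment $x\mapsto\{x\}$ identifies $\max(R)$ with $C_1(R)$, whence $|\max(R)|=|C_1(R)|$. When $n\ge 2$ we have $1\in\ldbrack n-1\rdbrack$, so Proposition \ref{ckR} applies and gives $1\le|C_1(R)|\le|\mathbf{L}_\diamond(R)|$; combined with the previous identity this yields $1\le|\max(R)|\le|\mathbf{L}_\diamond(R)|$. The case $n=1$ is disposed of directly, since then $\max(R)=V$ is a singleton and $\mathbf{L}_\diamond(R)$ consists of the unique linear order on $V$, so both quantities equal $1$.

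For the statement on minima I would pass to the reversal. One checks that $R\in\mathbf{A}$ forces $R^r\in\mathbf{A}$: completeness is clear, and $S(R^r)=S(R)^r$ is acyclic because reversing a cycle of $S(R^r)$ produces a cycle of $S(R)$. By the very definition of the reversal relation, the maxima of $R^r$ are exactly the minima of $R$, so $\min(R)=\max(R^r)$; applying the first part to $R^r$ then gives $1\le|\min(R)|=|\max(R^r)|\le|\mathbf{L}_\diamond(R^r)|$. Finally, the map $L\mapsto L^r$ is a bijection of $\mathbf{L}_\diamond(R)$ onto $\mathbf{L}_\diamond(R^r)$, because reversal preserves completeness, antisymmetry and quasi-transitivity and because $L\subseteq R$ if and only if $L^r\subseteq R^r$; hence $|\mathbf{L}_\diamond(R^r)|=|\mathbf{L}_\diamond(R)|$, and the desired chain of inequalities follows.

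I do not expect a genuine difficulty here: the argument is essentially bookkeeping built on top of Proposition \ref{ckR}. The only points deserving a moment of attention are the boundary case $n=1$, which lies outside the hypotheses of Proposition \ref{ckR}, and the verification that reversal maps $\mathbf{A}$ into $\mathbf{A}$ and sends $\mathbf{L}_\diamond(R)$ bijectively onto $\mathbf{L}_\diamond(R^r)$.
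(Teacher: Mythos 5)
Your proposal is correct and follows essentially the same route as the paper: the paper also deduces the corollary from Proposition \ref{ckR} with $k=1$, using $|\max(R)|=|C_1(R)|$, $|\min(R)|=|\max(R^r)|$, and the fact that $R^r\in\mathbf{A}$. Your explicit verification that $L\mapsto L^r$ bijects $\mathbf{L}_\diamond(R)$ onto $\mathbf{L}_\diamond(R^r)$ (and the $n=1$ remark) just spells out details the paper leaves implicit.
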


\begin{proposition}\label{inclusion}
Let $R\in \mathbf{A}$ and $k,l, m \in \ldbrack n-1 \rdbrack $ with $l\leq k\leq m$. Then
for every $W\in C_k(R)$, there exist $W'\in C_m(R)$ and $W''\in C_l(R)$ such that $W''\subseteq W\subseteq W'$.
\end{proposition}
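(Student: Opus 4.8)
The plan is to pass to a single linear refinement of $R$ and exploit the fact that, for a linear order, the $k$-maximum sets for varying $k$ are precisely the nested initial segments of the arrangement of $V$ it induces.

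First I would fix $W\in C_k(R)$ and, using Proposition \ref{ckR}, choose $L\in\mathbf{L}_\diamond(R)$ with $W\in C_k(L)$; since $|C_k(L)|=1$ for a linear order, this forces $C_k(L)=\{W\}$. Next I would list $V$ as $v_1,\dots,v_n$ with $v_1\succ_L\cdots\succ_L v_n$ (legitimate because $L$ is total and antisymmetric) and observe that, for each $j\in\ldbrack n-1\rdbrack$, the unique member of $C_j(L)$ is the initial segment $\{v_1,\dots,v_j\}$: membership is clear since $i\le j<h$ gives $v_i\succ_L v_h$, and uniqueness is the already-quoted fact $|C_j(L)|=1$. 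In particular $W=\{v_1,\dots,v_k\}$.

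Then I would set $W''=\{v_1,\dots,v_l\}$ and $W'=\{v_1,\dots,v_m\}$; since $l\le k\le m$ this gives $W''\subseteq W\subseteq W'$, while $W''\in C_l(L)$ and $W'\in C_m(L)$ by the previous step. Finally, because $L\subseteq R$, every $j$-maximum set of $L$ is a $j$-maximum set of $R$ (exactly as in the proof of Proposition \ref{ckR}: if $x\in W''$ and $y\in V\setminus W''$ then $x\succeq_L y$, hence $(x,y)\in L\subseteq R$ and $x\succeq_R y$, and likewise for $W'$), so $W''\in C_l(R)$ and $W'\in C_m(R)$, as required. I do not expect any real obstacle; the only spot asking for a little care is the elementary description of $C_j(L)$ for a linear order $L$, which becomes routine once the totality and antisymmetry of $L$ are used.
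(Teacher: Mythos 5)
Your proof is correct, but it follows a genuinely different route from the paper's. The paper argues directly on $R$: to enlarge $W$ it picks an $(m-k)$-maximum set $\hat W$ of the restriction $R_{\mid V\setminus W}$ (nonempty by Proposition \ref{ckR}) and checks by hand that $W\cup\hat W\in C_m(R)$; to shrink $W$ it picks a $(k-l)$-maximum set $\overline W$ of the reversed restriction $R^r_{\mid W}$ and checks that $W\setminus\overline W\in C_l(R)$. You instead use the hard inclusion of Proposition \ref{ckR} once, to get a single linear refinement $L\in\mathbf{L}_\diamond(R)$ with $C_k(L)=\{W\}$, and then exploit the elementary fact that for a linear order the $j$-maximum sets are exactly the nested initial segments $\{v_1,\dots,v_j\}$ of the ordering $v_1\succ_L\cdots\succ_L v_n$ (which is legitimate: $L\in\mathbf{L}$ is complete, antisymmetric and has transitive strict part, so $\succ_L$ is a strict total order); the transfer back to $R$ via $L\subseteq R$ is exactly the easy inclusion of Proposition \ref{ckR}. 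Both arguments handle the degenerate cases $l=k$ and $m=k$ correctly (yours automatically, the paper's by a separate trivial remark). What each buys: the paper's proof is self-contained in its verification and uses the restriction technique that also underlies Proposition \ref{ckR}; yours is shorter once Proposition \ref{ckR} is in hand and yields a slightly stronger conclusion for free, namely that $W''\subseteq W\subseteq W'$ can be chosen as the $l$-, $k$- and $m$-maximum sets of one and the same linear refinement of $R$.
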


\begin{proof} Let $W\in C_k(R)$. Thus $W\subseteq V$, $|W|=k$ and, for every $x\in W$ and $y\in V\setminus W$, we have $x\succeq_R y$.
We first show that there exists $W'\in C_m(R)$ such that $W\subseteq W'.$
The case $m=k$ is obvious. Assume then that $m>k$. Then $n\geq 3$ and  $1\leq m-k\le (n-k)-1$. Consider now the set $V\setminus W$, whose size is $n-k\ge 2$. The relation $R_{\mid V\setminus W}$ is quasi-acyclic and complete and so,  by Proposition \ref{ckR}, we have that $C_{m-k}(R_{\mid V\setminus W})\neq \varnothing.$ Pick $\hat{W}\in C_{m-k}(R_{\mid V\setminus W})$ and define $W'=W\cup \hat{W}$. We show that $W'\in C_m(R)$.
Surely $|W'|=m$ because the sets $W$ and $ \hat{W}$ are disjoint. Let $x\in W'$ and $y\in V\setminus W'$. Since $V\setminus W'\subseteq V\setminus W$, if $x\in W$ we have $x\succeq_R y$. Since $V\setminus W'\subseteq V\setminus \hat{W}$, if $x\in \hat{W}$ we also have $x\succeq_R y$.
We next show that there exists $W''\in C_l(R)$ such that $W''\subseteq W.$ The case $l=k$ is obvious. Assume then that $l<k$. Then $n\geq 3$ and $1\leq k-l\le k-1.$
Since $W$ has size $k\ge 2$ and the relation $R^r_{\mid W}$ is quasi-acyclic and complete,  by Proposition \ref{ckR} we have that $C_{k-l}(R^r_{\mid W})\neq \varnothing.$ Pick $\overline{W}\in C_{k-l}(R^r_{\mid W})$ and define $W''=W\setminus \overline{W}$. Then $W''\subseteq W\subseteq V$ and $|W''|=l$. We show that $W''\in C_l(R)$.
 Let $x\in W''$ and $y\in V\setminus W''.$ If $y\in V\setminus W$, then $x\succeq_R y$ because $x\in W$. If instead  $y\in \overline{W}$, since  $x\in W\setminus \overline{W}$, we have $y\succeq_{R^r}x$, that is, $x\succeq_R y$.
\end{proof}

\begin{proposition} \label{3}
Let $R\in\mathbf{A}$,  $k\in \ldbrack n-1 \rdbrack $ and $x^*\in V$. Assume that there exist distinct $y_1,\ldots, y_{n-k}\in V\setminus \{x^*\}$ such that,
for every $j\in \ldbrack n-k \rdbrack$, $x^*\succeq_R y_j$. Then there exists $W\in C_k(R)$ such that $x^*\in W$.
\end{proposition}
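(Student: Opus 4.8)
The plan is to produce a single well-chosen linear refinement $L$ of $R$ in which $x^*$ lies strictly above all the $y_j$, to observe that the unique $k$-maximum set of $L$ must then contain $x^*$, and finally to transfer this back to $R$ via Proposition \ref{ckR}. This is the weak-dominance counterpart of Proposition \ref{3-bis}, and the argument is essentially a counting argument inside a linear order together with two previous propositions.

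First I would set $W_0=\{y_1,\dots,y_{n-k}\}$. Since the $y_j$ are distinct and all lie in $V\setminus\{x^*\}$, the set $W_0$ is a subset of $V\setminus\{x^*\}$ of size $n-k$, and by hypothesis $x^*\succeq_R y$ for every $y\in W_0$. As $R\in\mathbf{A}$, Proposition \ref{2} applies and yields some $L\in\mathbf{L}_\diamond(R)$ such that $x^*\succ_L y$ for every $y\in W_0$.

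Next, let $W$ be the unique element of $C_k(L)$; it exists and is unique because $L\in\mathbf{L}$ and $k\in\ldbrack n-1\rdbrack$. I would prove $x^*\in W$ by counting the strict predecessors of $x^*$ in $L$. Put $U=\{u\in V:u\succ_L x^*\}$. For each $j$, the relation $x^*\succ_L y_j$ gives $y_j\not\succeq_L x^*$, hence $y_j\notin U$; also $x^*\notin U$. Therefore $U\subseteq V\setminus(\{x^*\}\cup W_0)$, which has size $n-1-(n-k)=k-1$, so $|U|\le k-1$. On the other hand, if $x^*$ were not in $W$, then for every $w\in W$ we would have $w\succeq_L x^*$ (since $x^*\in V\setminus W$) and $w\neq x^*$, so $w\succ_L x^*$ by antisymmetry of $L$; this would force $W\subseteq U$ and hence $|U|\ge k$, a contradiction. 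Thus $x^*\in W$.

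Finally, since $L\in\mathbf{L}_\diamond(R)$, Proposition \ref{ckR} gives $C_k(L)\subseteq\bigcup_{L'\in\mathbf{L}_\diamond(R)}C_k(L')=C_k(R)$, so $W\in C_k(R)$ with $x^*\in W$, as required. I do not expect a genuine obstacle here: each step is a direct invocation of an earlier result, and the only mildly delicate point is the bookkeeping identifying the $k$-maximum set of a linear order with the set of vertices having at most $k-1$ strict predecessors.
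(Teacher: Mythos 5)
Your proof is correct and follows essentially the same route as the paper: invoke Proposition \ref{2} to get $L\in\mathbf{L}_\diamond(R)$ with $x^*\succ_L y_j$ for all $j$, argue by a counting/pigeonhole contradiction that the unique $k$-maximum set of $L$ contains $x^*$, and transfer back to $R$ via Proposition \ref{ckR}. The only cosmetic difference is that the paper phrases the contradiction as "some $y_{j^*}$ would have to lie in $W$ and hence beat $x^*$," while you bound the number of strict $L$-predecessors of $x^*$; these are the same argument.
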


\begin{proof} By Proposition \ref{2}, there exists $L\in \mathbf{L}_{\diamond}(R)$ such that, for every $j\in \ldbrack n-k \rdbrack$, $x^*\succ_L y_j$. By Proposition \ref{ckR}, we have $C_k(R)\supseteq C_k(L)$ and we also know that $C_k(L)=\{W\}$ for a suitable $W\subseteq V$ with $|W|=k$.
Assume now by contradiction that $x^*\not\in W$. Then there exists $j^*\in  \ldbrack n-k \rdbrack$ such that $y_{j^*}\in W$. That leads to the contradiction $y_{j^*}\succ_L x^*$.
\end{proof}

We end the section with a result which describes the effects on the set of the $k$-maximum sets of reversing a complete and quasi-transitive relation $R$. It says that there exists at least a $k$-maximum set for $R$ which does not appear as a $k$-maximum set for the reversal of $R$.

\begin{proposition} \label{qA} Let $R\in \mathbf{T}$ with $R\neq V^2$. Then, for every $k\in  \ldbrack n-1 \rdbrack $,
$C_k(R)\not\subseteq C_k(R^r)$.
\end{proposition}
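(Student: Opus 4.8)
The plan is, fixing $k\in\ldbrack n-1\rdbrack$, to exhibit a set $W\in C_k(R)$ with $W\notin C_k(R^r)$. Since $R$ is complete and $R\neq V^2$, there is a pair $(v,u)\notin R$, so by completeness $u\succ_R v$ and in particular $S(R)\neq\varnothing$. Moreover $R\in\mathbf{T}\subseteq\mathbf{A}$, so $S(R)$ is transitive and acyclic. The first step is to pass to a strict pair at the ``extremes'': starting from $u\succ_R v$ and using the transitivity of $S(R)$ together with the finiteness of $V$, I would climb up to an $R$-maximal element $p$ with $p\succeq_R u$ (hence $p\succ_R v$) and descend to an $R$-minimal element $q$ with $v\succeq_R q$ (hence $p\succ_R q$, and in particular $p\neq q$). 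Here $R$-maximal means that no $y\in V$ satisfies $y\succ_R p$, and $R$-minimal that no $y\in V$ satisfies $q\succ_R y$.

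The heart of the argument is to build a linear refinement $L$ of $R$ in which $p$ is ranked first and $q$ is ranked last. To that end I would consider the relation
\[
P=S(R)\cup\{(p,y):y\in V\setminus\{p\}\}\cup\{(y,q):y\in V\setminus\{q\}\}\in\mathbf{R}
\]
and check that $P$ is acyclic. The key observation is that $p$ has no in-edge in $P$ ($R$-maximality rules out $S(R)$-edges into $p$, and the two added families consist of edges out of $p$ or into $q\neq p$), and symmetrically $q$ has no out-edge in $P$. Hence no vertex of a hypothetical $P$-cycle could be $p$ or $q$; such a cycle would then use only edges of $S(R)$, contradicting the acyclicity of $S(R)$. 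By Theorem \ref{greco}(a) there is $L\in\mathbf{L}^\diamond(P)$; since $L\supseteq P\supseteq S(R)$, Proposition \ref{lin-ref} yields $L\in\mathbf{L}^\diamond(S(R))=\mathbf{L}_\diamond(R)$. Because $(p,y)\in P\subseteq L$ for every $y\neq p$, the element $p$ is the maximum of $L$, and because $(y,q)\in P\subseteq L$ for every $y\neq q$, the element $q$ is the minimum of $L$.

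To conclude, let $W$ be the unique member of $C_k(L)$, i.e. the set of the top $k$ elements of the linear order $L$. By Proposition \ref{ckR}, $W\in C_k(L)\subseteq C_k(R)$. Since $k\in\ldbrack n-1\rdbrack$, the maximum $p$ of $L$ belongs to $W$ while the minimum $q$ of $L$ belongs to $V\setminus W$. As $p\succ_R q$, we have $(q,p)\notin R$, that is $(p,q)\notin R^r$; so the pair $p\in W$, $q\in V\setminus W$ shows that $W$ is not a $k$-maximum set for $R^r$. Hence $W\in C_k(R)\setminus C_k(R^r)$, which is exactly $C_k(R)\not\subseteq C_k(R^r)$.

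The only genuinely delicate point is the simultaneous placement of a chosen maximal element at the top and a chosen minimal element at the bottom of a common linear refinement of $R$, i.e. the verification that the augmented relation $P$ remains acyclic; everything else is a routine application of Theorem \ref{greco} and Propositions \ref{lin-ref} and \ref{ckR}. One could equivalently run the argument by contradiction: assuming $C_k(R)\subseteq C_k(R^r)$ and combining with Proposition \ref{ckrev}, every $W\in C_k(R)$ would be flat across its boundary (that is, $x\sim_R y$ whenever $x\in W$ and $y\in V\setminus W$), and applying the construction above to any strict pair of $R$ contradicts this, forcing $R=V^2$.
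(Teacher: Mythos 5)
Your proof is correct. The first half coincides with the paper's preliminary step: both arguments use completeness, finiteness and the transitivity/asymmetry of $S(R)$ to produce a maximal element $p$ and a minimal element $q$ with $p\succ_R q$ (the paper phrases this as two contradiction arguments, you as chain climbing, but it is the same idea). Where you genuinely diverge is in producing a witness $W\in C_k(R)\setminus C_k(R^r)$: the paper first gets some $W\in C_k(R)$ containing $p$ via Proposition \ref{3}, and if $q\in W$ it performs an explicit swap, replacing $q$ by a maximal element $z^*$ of $R_{\mid V\setminus W}$ and checking by hand that $(W\setminus\{q\})\cup\{z^*\}$ is still a $k$-maximum set; you instead build one linear refinement $L\in\mathbf{L}_\diamond(R)$ with $p$ on top and $q$ at the bottom, by extending $S(R)\cup\{(p,y):y\neq p\}\cup\{(y,q):y\neq q\}$ via Theorem \ref{greco}(a) after checking acyclicity (your in-edge/out-edge argument for $p$ and $q$ is sound, and correctly reduces any offending cycle to a cycle in $S(R)$), and then take the top $k$ of $L$, invoking Propositions \ref{lin-ref} and \ref{ckR}. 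Your route is uniform in $k$ — one fixed $L$ furnishes a witness for every $k\in\ldbrack n-1\rdbrack$ simultaneously — and it reuses, in a sharpened form, the augmentation-plus-Szpilrajn technique the paper itself employs in Proposition \ref{2}; its only delicate point is exactly the acyclicity check you flag, since Proposition \ref{2} alone cannot force both the top and the bottom of the refinement at once. The paper's swap argument buys the opposite trade-off: no new extension lemma is needed and everything stays inside the already established calculus of $k$-maximum sets (Propositions \ref{3} and Corollary \ref{exist-max-min}), at the cost of a small case analysis. Your closing remark about the contradiction variant via Proposition \ref{ckrev} is also correct, though the flatness across the boundary of $W$ already follows directly from $W\in C_k(R)\cap C_k(R^r)$ without invoking \ref{ckrev}.
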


\begin{proof} Recalling that $\mathbf{A}\subseteq\mathbf{T}$,  by Corollary \ref{exist-max-min}, we have that $\max(R)\neq\varnothing$ and that $\min(R)\neq\varnothing$.
As a preliminary step, we show that there exist $x^*\in \max(R)$ and $y^*\in\min(R)$ such that $x^*\succ_R y^*$.

We first see that there exist $x^*\in \max(R)$ and $y\in V$ such that $x^*\succ_R y$. Assume, by contradiction, that that fact does not happen.
Since $R\neq V^2$ and $R$ is complete, there exist $x_0,x_1\in V$ such that $x_1\succ_R x_0$. In particular,  $x_1\neq x_0$. Since
 $x_1\notin\max(R)$, there exists $x_2\in V$ such that
$x_1\not\succeq_R x_2$.  By completeness,  we then have $x_2\succ_R x_1$ and, in particular, $x_2\neq x_1.$ We also have $x_2\neq x_0$, otherwise we would have the cycle $x_0\succ_R x_1\succ_R x_0.$
Since also $x_2\notin\max(R)$ we can repeat the argument constructing a sequence $x_0, \dots, x_m$ of arbitrary length $m\in\mathbb{N}$ of distinct elements of $V$,
against $V$ finite.

 We now show that there exists $y^*\in\min(R)$ such that $x^*\succ_R y^*$. Assume, by contradiction, that that fact does not happen. By what shown above, there exists $y_0\in V$ such that $x^*\succ_R y_0$. Since $y_0\not\in \min(R)$, there exists $y_1\in V$ such that $y_1\not\succeq_R y_0$. By completeness $y_0\succ_R y_1$ and, in particular,  $y_1\neq y_0$. Thus, by the transitivity of $S(R)$, we get $x^*\succ_R y_1$. Since $y_1\notin \min(R)$, there exists $y_2\in V$ with $y_1\succ_R y_2$. In particular  $y_2\neq y_1$. We also have $y_2\neq y_0$, otherwise we would have the cycle $y_0\succ_R y_1\succ_R y_0.$ Thus, as before, we can repeat the argument constructing a sequence $y_0, \dots, y_m$ of arbitrary length $m\in\mathbb{N}$ of distinct elements of $V$,
 against $V$ finite.

Fix now $x^*\in \max(R)$ and $y^*\in \min (R)$ such that $x^*\succ_R y^*$. Note that, in particular, $x^*\neq y^*$. We prove that there exists $W'\in C_k(R)$ such that $x^*\in W'$ and $y^*\not\in W'$. That implies the desired relation $C_k(R)\not\subseteq C_k(R^r)$ since certainly $W'\not\in C_k(R^r)$, due to the fact that $y^*\succ_{R^r} x^*$.

By Proposition \ref{3}, there exists $W\in C_k(R)$ such that $x^*\in W$. If $y^*\not \in W$, then set $W'=W$. Let then $y^*\in W$. Since $R_{|V\setminus W}$ is complete and acyclic, by Corollary \ref{exist-max-min}, there exists $z^*\in \max(R_{|V\setminus W})$.  Define then
$W'=(W\setminus \{y^*\})\cup\{z^*\}$ and show that it is indeed the $k$-subset we are looking for. Trivially we have that $|W'|=k$, $x^*\in W'$ and $y^*\not\in W'$. Thus we are left with proving that $W'\in C_k(R)$. Consider then $x\in W'$ and $y\in V\setminus W'$.
If $x\neq z^*$ and $y\neq y^*$, then $x\in W$ and $y\in V\setminus W$ so that $x\succeq_R y$. Let next $y=y^*$. Then $x\succeq_R y^*$ holds since $y^*\in \min(R)$.
Finally let  $x=z^*$ and $y\neq y^*$. Then $y\in V\setminus W$ and thus $z^*\succeq_R y$ because $z^*\in \max(R_{|V\setminus W})$.
\end{proof}

\section{Properties of the flow network rule}\label{fnr}

In this section we propose some properties of the flow network rule. Namely, we prove that that rule is decisive, neutral, homogeneous and that it satisfies non-imposition, efficiency, monotonicity, reversal symmetry and symmetry properties (Propositions \ref{BG-1}, \ref{neut-rule}, \ref{BG-2}, \ref{no-impo-rule}, \ref{BG-4}, \ref{BG-5}, \ref{BG-3} and \ref{sym-rule}, respectively). Moreover, we show that the flow network rule coincides with the Borda and the dual Borda network rule on a wide class of networks (Proposition \ref{balan-ut-rule}).
We emphasize that Propositions \ref{BG-2}, \ref{BG-4}, \ref{BG-5} and \ref{BG-3} are revisions of propositions stated, without proofs, in Belkin and Gvozdik (1989).

\begin{proposition}\label{BG-1}
Let $N\in\mathcal{N}$. Then $\mathfrak{F}_{\diamond}(N)\neq\varnothing$.
\end{proposition}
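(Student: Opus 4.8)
The statement $\mathfrak{F}_\diamond(N)\neq\varnothing$ is, by the very definition $\mathfrak{F}_\diamond(N)=\mathbf{L}_\diamond(\mathfrak{F}(N))$, precisely the assertion that the flow relation admits a linear refinement. The plan is therefore to reduce everything to the general machinery already established for relations in $\mathbf{A}$.

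First I would invoke Theorem \ref{F-qt}, which gives $\mathfrak{F}(N)\in\mathbf{T}$. Since $\mathbf{T}\subseteq\mathbf{A}$ (as noted right after the definitions of these classes in Section 3), we get $\mathfrak{F}(N)\in\mathbf{A}$, i.e. $\mathfrak{F}(N)$ is complete and quasi-acyclic. Then I would apply Proposition \ref{lin-ref} to $R=\mathfrak{F}(N)$: that proposition states that for any $R\in\mathbf{A}$ one has $\mathbf{L}_\diamond(R)=\mathbf{L}^\diamond(S(R))\neq\varnothing$. In particular $\mathbf{L}_\diamond(\mathfrak{F}(N))\neq\varnothing$, which is exactly $\mathfrak{F}_\diamond(N)\neq\varnothing$.

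There is essentially no obstacle here: the content of the proposition has been entirely front-loaded into Theorem \ref{F-qt} and Proposition \ref{lin-ref} (the latter resting in turn on Szpilrajn's theorem, quoted as Theorem \ref{greco}). The only thing to be careful about is to cite the chain of inclusions $\mathbf{L}\subseteq\mathbf{O}\subseteq\mathbf{T}\subseteq\mathbf{A}$ so that Proposition \ref{lin-ref}, which is stated for $\mathbf{A}$, legitimately applies to the quasi-transitive relation $\mathfrak{F}(N)$. So the whole argument is a two-line deduction; I would write it as: by Theorem \ref{F-qt} we have $\mathfrak{F}(N)\in\mathbf{T}\subseteq\mathbf{A}$, hence by Proposition \ref{lin-ref} $\mathfrak{F}_\diamond(N)=\mathbf{L}_\diamond(\mathfrak{F}(N))\neq\varnothing$.
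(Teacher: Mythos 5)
Your argument is correct and coincides with the paper's own proof, which is exactly ``Apply Theorem \ref{F-qt} and Proposition \ref{lin-ref}''; your explicit mention of the inclusion $\mathbf{T}\subseteq\mathbf{A}$ just spells out the step the paper leaves implicit. Nothing to add.
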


\begin{proof}
Apply Theorem \ref{F-qt} and Proposition \ref{lin-ref}.
\end{proof}

Given  $L\in \mathbf{L}$ and a bijection $\psi:V\rightarrow V$, let $\psi L\in \mathbf{L}$ be defined, for every $x,y\in V$, by setting $x\succeq_{\psi L}y$ if and only if $\psi^{-1}(x)\succeq_{L}\psi^{-1}(y).$

\begin{proposition}\label{neut-rule}
Let $N=(V,A,c) \in\mathcal{N}$ and $\psi:V\rightarrow V$ be a bijection. Then
$
\mathfrak{F}_\diamond(N^{\psi})=\{\psi L: L\in \mathfrak{F}_\diamond(N)\}.
$
\end{proposition}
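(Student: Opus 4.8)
The plan is to reduce the statement about linear refinements to the already-established neutrality of the flow network method (Proposition \ref{neut-F}). The key observation is that both sides of the claimed equality are subsets of $\mathbf{L}$, so it suffices to establish a double inclusion, and by symmetry (replacing $\psi$ by $\psi^{-1}$ and $N$ by $N^\psi$, using $(N^\psi)^{\psi^{-1}}=N$) it is enough to prove one inclusion, say $\{\psi L: L\in\mathfrak{F}_\diamond(N)\}\subseteq \mathfrak{F}_\diamond(N^\psi)$. Unwinding the definitions, $\mathfrak{F}_\diamond(N)=\mathbf{L}_\diamond(\mathfrak{F}(N))$, so a generic element of the left-hand side is $\psi L$ where $L\in\mathbf{L}$ and $L\subseteq\mathfrak{F}(N)$; I must show $\psi L\in\mathbf{L}$ and $\psi L\subseteq\mathfrak{F}(N^\psi)$.

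First I would check the easy bookkeeping: $\psi L\in\mathbf{L}$. This follows because relabelling by a bijection preserves completeness, quasi-transitivity and antisymmetry — indeed $x\succeq_{\psi L}y\iff \psi^{-1}(x)\succeq_L\psi^{-1}(y)$, and each of the three defining properties of $\mathbf{L}$ is transported verbatim along the bijection $\psi^{-1}$. (One can also note $\psi$ induces a bijection $V^2\to V^2$ carrying $L$ to $\psi L$, which makes all of this immediate.) This is routine and I would state it briefly.

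The substantive step is the inclusion $\psi L\subseteq\mathfrak{F}(N^\psi)$. Take $(u,v)\in\psi L$; I want $(u,v)\in\mathfrak{F}(N^\psi)$. Write $u=\psi(x)$, $v=\psi(y)$ with $x=\psi^{-1}(u)$, $y=\psi^{-1}(v)$. By definition of $\psi L$, $(u,v)\in\psi L$ means $x\succeq_L y$, hence $(x,y)\in L\subseteq\mathfrak{F}(N)$, i.e.\ $x\succeq_N y$. Now Proposition \ref{neut-F} gives exactly $x\succeq_N y\iff \psi(x)\succeq_{N^\psi}\psi(y)$, so $u\succeq_{N^\psi}v$, that is $(u,v)\in\mathfrak{F}(N^\psi)$. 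This shows $\psi L\in\mathbf{L}_\diamond(\mathfrak{F}(N^\psi))=\mathfrak{F}_\diamond(N^\psi)$, completing the inclusion.

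Finally I would close the argument by the symmetry remark: applying the inclusion just proved to the network $N^\psi$ and the bijection $\psi^{-1}$ yields $\{\psi^{-1}L': L'\in\mathfrak{F}_\diamond(N^\psi)\}\subseteq\mathfrak{F}_\diamond((N^\psi)^{\psi^{-1}})=\mathfrak{F}_\diamond(N)$, and applying $\psi$ to both sides (using $\psi(\psi^{-1}L')=L'$) gives the reverse inclusion $\mathfrak{F}_\diamond(N^\psi)\subseteq\{\psi L: L\in\mathfrak{F}_\diamond(N)\}$. I do not anticipate a genuine obstacle here; the only thing to be careful about is keeping the direction of $\psi$ versus $\psi^{-1}$ straight throughout — the potential for an index slip is the main (minor) hazard, and writing $\psi$ as a bijection on $V^2$ from the outset is the cleanest way to avoid it.
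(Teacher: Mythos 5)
Your argument is correct and follows essentially the same route as the paper's proof: one inclusion is obtained by transporting $L\subseteq\mathfrak{F}(N)$ along $\psi$ via Proposition \ref{neut-F}, and the reverse inclusion by applying that same inclusion to $N^{\psi}$ with $\psi^{-1}$, using $(N^{\psi})^{\psi^{-1}}=N$. The only cosmetic difference is that you explicitly verify $\psi L\in\mathbf{L}$, which the paper simply absorbs into the definition of $\psi L$.
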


\begin{proof} We first show that if $L\in \mathfrak{F}_\diamond(N)$ and $\psi:V\rightarrow V$ is a bijection, then $\psi L\in \mathfrak{F}_\diamond(N^{\psi})$.
Let $L\in \mathfrak{F}_\diamond(N)$. We need to see that, for every $x,y\in V$, $x\succeq_{\psi L}y$  implies $x\succeq_{N^{\psi}}y$. Fix then $x,y\in V$ and assume that
$x\succeq_{\psi L}y$, that is, $\psi^{-1}(x)\succeq_{L}\psi^{-1}(y)$. Since $L\in \mathfrak{F}_\diamond(N)$, we have $\psi^{-1}(x)\succeq_{N}\psi^{-1}(y)$ and, by Proposition \ref{neut-F},  we get $x\succeq_{N^{\psi}}y.$

Next let $M\in \mathfrak{F}_\diamond(N^{\psi})$ and prove that there exists $L\in \mathfrak{F}_\diamond(N)$ such that $M=\psi L$. Define then  $L=\psi^{-1}M$. Of course,
 we have that $M=\psi L$. Moreover, by the first part of the proof,  $L\in \mathfrak{F}_\diamond((N^{\psi})^{\psi^{-1}})=\mathfrak{F}_\diamond(N)$.
\end{proof}

By Proposition \ref{homo-net}, we immediately obtain the next result.

\begin{proposition}\label{BG-2}
Let $N\in\mathcal{N}$ and $\alpha\in\mathbb{N}$. Then $\mathfrak{F}_{\diamond}(N)=\mathfrak{F}_{\diamond}(\alpha N)$.
\end{proposition}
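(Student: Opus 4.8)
The plan is to reduce the statement directly to Proposition \ref{homo-net}. First I would unfold the definition of the flow network rule: by definition $\mathfrak{F}_{\diamond}(N)=\mathbf{L}_\diamond(\mathfrak{F}(N))$ and $\mathfrak{F}_{\diamond}(\alpha N)=\mathbf{L}_\diamond(\mathfrak{F}(\alpha N))$. Thus it suffices to show that $\mathfrak{F}(N)$ and $\mathfrak{F}(\alpha N)$ have the same set of linear refinements, which is immediate once we know these two relations coincide.

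The key step is exactly Proposition \ref{homo-net}, which asserts $\mathfrak{F}(\alpha N)=\mathfrak{F}(N)$ (it is proved there by checking $\varphi^{\alpha N}_{xy}=\alpha\varphi^{N}_{xy}$ for every $(x,y)\in A$, so the sign of $\varphi^N_{xy}-\varphi^N_{yx}$ is unchanged). Since the map $R\mapsto\mathbf{L}_\diamond(R)$ depends only on the relation $R$, substituting $\mathfrak{F}(\alpha N)=\mathfrak{F}(N)$ into the definition gives
\[
\mathfrak{F}_{\diamond}(\alpha N)=\mathbf{L}_\diamond(\mathfrak{F}(\alpha N))=\mathbf{L}_\diamond(\mathfrak{F}(N))=\mathfrak{F}_{\diamond}(N),
\]
as desired. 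There is essentially no obstacle here: the only content is the homogeneity of the underlying flow network method already recorded in Proposition \ref{homo-net}, and the proof is a one-line application of it.
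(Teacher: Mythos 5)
Your proposal is correct and matches the paper exactly: the paper also obtains this proposition immediately from Proposition \ref{homo-net}, since $\mathfrak{F}(\alpha N)=\mathfrak{F}(N)$ implies $\mathbf{L}_\diamond(\mathfrak{F}(\alpha N))=\mathbf{L}_\diamond(\mathfrak{F}(N))$. Nothing more is needed.
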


 \begin{proposition}\label{no-impo-rule}
Let $L\in\mathbf{L}$. Then there exists $N\in \mathcal{N}$ such that  $\mathfrak{F}_{\diamond}(N)=\{L\}$.
\end{proposition}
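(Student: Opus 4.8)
The plan is to take the linear order $L\in\mathbf{L}$ and produce a network whose flow relation is exactly $L$, and then observe that the linear refinements of a linear order are just the order itself. The natural candidate is the network $N_L=(V,A,c_L)$ already introduced in the excerpt, where $c_L(x,y)=1$ if $x\succ_L y$ and $c_L(x,y)=0$ otherwise. Since $L$ is in particular a complete and quasi-transitive relation, Proposition \ref{main-qt} applies and gives $\mathfrak{F}(N_L)=L$ immediately. Then, by definition of the flow network rule, $\mathfrak{F}_\diamond(N_L)=\mathbf{L}_\diamond(\mathfrak{F}(N_L))=\mathbf{L}_\diamond(L)$.

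The remaining step is to note that $\mathbf{L}_\diamond(L)=\{L\}$ whenever $L\in\mathbf{L}$. This is already recorded in the excerpt right after the definition of $\mathbf{L}_\diamond$ and $\mathbf{L}^\diamond$ (``if $L\in\mathbf{L}$, then $\mathbf{L}_\diamond(L)=\mathbf{L}^\diamond(L)=\{L\}$''), so I would simply cite that observation; if one wanted to spell it out, it follows because any $L'\in\mathbf{L}$ with $L'\subseteq L$ is, like $L$, a complete antisymmetric relation on the finite set $V$, hence has exactly $\binom{n}{2}+n$ ordered pairs, the same number as $L$, forcing $L'=L$. Combining, $\mathfrak{F}_\diamond(N_L)=\{L\}$, which is the assertion with $N=N_L$.

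There is essentially no obstacle here: the entire content has been front-loaded into Proposition \ref{main-qt} (itself resting on Lemma \ref{core}) and into the elementary remark about linear orders. The only thing to be careful about is the logical dependency — one must invoke that $L\in\mathbf{L}\subseteq\mathbf{T}$ so that Proposition \ref{main-qt} is applicable — but this is routine given the inclusion $\mathbf{L}\subseteq\mathbf{O}\subseteq\mathbf{T}\subseteq\mathbf{A}$ stated earlier. So the proof is a two-line deduction: apply Proposition \ref{main-qt} to get $\mathfrak{F}(N_L)=L$, then use $\mathbf{L}_\diamond(L)=\{L\}$.
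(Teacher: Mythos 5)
Your proposal is correct and coincides with the paper's own proof: the paper likewise takes $N=N_L$, invokes Proposition \ref{main-qt} to get $\mathfrak{F}(N_L)=L$, and concludes $\mathfrak{F}_\diamond(N_L)=\mathbf{L}_\diamond(L)=\{L\}$. Your additional remarks (that $L\in\mathbf{L}\subseteq\mathbf{T}$ justifies applying Proposition \ref{main-qt}, and the counting argument for $\mathbf{L}_\diamond(L)=\{L\}$) only make explicit what the paper leaves implicit.
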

\begin{proof}  Consider the network $N_L$. Then, by Proposition \ref{main-qt}, we have $\mathfrak{F}(N_L)=L$ and thus $\mathfrak{F}_{\diamond}(N)=\{L\}.$
\end{proof}

\begin{proposition}\label{BG-4} Let $N=(V,A,c)\in \mathcal{N}$ and $x^*,y^*\in V$ with $x^*\neq y^*$. Assume that $c(x^*,y^*)\ge c(y^*,x^*)$ and, for every $z\in V\setminus\{x^*,y^*\}$, $c(x^*,z)\ge c(y^*,z)$ and $c(z,y^*)\ge c(z,x^*)$.
Then there exists $L\in \mathfrak{F}_{\diamond}(N)$ such that $x^*\succ_{L} y^*$.
Moreover, if one of the conditions $(a)$, $(b)$, $(c)$ and $(d)$ of Proposition \ref{Eff-F} holds true,
then, for every $L\in \mathfrak{F}_{\diamond}(N)$, $x^*\succ_{L} y^*$.
\end{proposition}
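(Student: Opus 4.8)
The plan is to reduce the statement to Proposition~\ref{Eff-F} together with the two abstract facts about complete quasi-acyclic relations, Propositions~\ref{2} and~\ref{2-bis}. First I would recall that, by Theorem~\ref{F-qt}, $\mathfrak{F}(N)\in\mathbf{T}\subseteq\mathbf{A}$, so both of those propositions apply with $R=\mathfrak{F}(N)$, and that by definition $\mathfrak{F}_{\diamond}(N)=\mathbf{L}_{\diamond}(\mathfrak{F}(N))$.

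For the first assertion, the hypotheses on $c$ are precisely those of the first part of Proposition~\ref{Eff-F}, which therefore yields $x^*\succeq_N y^*$, that is, $x^*\succeq_{\mathfrak{F}(N)} y^*$. I would then invoke Proposition~\ref{2} with the singleton $W=\{y^*\}\subseteq V\setminus\{x^*\}$: it produces $L\in\mathbf{L}_{\diamond}(\mathfrak{F}(N))=\mathfrak{F}_{\diamond}(N)$ with $x^*\succ_L y^*$, which is exactly the claim.

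For the second assertion, if one of the conditions $(a)$--$(d)$ of Proposition~\ref{Eff-F} holds, that proposition strengthens the conclusion to $x^*\succ_N y^*$, i.e.\ $x^*\succ_{\mathfrak{F}(N)} y^*$. Applying Proposition~\ref{2-bis} with $W=\{y^*\}$ then gives that \emph{every} $L\in\mathbf{L}_{\diamond}(\mathfrak{F}(N))=\mathfrak{F}_{\diamond}(N)$ satisfies $x^*\succ_L y^*$.

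I do not expect a genuine obstacle here: the statement is essentially the ``lift'' of the efficiency property of the flow network method (Proposition~\ref{Eff-F}) to the level of the flow network rule, and the lifting is handled verbatim by Propositions~\ref{2} and~\ref{2-bis}. The only point worth checking is that the singleton $W=\{y^*\}$ really meets the set-inclusion hypothesis $W\subseteq V\setminus\{x^*\}$, which holds because $x^*\neq y^*$.
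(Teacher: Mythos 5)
Your argument is correct and is essentially identical to the paper's own proof: Proposition \ref{Eff-F} gives $x^*\succeq_N y^*$ (resp.\ $x^*\succ_N y^*$ under $(a)$--$(d)$), and then Proposition \ref{2} (resp.\ Proposition \ref{2-bis}) with $W=\{y^*\}$ yields the conclusion, exactly as in the paper. Nothing to add.
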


\begin{proof} By Proposition \ref{Eff-F}, we have that $x^*\succeq_N y^*$. Then, by Proposition \ref{2}, there exits $L\in \mathfrak{F}_\diamond(N)$ such that $x^*\succ_L y^*$.
Moreover, if one of the conditions $(a)$, $(b)$ ,$(c)$ and $(d)$ of Proposition \ref{Eff-F} holds true, Proposition \ref{Eff-F} gives $x^*\succ_N y^*$ and, by  Proposition \ref{2-bis}, we get $x^*\succ_{L} y^*$ for all $L\in \mathfrak{F}_{\diamond}(N)$.
\end{proof}

\begin{proposition}\label{BG-5}  Let $N=(V,A,c)\in \mathcal{N}$ and $x^* \in V$. Let $N'=(V,A,c')\in \mathcal{N}$ be such  that conditions $(a)$ and $(b)$ of Proposition \ref{Mon1-flow-rule} hold true.
Then, for every $L\in \mathfrak{F}_{\diamond}(N)$, there exists $L'\in \mathfrak{F}_{\diamond}(N')$ such that
$
\{y\in V: x^*\succ_L y\}\subseteq \{y\in V: x^*\succ_{L'}y\}.
$
Assume further that, for every $y\in V\setminus\{x^*\}$, $c'(x^*,y)> c(x^*,y)$ or $c'(y,x^*)< c(y,x^*)$. Then, for every $L\in \mathfrak{F}_{\diamond}(N)$ and  $L'\in \mathfrak{F}_{\diamond}(N')$, we have that
$
\{y\in V: x^*\succ_L y\}\subseteq \{y\in V: x^*\succ_{L'}y\}.
$
\end{proposition}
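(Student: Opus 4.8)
The plan is to deduce the statement from Proposition \ref{Mon1-flow-rule}, which already packages all the flow-theoretic content, combined with the order-theoretic Propositions \ref{2} and \ref{2-bis} applied to the relations $\mathfrak{F}(N)$ and $\mathfrak{F}(N')$; recall that these lie in $\mathbf{T}\subseteq\mathbf{A}$ by Theorem \ref{F-qt}. The only bridge needed between a linear refinement and the relation it refines is the inclusion $L\subseteq\mathfrak{F}(N)$, valid for every $L\in\mathfrak{F}_\diamond(N)=\mathbf{L}_\diamond(\mathfrak{F}(N))$.

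For the first assertion I would fix $L\in\mathfrak{F}_\diamond(N)$ and set $W=\{y\in V:x^*\succ_L y\}$. Since $L$ is antisymmetric, $x^*\notin W$, so $W\subseteq V\setminus\{x^*\}$; and since $(x^*,y)\in S(L)\subseteq L\subseteq\mathfrak{F}(N)$ for each $y\in W$, we get $x^*\succeq_N y$ for all $y\in W$. The hypotheses on $N$ and $N'$ are exactly conditions $(a)$ and $(b)$ of Proposition \ref{Mon1-flow-rule}, so its first part yields $x^*\succeq_{N'} y$ for every $y\in W$. Then I would apply Proposition \ref{2} with $R=\mathfrak{F}(N')\in\mathbf{A}$, the vertex $x^*$ and the set $W$, obtaining $L'\in\mathbf{L}_\diamond(\mathfrak{F}(N'))=\mathfrak{F}_\diamond(N')$ with $x^*\succ_{L'}y$ for every $y\in W$, i.e. $\{y\in V:x^*\succ_L y\}\subseteq\{y\in V:x^*\succ_{L'}y\}$. (If $W=\varnothing$ one simply picks any $L'\in\mathfrak{F}_\diamond(N')$, which is nonempty by Proposition \ref{lin-ref}.)

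For the strengthened conclusion I would add the hypothesis that, for every $y\in V\setminus\{x^*\}$, $c'(x^*,y)>c(x^*,y)$ or $c'(y,x^*)<c(y,x^*)$, and fix arbitrary $L\in\mathfrak{F}_\diamond(N)$ and $L'\in\mathfrak{F}_\diamond(N')$. With $W=\{y\in V:x^*\succ_L y\}\subseteq V\setminus\{x^*\}$ as before, each $y\in W$ satisfies $x^*\succeq_N y$, and the added hypothesis ensures that one among $x^*\succ_N y$, $c'(x^*,y)>c(x^*,y)$ and $c'(y,x^*)<c(y,x^*)$ holds; hence the second part of Proposition \ref{Mon1-flow-rule} gives $x^*\succ_{N'}y$, that is $x^*\succ_{\mathfrak{F}(N')}y$, for every $y\in W$. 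Applying Proposition \ref{2-bis} with $R=\mathfrak{F}(N')\in\mathbf{A}$, the vertex $x^*$ and the set $W$, I conclude $x^*\succ_{L'}y$ for every $y\in W$ and every $L'\in\mathbf{L}_\diamond(\mathfrak{F}(N'))$, which is the claimed inclusion.

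I do not expect a genuine obstacle here, since the monotone behaviour of the maximum flow values has already been isolated in Proposition \ref{Mon1-flow-rule}; the two points requiring a little care are the passage from $x^*\succ_L y$ to $x^*\succeq_N y$ through $L\subseteq\mathfrak{F}(N)$, and, in the second part, checking that the disjunctive premise of Proposition \ref{Mon1-flow-rule} is satisfied for every $y$ with $x^*\succ_L y$ — both of which are immediate from, respectively, the definition of a linear refinement and the added hypothesis.
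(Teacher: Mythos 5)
Your proof is correct and follows essentially the same route as the paper: pass from $x^*\succ_L y$ to $x^*\succeq_N y$ via $S(L)\subseteq L\subseteq\mathfrak{F}(N)$, invoke the two parts of Proposition \ref{Mon1-flow-rule}, and then apply Propositions \ref{2} and \ref{2-bis} to $\mathfrak{F}(N')\in\mathbf{T}\subseteq\mathbf{A}$. The only cosmetic difference is that you apply Proposition \ref{2} to the set $W=\{y:x^*\succ_L y\}$ rather than to the larger set $\{y\in V\setminus\{x^*\}:x^*\succeq_{N'}y\}$ used in the paper, which changes nothing.
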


\begin{proof}
Let $L\in \mathfrak{F}_{\diamond}(N)$. Then
$
\{y\in V: x^*\succ_L y\}\subseteq \{y\in V\setminus\{x^*\}: x^*\succeq_N y\}.
$
By Proposition \ref{Mon1-flow-rule}, we have that
$
 \{y\in V\setminus\{x^*\}: x^*\succeq_N y\}\subseteq  \{y\in V\setminus\{x^*\}: x^*\succeq_{N'} y\}.
$
By Proposition \ref{2}, there exists $L'\in \mathfrak{F}_{\diamond}(N')$ such that
$
\{y\in V\setminus\{x^*\}: x^*\succeq_{N'} y\}\subseteq \{y\in V: x^*\succ_{L'} y\},
$
and the first part of the statement follows.

Assume now  that, for every $y\in V\setminus\{x^*\}$, $c'(x^*,y)> c(x^*,y)$ or $c'(y,x^*)< c(y,x^*)$ and consider $L\in \mathfrak{F}_{\diamond}(N)$ and $L'\in \mathfrak{F}_{\diamond}(N')$. Then
$
\{y\in V: x^*\succ_L y\}\subseteq \{y\in V\setminus\{x^*\}: x^*\succeq_N y\}.
$
By Proposition \ref{Mon1-flow-rule}, we have that
$
 \{y\in V\setminus\{x^*\}: x^*\succeq_N y\}\subseteq  \{y\in V\setminus\{x^*\}: x^*\succ_{N'} y\}.
$
By Proposition \ref{2-bis}, we have that
$
\{y\in V\setminus\{x^*\}: x^*\succ_{N'} y\}\subseteq \{y\in V: x^*\succ_{L'} y\},
$
and the second part of the statement follows.
\end{proof}

\begin{proposition}\label{BG-3}
Let $N\in\mathcal{N}$. Then $\mathfrak{F}_{\diamond}(N^r)=\{L^r: L\in \mathfrak{F}_{\diamond}(N)\}$.
\end{proposition}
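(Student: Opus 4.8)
The plan is to reduce the statement to a general fact about linear refinements and reversals, and then invoke Proposition \ref{Reversal}. Recall that by definition $\mathfrak{F}_\diamond(M)=\mathbf{L}_\diamond(\mathfrak{F}(M))$ for every $M\in\mathcal{N}$, and that Proposition \ref{Reversal} gives $\mathfrak{F}(N^r)=\mathfrak{F}(N)^r$. Hence it suffices to prove that, for an arbitrary relation $R\in\mathbf{R}$,
\[
\mathbf{L}_\diamond(R^r)=\{L^r:L\in\mathbf{L}_\diamond(R)\}.
\]

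First I would record two elementary closure facts. Reversal is an involution on $\mathbf{R}$, i.e.\ $(R^r)^r=R$, and it preserves inclusions: $L\subseteq R$ if and only if $L^r\subseteq R^r$. Moreover $\mathbf{L}$ is closed under reversal: completeness and antisymmetry are visibly symmetric conditions, and since $S(R^r)=S(R)^r$, transitivity of $S(R)$ is equivalent to transitivity of $S(R^r)$, so $R\in\mathbf{L}$ if and only if $R^r\in\mathbf{L}$.

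With these in hand the two inclusions are immediate. If $L\in\mathbf{L}_\diamond(R)$, then $L\in\mathbf{L}$ and $L\subseteq R$, whence $L^r\in\mathbf{L}$ and $L^r\subseteq R^r$, that is $L^r\in\mathbf{L}_\diamond(R^r)$; this proves $\{L^r:L\in\mathbf{L}_\diamond(R)\}\subseteq\mathbf{L}_\diamond(R^r)$. Conversely, if $M\in\mathbf{L}_\diamond(R^r)$, then applying the same argument to $R^r$ in place of $R$ shows $M^r\in\mathbf{L}_\diamond((R^r)^r)=\mathbf{L}_\diamond(R)$, and since $M=(M^r)^r$ we get $M\in\{L^r:L\in\mathbf{L}_\diamond(R)\}$. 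Taking $R=\mathfrak{F}(N)$ and using $\mathfrak{F}(N^r)=\mathfrak{F}(N)^r$ then yields $\mathfrak{F}_\diamond(N^r)=\mathbf{L}_\diamond(\mathfrak{F}(N)^r)=\{L^r:L\in\mathbf{L}_\diamond(\mathfrak{F}(N))\}=\{L^r:L\in\mathfrak{F}_\diamond(N)\}$.

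There is essentially no hard step here: the only point requiring a line of verification is that reversal maps $\mathbf{L}$ to itself, which in turn rests on the identity $S(R^r)=S(R)^r$; everything else is formal manipulation of inclusions together with the involution property. This is presumably why the corresponding fact was stated without proof in Belkin and Gvozdik (1989).
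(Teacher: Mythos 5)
Your proposal is correct and follows essentially the same route as the paper: invoke Proposition \ref{Reversal} to get $\mathfrak{F}(N^r)=\mathfrak{F}(N)^r$ and then note that $\mathbf{L}_\diamond(R^r)=\{L^r:L\in\mathbf{L}_\diamond(R)\}$, a step the paper dismisses as immediate and you simply spell out (reversal is an inclusion-preserving involution mapping $\mathbf{L}$ to itself). No gap; the extra detail you provide is exactly the content the paper leaves to the reader.
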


\begin{proof}By Proposition \ref{Reversal}, we know that $\mathfrak{F}(N^r)=\mathfrak{F}(N)^r$. Then $\mathfrak{F}_{\diamond}(N^r)=\mathbf{L}_\diamond(\mathfrak{F}(N^r))=\mathbf{L}_\diamond(\mathfrak{F}(N)^r)$ and it is immediately observed that $\mathbf{L}_\diamond(\mathfrak{F}(N)^r)=\{L^r: L\in \mathfrak{F}_{\diamond}(N)\}$.
\end{proof}

\begin{proposition}\label{sym-rule}
Let $N\in\mathcal{N}$. Then $\mathfrak{F}_{\diamond}(N)=\mathbf{L}$ if and only if $N$ is pseudo-symmetric.
\end{proposition}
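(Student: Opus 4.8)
The plan is to read the statement off from Proposition \ref{flat} together with Proposition \ref{lin-ref}, reducing everything to a short chain of elementary equivalences about linear orders. First recall that $\mathfrak{F}(N)\in\mathbf{T}\subseteq\mathbf{A}$ by Theorem \ref{F-qt}, so Proposition \ref{lin-ref} gives
$\mathfrak{F}_{\diamond}(N)=\mathbf{L}_\diamond(\mathfrak{F}(N))=\mathbf{L}^\diamond(S(\mathfrak{F}(N)))$. Hence the equality $\mathfrak{F}_{\diamond}(N)=\mathbf{L}$ is equivalent to $\mathbf{L}^\diamond(S(\mathfrak{F}(N)))=\mathbf{L}$. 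The key intermediate step will be to show that, for an asymmetric relation $P$ on $V$, one has $\mathbf{L}^\diamond(P)=\mathbf{L}$ if and only if $P=\varnothing$: if $P=\varnothing$ then trivially $\mathbf{L}^\diamond(\varnothing)=\mathbf{L}$ since every relation contains $\varnothing$; conversely, if $(x,y)\in P$ (necessarily with $x\neq y$, by asymmetry of $P$), then every $L\in\mathbf{L}^\diamond(P)$ contains $(x,y)$ and, being antisymmetric, satisfies $x\succ_L y$, so the linear order placing $y$ above $x$ — which exists because $V$ is finite — lies in $\mathbf{L}\setminus\mathbf{L}^\diamond(P)$.

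Applying this with $P=S(\mathfrak{F}(N))$ yields that $\mathfrak{F}_{\diamond}(N)=\mathbf{L}$ holds if and only if $S(\mathfrak{F}(N))=\varnothing$. The last step is the elementary observation that a complete relation $R$ with $S(R)=\varnothing$ must equal $V^2$: for any $x,y\in V$, completeness gives $(x,y)\in R$ or $(y,x)\in R$, and $S(R)=\varnothing$ then forces both, so $R=V^2$. Thus $S(\mathfrak{F}(N))=\varnothing$ is equivalent to $\mathfrak{F}(N)=V^2$, which by Proposition \ref{flat} is equivalent to $N$ being pseudo-symmetric. Chaining these equivalences gives the claim. (Alternatively, one can bypass Proposition \ref{lin-ref} and argue directly: if $\mathfrak{F}(N)\neq V^2$, completeness gives distinct $x,y$ with $y\succ_N x$, and any linear order ranking $x$ above $y$ fails to refine $\mathfrak{F}(N)$; the converse direction uses $\mathbf{L}_\diamond(V^2)=\mathbf{L}$.)

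I do not anticipate a genuine obstacle here: essentially all of the content is already packaged in Proposition \ref{flat}, and the remainder is a routine sequence of equivalences. The only point that calls for an explicit word is the existence of a linear order with a prescribed ordered pair at the bottom, which is immediate for finite $V$ (enumerate $V$ starting with the two chosen vertices) and, if a citation is preferred, is the special case of Theorem \ref{greco}(b) applied to the empty relation.
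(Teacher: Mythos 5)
Your proof is correct and follows essentially the same route as the paper: both directions reduce the statement to the equivalence $\mathfrak{F}_{\diamond}(N)=\mathbf{L}\Leftrightarrow\mathfrak{F}(N)=V^2$ and then invoke Proposition \ref{flat}. The only difference is bookkeeping: where you pass through Proposition \ref{lin-ref} and a small lemma on linear extensions of asymmetric relations, the paper argues by contradiction from a strict pair $y\succ_N x$ using Proposition \ref{2-bis}, which is the same idea as the direct alternative you sketch at the end.
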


\begin{proof}
If $N$ is pseudo-symmetric then, by Proposition \ref{flat},  $\mathfrak{F}(N)=V^2$ so that $\mathfrak{F}_{\diamond}(N)=\mathbf{L}$. Assume now $\mathfrak{F}_{\diamond}(N)=\mathbf{L}$. We complete the proof showing that $\mathfrak{F}(N)=V^2$ and applying Proposition \ref{flat}. Consider $x,y\in V$. If $x=y$, then $x\succeq_N y$. If $x\neq y$ then suppose, by contradiction, that $x\not\succeq_N y$. Since $\mathfrak{F}(N)$ is complete, we have that
$y\succ_N x$. As a consequence, by Proposition \ref{2-bis}, we have that, for every $L\in\mathfrak{F}_{\diamond}(N)$, $y\succ_L x$ against $\mathfrak{F}_{\diamond}(N)=\mathbf{L}$.
\end{proof}

The Borda network rule $\mathfrak{B}_\diamond$ and the  dual Borda network rule $\widehat{\mathfrak{B}}_\diamond$ are respectively defined, for every $N\in\mathcal{N}$,
by
$\mathfrak{B}_\diamond(N)=\mathbf{L}_\diamond(\mathfrak{B}(N))$ and $\widehat{\mathfrak{B}}_\diamond(N)=\mathbf{L}_\diamond(\widehat{\mathfrak{B}}(N))$.
Immediately from Theorem \ref{OI}, we get the next proposition.
\begin{proposition}\label{balan-ut-rule}
Let $N\in \mathcal{B}\cup\mathcal{O}\cup\mathcal{I}$. Then $\mathfrak{F}_\diamond(N)=\mathfrak{B}_\diamond(N)=\widehat{\mathfrak{B}}_\diamond(N).$
\end{proposition}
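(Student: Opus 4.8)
The plan is to reduce the statement entirely to Theorem \ref{OI}. Recall that, by definition, the flow, Borda and dual Borda network rules are obtained from the corresponding network methods by applying the linear-refinement operator: $\mathfrak{F}_\diamond(N)=\mathbf{L}_\diamond(\mathfrak{F}(N))$, $\mathfrak{B}_\diamond(N)=\mathbf{L}_\diamond(\mathfrak{B}(N))$ and $\widehat{\mathfrak{B}}_\diamond(N)=\mathbf{L}_\diamond(\widehat{\mathfrak{B}}(N))$. Since $R\mapsto\mathbf{L}_\diamond(R)$ is a well-defined function on $\mathbf{R}$, equal relations are sent to equal sets of linear refinements, so it suffices to know that the three underlying relations coincide on the class of networks under consideration.

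First I would invoke Theorem \ref{OI}, which asserts exactly that for every $N\in\mathcal{B}\cup\mathcal{O}\cup\mathcal{I}$ one has $\mathfrak{F}(N)=\mathfrak{B}(N)=\widehat{\mathfrak{B}}(N)$. Applying $\mathbf{L}_\diamond(\cdot)$ to this chain of equalities yields $\mathbf{L}_\diamond(\mathfrak{F}(N))=\mathbf{L}_\diamond(\mathfrak{B}(N))=\mathbf{L}_\diamond(\widehat{\mathfrak{B}}(N))$, which by the definitions recalled above is precisely $\mathfrak{F}_\diamond(N)=\mathfrak{B}_\diamond(N)=\widehat{\mathfrak{B}}_\diamond(N)$, as required.

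There is essentially no obstacle: the proposition is an immediate corollary of Theorem \ref{OI}. If one wishes, one may additionally remark that each of these three sets of linear refinements is nonempty (this follows from $\mathfrak{F}(N),\mathfrak{B}(N),\widehat{\mathfrak{B}}(N)\in\mathbf{T}\subseteq\mathbf{A}$ together with Proposition \ref{lin-ref}, cf.\ Proposition \ref{BG-1}), but this observation is not needed for the equality of the three sets and can be omitted.
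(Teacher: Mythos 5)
Your argument is correct and coincides with the paper's: Proposition \ref{balan-ut-rule} is stated there as an immediate consequence of Theorem \ref{OI}, exactly because applying the operator $\mathbf{L}_\diamond$ to the equal relations $\mathfrak{F}(N)=\mathfrak{B}(N)=\widehat{\mathfrak{B}}(N)$ yields the equality of the three network rules. Your optional remark on nonemptiness is fine but, as you note, not needed.
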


\section{Properties of the flow $k$-multiwinner network solution}\label{fns}

This section is devoted to present some remarkable properties of the flow $k$-multiwinner network solution. In the five propositions below we prove that it is decisive, neutral, homogeneous and satisfies a non-imposition property as well as a consistency property with respect to the parameter $k$ (Propositions \ref{decisive}, \ref{neut-sol}, \ref{homo-net2}, \ref{no-impo-k} and \ref{inclusion2}, respectively).

\begin{proposition}\label{decisive}
Let $N\in\mathcal{N}$ and $k\in\ldbrack n-1 \rdbrack$. Then $\mathfrak{F}_k(N)\neq\varnothing$.
\end{proposition}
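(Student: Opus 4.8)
The plan is to obtain the claim as an immediate consequence of Theorem \ref{F-qt} and Proposition \ref{ckR}. First I would recall that, by definition, $\mathfrak{F}_k(N)=C_k(\mathfrak{F}(N))$, so proving $\mathfrak{F}_k(N)\neq\varnothing$ amounts to showing $C_k(\mathfrak{F}(N))\neq\varnothing$. By Theorem \ref{F-qt} we have $\mathfrak{F}(N)\in\mathbf{T}$, and since $\mathbf{T}\subseteq\mathbf{A}$, the flow relation $\mathfrak{F}(N)$ is in particular complete and quasi-acyclic.

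Next I would invoke Proposition \ref{ckR}, which asserts that for every $R\in\mathbf{A}$ and every $k\in\ldbrack n-1\rdbrack$ one has $C_k(R)=\bigcup_{L\in\mathbf{L}_\diamond(R)}C_k(L)$ and, in particular, $C_k(R)\neq\varnothing$. Applying this with $R=\mathfrak{F}(N)$ and the given $k\in\ldbrack n-1\rdbrack$ yields $C_k(\mathfrak{F}(N))\neq\varnothing$, i.e. $\mathfrak{F}_k(N)\neq\varnothing$, which completes the proof.

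There is essentially no obstacle here: the statement is a short corollary packaging the fundamental structural fact about the flow relation (it lies in $\mathbf{T}\subseteq\mathbf{A}$) together with the general existence result for $k$-maximum sets of complete quasi-acyclic relations. The only things to be careful about are citing the correct inclusion $\mathbf{T}\subseteq\mathbf{A}$ (stated right after the definitions of these classes) and noting that the hypothesis $k\in\ldbrack n-1\rdbrack$ is exactly what Proposition \ref{ckR} requires, so no separate treatment of $k=n$ or degenerate cases is needed.
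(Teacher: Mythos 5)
Your proof is correct and follows exactly the paper's argument: Theorem \ref{F-qt} gives $\mathfrak{F}(N)\in\mathbf{T}\subseteq\mathbf{A}$, and Proposition \ref{ckR} then yields $C_k(\mathfrak{F}(N))=\mathfrak{F}_k(N)\neq\varnothing$. Nothing further is needed.
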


\begin{proof}
It immediately follows from Theorem \ref{F-qt} and Proposition \ref{ckR}.
\end{proof}

The next two results are easily obtained applying Proposition \ref{neut-F} and Proposition \ref{homo-net}.
\begin{proposition}\label{neut-sol}
Let  $N=(V,A,c) \in\mathcal{N}$, $\psi:V\rightarrow V$ bijective and $k\in \ldbrack n-1 \rdbrack $. Then
$
\mathfrak{F}_k(N^{\psi})=\{\psi(W): W\in \mathfrak{F}_k(N)\}.
$
\end{proposition}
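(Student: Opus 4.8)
The plan is to reduce everything to the pointwise neutrality statement already established in Proposition \ref{neut-F}, using only the bookkeeping that $\psi$ is a bijection. Recall that $\mathfrak{F}_k(N)=C_k(\mathfrak{F}(N))$ and $\mathfrak{F}_k(N^{\psi})=C_k(\mathfrak{F}(N^{\psi}))$, so it suffices to prove that, for every $W\in\mathbb{P}_k(V)$, one has $W\in C_k(\mathfrak{F}(N))$ if and only if $\psi(W)\in C_k(\mathfrak{F}(N^{\psi}))$.

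First I would fix $W\in\mathbb{P}_k(V)$. Since $\psi$ is a bijection, $|\psi(W)|=|W|=k$, so $\psi(W)\in\mathbb{P}_k(V)$, and moreover $\psi(V\setminus W)=V\setminus\psi(W)$. By definition, $W\in C_k(\mathfrak{F}(N))$ means that $x\succeq_N y$ for all $x\in W$ and $y\in V\setminus W$. By Proposition \ref{neut-F}, the condition $x\succeq_N y$ is equivalent to $\psi(x)\succeq_{N^{\psi}}\psi(y)$. As $x$ ranges over $W$, $\psi(x)$ ranges over $\psi(W)$; as $y$ ranges over $V\setminus W$, $\psi(y)$ ranges over $V\setminus\psi(W)$; both correspondences are bijective. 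Hence the family of conditions ``$x\succeq_N y$ for all $x\in W$, $y\in V\setminus W$'' holds if and only if the family ``$x'\succeq_{N^{\psi}} y'$ for all $x'\in\psi(W)$, $y'\in V\setminus\psi(W)$'' holds, i.e. if and only if $\psi(W)\in C_k(\mathfrak{F}(N^{\psi}))$.

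Finally I would assemble the set equality. From the equivalence just proved, $\{\psi(W):W\in C_k(\mathfrak{F}(N))\}\subseteq C_k(\mathfrak{F}(N^{\psi}))$. Conversely, given $W'\in C_k(\mathfrak{F}(N^{\psi}))$, put $W=\psi^{-1}(W')\in\mathbb{P}_k(V)$; then $\psi(W)=W'$, and applying the equivalence (or applying the first inclusion with $\psi^{-1}$ and $N^{\psi}$ in place of $\psi$ and $N$, noting $(N^{\psi})^{\psi^{-1}}=N$) gives $W\in C_k(\mathfrak{F}(N))$. Therefore $C_k(\mathfrak{F}(N^{\psi}))=\{\psi(W):W\in C_k(\mathfrak{F}(N))\}$, which is exactly $\mathfrak{F}_k(N^{\psi})=\{\psi(W):W\in\mathfrak{F}_k(N)\}$.

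There is no real obstacle here: the only points requiring a word of care are that $\psi$ being a bijection preserves cardinality and maps complements to complements, so that the passage from $W$ to $\psi(W)$ stays inside $\mathbb{P}_k(V)$ and the quantifiers over $V\setminus W$ translate correctly; everything else is an immediate invocation of Proposition \ref{neut-F}.
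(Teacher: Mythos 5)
Your proof is correct and follows exactly the route the paper intends: the paper simply states that the result is ``easily obtained applying Proposition \ref{neut-F}'', and your argument is precisely the omitted bookkeeping (bijectivity preserving cardinality and complements, plus the equivalence $x\succeq_N y \Leftrightarrow \psi(x)\succeq_{N^\psi}\psi(y)$) spelled out in full.
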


\begin{proposition}\label{homo-net2}
Let  $N\in\mathcal{N}$, $\alpha\in\mathbb{N}$ and  $k\in\ldbrack n-1 \rdbrack$. Then $\mathfrak{F}_k(\alpha N)=\mathfrak{F}_k(N)$.
\end{proposition}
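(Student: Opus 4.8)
The plan is to reduce everything immediately to Proposition \ref{homo-net}. By definition of the flow $k$-multiwinner network solution we have $\mathfrak{F}_k(\alpha N)=C_k(\mathfrak{F}(\alpha N))$ and $\mathfrak{F}_k(N)=C_k(\mathfrak{F}(N))$, so it suffices to observe that $\mathfrak{F}(\alpha N)=\mathfrak{F}(N)$ and that $C_k$ depends only on the relation fed into it. Thus the only thing to invoke is that multiplying all the capacities of a network by a fixed positive integer does not change the associated flow relation, which is exactly the content of Proposition \ref{homo-net}.

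Concretely, I would first recall that $\alpha\in\mathbb{N}$ means $\alpha\ge 1$, so Proposition \ref{homo-net} applies and yields $\mathfrak{F}(\alpha N)=\mathfrak{F}(N)$ as relations on $V$. I would then apply $C_k$ to both sides: since $\mathfrak{F}(\alpha N)$ and $\mathfrak{F}(N)$ are literally the same subset of $V^2$, the sets of their $k$-maximum sets coincide, i.e. $C_k(\mathfrak{F}(\alpha N))=C_k(\mathfrak{F}(N))$. Rewriting both sides through the definition of $\mathfrak{F}_k$ gives $\mathfrak{F}_k(\alpha N)=\mathfrak{F}_k(N)$, as desired.

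There is essentially no obstacle here: the statement is a formal consequence of Proposition \ref{homo-net} together with the definition $\mathfrak{F}_k(\cdot)=C_k(\mathfrak{F}(\cdot))$, so the "proof" is a two-line chain of equalities. The only point worth stating explicitly, for cleanliness, is that $C_k$ is a function of the relation alone (not of the network), so that equality of the flow relations forces equality of the corresponding families of $k$-maximum sets; this is immediate from the definition of $C_k(R)$ given in Section \ref{relation-2}.

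\begin{proof}
Since $\alpha\in\mathbb{N}$, Proposition \ref{homo-net} gives $\mathfrak{F}(\alpha N)=\mathfrak{F}(N)$. Hence
\[
\mathfrak{F}_k(\alpha N)=C_k(\mathfrak{F}(\alpha N))=C_k(\mathfrak{F}(N))=\mathfrak{F}_k(N).
\]
\end{proof}
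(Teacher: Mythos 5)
Your proposal is correct and matches the paper's intended argument: the paper also obtains Proposition \ref{homo-net2} directly from Proposition \ref{homo-net} together with the definition $\mathfrak{F}_k(\cdot)=C_k(\mathfrak{F}(\cdot))$. Nothing further is needed.
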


\begin{proposition}\label{no-impo-k}
Let $W\in \mathbb{P}_k(V) $.  Then there exists $N\in \mathcal{N}$ such that  $\mathfrak{F}_{k}(N)=\{W\}$.
\end{proposition}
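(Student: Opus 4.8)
The plan is to reduce everything to Proposition \ref{main-qt} by exhibiting a linear order whose set of top $k$ elements is exactly $W$, and then taking the associated network $N_W$... more precisely $N_L$ for that linear order $L$.

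First I would build $L$ explicitly. Since $k\in\ldbrack n-1\rdbrack$, the set $W$ is nonempty and proper, so both $W$ and $V\setminus W$ are nonempty finite sets and each carries a linear order; fix a linear order $L_1$ on $W$ and a linear order $L_2$ on $V\setminus W$, and set
\[
L=L_1\cup L_2\cup\{(x,y)\in V^2:\ x\in W,\ y\in V\setminus W\}.
\]
One checks directly that $L\in\mathbf{L}$ and that $x\succ_L y$ for every $x\in W$ and $y\in V\setminus W$; hence $W$ is a $k$-maximum set for $L$, i.e. $W\in C_k(L)$. Since $L\in\mathbf{L}$ and $k\in\ldbrack n-1\rdbrack$, we know $|C_k(L)|=1$, and therefore $C_k(L)=\{W\}$.

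Next I would invoke Proposition \ref{main-qt}: as $L\in\mathbf{L}\subseteq\mathbf{T}$, the network $N_L$ satisfies $\mathfrak{F}(N_L)=L$. Consequently
\[
\mathfrak{F}_k(N_L)=C_k(\mathfrak{F}(N_L))=C_k(L)=\{W\},
\]
so the choice $N=N_L$ proves the statement.

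I expect no real obstacle here: the construction of $L$ is the same one already used in the proof of Proposition \ref{ckR}, and the only point needing a one-line check is that the unique $k$-maximum set of the linear order $L$ is precisely $W$, which is immediate from the construction together with the already-noted fact that a linear order has exactly one $k$-maximum set for each $k\in\ldbrack n-1\rdbrack$. The rest is a direct application of Proposition \ref{main-qt}.
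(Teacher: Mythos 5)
Your proof is correct and follows essentially the same route as the paper: the paper also takes a linear order $L$ placing the elements of $W$ in the top $k$ positions and applies Proposition \ref{main-qt} to conclude $\mathfrak{F}(N_L)=L$, hence $\mathfrak{F}_k(N_L)=C_k(L)=\{W\}$. You merely spell out the construction of $L$ and the fact that $C_k(L)=\{W\}$ in more explicit detail than the paper does.
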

\begin{proof}  Let $L$ be a linear oder having ranked in its first $k$ positions the element of $W.$  Then, by Proposition \ref{main-qt}, we have $\mathfrak{F}(N_L)=L$ and thus $\mathfrak{F}_{k}(N)=\{W\}.$
\end{proof}

\begin{proposition}\label{inclusion2}
Let  $N\in\mathcal{N}$ and $l,k,m \in\ldbrack n-1 \rdbrack$ with $l\le k\le  m$. For every $W\in \mathfrak{F}_k(N)$, there exists $W'\in \mathfrak{F}_m(N)$ and $W''\in \mathfrak{F}_l(N)$ such that $W''\subseteq W\subseteq W'$.
\end{proposition}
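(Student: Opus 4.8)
The plan is to reduce the statement directly to Proposition \ref{inclusion}, which is exactly the corresponding property for an arbitrary relation in $\mathbf{A}$. First I would recall that, by the very definition of the flow $k$-multiwinner network solution, $\mathfrak{F}_j(N)=C_j(\mathfrak{F}(N))$ for every $j\in\ldbrack n-1 \rdbrack$; hence the asserted existence of $W,W',W''$ with $W''\subseteq W\subseteq W'$ is nothing but a statement about the sets $C_l$, $C_k$, $C_m$ of the single relation $R=\mathfrak{F}(N)$. Next I would invoke Theorem \ref{F-qt}, which gives $\mathfrak{F}(N)\in\mathbf{T}$, and since $\mathbf{T}\subseteq\mathbf{A}$, the relation $R=\mathfrak{F}(N)$ belongs to $\mathbf{A}$. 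Then Proposition \ref{inclusion}, applied to this $R$ with the indices $l\le k\le m$, yields, for each $W\in C_k(R)$, sets $W'\in C_m(R)$ and $W''\in C_l(R)$ with $W''\subseteq W\subseteq W'$. Rewriting $C_j(R)=\mathfrak{F}_j(N)$ concludes the argument.

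There is essentially no obstacle here: all the real work is done in Proposition \ref{inclusion} (enlarging a $k$-maximum set by adjoining an element of $C_{m-k}$ of the restriction of $R$ to $V\setminus W$, and shrinking it by removing an element of $C_{k-l}$ of the reversal restricted to $W$), and the present proposition is simply its instantiation at $R=\mathfrak{F}(N)$. The only point requiring care — and the reason the proof is not completely tautological — is that Proposition \ref{inclusion} demands a relation lying in $\mathbf{A}$, whereas a priori a network method only produces a relation in $\mathbf{R}$; this gap is closed precisely by Theorem \ref{F-qt}, which places $\mathfrak{F}(N)$ in $\mathbf{T}\subseteq\mathbf{A}$. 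So the proof will read as a one-line application of Theorem \ref{F-qt} and Proposition \ref{inclusion}, in the same spirit as Propositions \ref{neut-sol} and \ref{homo-net2}.
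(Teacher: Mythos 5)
Your proposal is correct and coincides with the paper's own proof, which simply applies Proposition \ref{inclusion} to the relation $\mathfrak{F}(N)$ (with membership in $\mathbf{A}$ guaranteed, as you note, by Theorem \ref{F-qt} and the inclusion $\mathbf{T}\subseteq\mathbf{A}$). Nothing further is needed.
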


\begin{proof}
Apply Proposition \ref{inclusion} to the relation $\mathfrak{F}(N)$.
\end{proof}

The next two propositions show that $\mathfrak{F}_k$ satisfies properties sharing strong similarities with the concept of efficiency and monotonicity for social choice correspondences.

\begin{proposition}\label{Eff} Let $N=(V,A,c)\in \mathcal{N}$,  $x^*\in V$ and $k\in\ldbrack n-1 \rdbrack$. Assume that there exist distinct $y^*_1,\ldots, y^*_k\in V\setminus\{x^*\}$ such that,  for every $j\in \ldbrack k \rdbrack$ and $z\in V\setminus\{x^*,y^*_j\}$, we have $c(y^*_j,x^*)\ge c(x^*,y^*_j)$,
$c(y^*_j,z)\ge c(x^*,z)$ and $c(z,x^*)\ge c(z,y^*_j)$.
Then there exists $W\in \mathfrak{F}_k(N)$ such that  $x^*\notin W.$
Assume further that,  for every $j\in\ldbrack k \rdbrack$, at least one of the following conditions holds true:
\begin{itemize}
	\item[(a)] $c(y^*_j,x^*)> c(x^*,y^*_j)$;
	\item[(b)] for every $z\in V\setminus\{x^*,y^*_j\}$, $c(y^*_j,z)> c(x^*,z)$;
	\item[(c)] for every $z\in V\setminus\{x^*,y^*_j\}$,  $c(z,x^*)> c(z,y^*_j)$;
	\item[(d)] there exists $z\in V\setminus\{x^*,y^*_j\}$  such that $c(y^*_j,z)> c(x^*,z)$ and  $c(z,x^*)> c(z,y^*_j)$.
\end{itemize}
Then, for every $W\in \mathfrak{F}_k(N)$,  $x^*\notin W.$
\end{proposition}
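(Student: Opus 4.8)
The plan is to reduce everything to results already at our disposal, namely Propositions \ref{Eff-F}, \ref{3}, \ref{3-bis} and \ref{ckrev} together with Theorem \ref{F-qt}. The first observation is that the hypothesis, read one index $j$ at a time, is exactly the hypothesis of Proposition \ref{Eff-F} with the two distinguished vertices interchanged: letting $y^*_j$ play the role of ``$x^*$'' and $x^*$ play the role of ``$y^*$'' there, the three inequalities $c(y^*_j,x^*)\ge c(x^*,y^*_j)$, $c(y^*_j,z)\ge c(x^*,z)$ and $c(z,x^*)\ge c(z,y^*_j)$ become precisely the three inequalities required by Proposition \ref{Eff-F}. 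Hence I would first invoke that proposition to obtain $y^*_j\succeq_N x^*$ for every $j\in\ldbrack k\rdbrack$; and when, for a given $j$, one among (a)--(d) holds, the same role exchange turns it into one among the conditions (a)--(d) of Proposition \ref{Eff-F}, which then yields the strict relation $y^*_j\succ_N x^*$.

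Next I would move to the reversal relation. Writing $R=\mathfrak{F}(N)$, we have $R\in\mathbf{T}\subseteq\mathbf{A}$ by Theorem \ref{F-qt}, hence $R^r\in\mathbf{A}$ as well; moreover $y^*_j\succeq_N x^*$ means $x^*\succeq_{R^r}y^*_j$ and $y^*_j\succ_N x^*$ means $x^*\succ_{R^r}y^*_j$. Since $k\in\ldbrack n-1\rdbrack$ forces $n-k\in\ldbrack n-1\rdbrack$ too, the $k$ distinct vertices $y^*_1,\ldots,y^*_k\in V\setminus\{x^*\}$ are exactly the $n-(n-k)$ vertices needed to apply Proposition \ref{3} (in the first case) or Proposition \ref{3-bis} (in the second case) to the relation $R^r$ with parameter $n-k$. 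For the first part, Proposition \ref{3} produces $W'\in C_{n-k}(R^r)$ with $x^*\in W'$; putting $W=V\setminus W'$ and using Proposition \ref{ckrev} with $R^r$ in place of $R$ and $n-k$ in place of $k$, we get $W\in C_k(R)=\mathfrak{F}_k(N)$, and of course $x^*\notin W$. For the second part, Proposition \ref{3-bis} gives that $x^*$ belongs to every member of $C_{n-k}(R^r)$; then, given any $W\in\mathfrak{F}_k(N)=C_k(R)$, Proposition \ref{ckrev} tells us $V\setminus W\in C_{n-k}(R^r)$, so $x^*\in V\setminus W$, that is, $x^*\notin W$.

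I do not expect a genuine obstacle: the argument is essentially bookkeeping. The only step deserving mild attention is checking that, after the interchange of the two vertices, the data of the hypothesis match term by term those of Proposition \ref{Eff-F} and of its conditions (a)--(d), and keeping the index substitution $k\leftrightarrow n-k$ straight when passing through Propositions \ref{3}, \ref{3-bis} and \ref{ckrev}.
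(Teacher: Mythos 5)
Your proof is correct and follows essentially the same route as the paper's: swap the roles of $x^*$ and $y^*_j$ in Proposition \ref{Eff-F}, pass to the reversal, apply Proposition \ref{3} (resp.\ \ref{3-bis}) with parameter $n-k$, and conclude via Proposition \ref{ckrev}. The only (harmless) difference is that you work directly with the reversal relation $\mathfrak{F}(N)^r$, using that $\mathbf{A}$ is closed under reversal, whereas the paper passes through the reversal network $N^r$ and invokes Proposition \ref{Reversal} to identify $\mathfrak{F}(N^r)$ with $\mathfrak{F}(N)^r$.
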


\begin{proof}
By Proposition \ref{Eff-F}, we have that, for every $j\in\ldbrack k \rdbrack$, $y_j^*\succeq_N x^*$. Then, for every $j\in\ldbrack k \rdbrack$, $x^*\succeq_{N^r} y_j^*$. By Proposition \ref{3}, there exists $W'\in C_{n-k}(\mathfrak{F}(N^r))$ such that $x^*\in W'$.
Define now $W=V\setminus W'$. By Proposition \ref{Reversal},  we have $C_{n-k}(\mathfrak{F}(N^r))=C_{n-k}(\mathfrak{F}(N)^r),$
so that by Proposition \ref{ckrev}, we get  $W\in C_{k}(\mathfrak{F}(N))=\mathfrak{F}_k(N)$, with $x^*\not\in W$.

Assume  now that one among $(a)$, $(b)$, $(c)$ and $(d)$ holds true too. Then, by Proposition \ref{Eff-F}, for every $j\in\ldbrack k \rdbrack$, we have that  $y_j^*\succ_N x^*$ that is $x^*\succ_{N^r} y_j^*$. Thus, by Proposition \ref{3-bis}, we have $x^*\in W'$ for all $W'\in C_{n-k}(\mathfrak{F}(N^r))=C_{n-k}(\mathfrak{F}(N)^r)$.
By Proposition   \ref{ckrev}, it follows then that,  for every $W\in C_{k}(\mathfrak{F}(N))=\mathfrak{F}_k(N)$,  $x^*\notin W.$
\end{proof}

\begin{proposition}\label{Mon1} Let $N=(V,A,c)\in \mathcal{N}$, $k\in \ldbrack n-1 \rdbrack $, $W\in \mathfrak{F}_k(N)$ and $x^* \in W$. Let $N'=(V,A,c')\in \mathcal{N}$ be such that the conditions $(a)$ and $(b)$ of Proposition \ref{Mon1-flow-rule} hold true.
Then there exists $W'\in \mathfrak{F}_k(N')$ such that $x^*\in W'$.
Assume further that, for every $y\in V\setminus W$, we have $x^*\succ_N y$ or $c'(x^*,y)> c(x^*,y)$ or $c'(y,x^*)< c(y,x^*)$.
Then, for every $W'\in \mathfrak{F}_k(N')$, we have $x^*\in W'$.
\end{proposition}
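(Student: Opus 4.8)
The plan is to reduce everything to the already-established monotonicity statements for the flow network method (Proposition \ref{Mon1-flow-rule}) together with the membership-extension results for $k$-maximum sets of relations in $\mathbf{A}$ (Propositions \ref{3} and \ref{3-bis}), exactly in the spirit of the proof of Proposition \ref{BG-5}. So first, starting from $W\in\mathfrak{F}_k(N)=C_k(\mathfrak{F}(N))$ with $x^*\in W$, note that for every $y\in V\setminus W$ we have $x^*\succeq_N y$; since $|V\setminus W|=n-k$, there are $n-k$ distinct elements $y_1,\dots,y_{n-k}$ of $V\setminus\{x^*\}$ with $x^*\succeq_N y_j$. Applying the first (weak) part of Proposition \ref{Mon1-flow-rule} under hypotheses $(a)$ and $(b)$, we deduce $x^*\succeq_{N'} y_j$ for each such $j$, i.e.\ there are $n-k$ distinct elements of $V\setminus\{x^*\}$ dominated by $x^*$ in $\mathfrak{F}(N')$. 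Since $\mathfrak{F}(N')\in\mathbf{T}\subseteq\mathbf{A}$ by Theorem \ref{F-qt}, Proposition \ref{3} yields $W'\in C_k(\mathfrak{F}(N'))=\mathfrak{F}_k(N')$ with $x^*\in W'$, which is the first assertion.

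For the second assertion, assume in addition that for every $y\in V\setminus W$ one of $x^*\succ_N y$, $c'(x^*,y)>c(x^*,y)$, $c'(y,x^*)<c(y,x^*)$ holds. Enumerate $V\setminus W=\{y_1,\dots,y_{n-k}\}$ as before. For each $j$ we have $x^*\succeq_N y_j$ and one of the three extra conditions; hence the second (strict) part of Proposition \ref{Mon1-flow-rule} applies and gives $x^*\succ_{N'} y_j$. Thus $x^*$ strictly dominates $n-k$ distinct elements of $V\setminus\{x^*\}$ in the relation $\mathfrak{F}(N')\in\mathbf{A}$, and Proposition \ref{3-bis} forces $x^*\in W'$ for every $W'\in C_k(\mathfrak{F}(N'))=\mathfrak{F}_k(N')$, as required.

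The only genuine point requiring care — and the main (minor) obstacle — is bookkeeping the quantifiers correctly: one must feed Proposition \ref{Mon1-flow-rule} the hypothesis $x^*\succeq_N y_j$ together with, in the strict case, the disjunction attached to that particular $y_j$, and then check that Propositions \ref{3} and \ref{3-bis} are invoked with exactly $n-k$ elements, matching the parameter $k$ in $C_k$. Everything else is a direct citation; no flow computations are needed beyond what Proposition \ref{Mon1-flow-rule} already provides.
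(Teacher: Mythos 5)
Your proof is correct and follows essentially the same route as the paper: apply Proposition \ref{Mon1-flow-rule} to get $x^*\succeq_{N'}y$ (resp.\ $x^*\succ_{N'}y$) for all $y\in V\setminus W$, then invoke Proposition \ref{3} (resp.\ Proposition \ref{3-bis}) with the $n-k$ elements of $V\setminus W$. Your version merely spells out the quantifier bookkeeping that the paper leaves implicit.
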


\begin{proof} By Proposition \ref{Mon1-flow-rule} we know that, for every $y\in V\setminus W$, $x^*\succeq_{N'} y$. Then, applying Proposition \ref{3}, we conclude.
Assume now that, for every $y\in V\setminus W$, we have $x^*\succ_N y$ or $c'(x^*,y)> c(x^*,y)$ or $c'(y,x^*)< c(y,x^*)$. By Proposition \ref{Mon1-flow-rule} we know that, for every $y\in V\setminus W$, $x^*\succ_{N'} y$. Then applying Proposition
\ref{3-bis} we conclude.
\end{proof}

The next proposition describes the effects of reversing a network on the outcomes of the flow $k$-multiwinner network solution. Remarkably we have that if on a network  and on its reversal the flow $k$-multiwinner network solution selects a unique $k$-maximum set, those two sets cannot coincide.
We  refer to that property by saying that $\mathfrak{F}_k$ is immune to the reversal bias.  The concept of immunity to the reversal bias has been introduced by Saari and Barney (2003) for voting systems and recently studied by Bubboloni and Gori (2016a, 2016b) in the framework of social choice correspondences.

\begin{proposition}\label{5/2} Let $N\in \mathcal{N}$ and $k\in \ldbrack n-1 \rdbrack $. Then the following facts hold:
\begin{itemize}
	\item[(i)]$\mathfrak{F}_k(N)\neq \mathbb{P}_k(V)$ implies  $\mathfrak{F}_k(N)\not\subseteq\mathfrak{F}_k(N^r)$;
	\item[(ii)]$|\mathfrak{F}_k(N)|=1$ implies  $\mathfrak{F}_k(N)\not\subseteq \mathfrak{F}_k(N^r)$;
	\item[(iii)] if $\mathfrak{F}_k(N)=\{W\}$ and $\mathfrak{F}_k(N^r)=\{U\}$, then $W\neq U;$
         \item[(iv)] If $N=N^r$, then  $\mathfrak{F}_k(N)= \mathbb{P}_k(V)$;
	\item[(v)]$\mathfrak{F}_k(N)= \mathbb{P}_k(V)$ if and only if $\mathfrak{F}_k(N^r)= \mathbb{P}_k(V).$
\end{itemize}
\end{proposition}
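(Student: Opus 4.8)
The plan is to reduce every claim to a statement about the complete and quasi-transitive relation $R=\mathfrak{F}(N)$ and its reversal, using three earlier facts: Proposition \ref{Reversal}, which gives $\mathfrak{F}(N^r)=R^r$ and hence $\mathfrak{F}_k(N^r)=C_k(R^r)$; Proposition \ref{totale}, which (since $R$ is reflexive, being complete) says $\mathfrak{F}_k(N)=C_k(R)=\mathbb{P}_k(V)$ if and only if $R=V^2$; and Proposition \ref{qA}, which says that for $R\in\mathbf{T}$ with $R\neq V^2$ one has $C_k(R)\not\subseteq C_k(R^r)$ for every $k\in\ldbrack n-1\rdbrack$. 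I would also record the elementary size bound: since $N$ is a network we have $n\geq 2$, so for $k\in\ldbrack n-1\rdbrack$ the set $\mathbb{P}_k(V)$ has $\binom{n}{k}\geq n\geq 2$ elements.

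For (i): if $\mathfrak{F}_k(N)\neq\mathbb{P}_k(V)$ then $R\neq V^2$ by Proposition \ref{totale}, so Proposition \ref{qA} gives $C_k(R)\not\subseteq C_k(R^r)$, which by Proposition \ref{Reversal} is exactly $\mathfrak{F}_k(N)\not\subseteq\mathfrak{F}_k(N^r)$. For (ii): if $|\mathfrak{F}_k(N)|=1$ then $\mathfrak{F}_k(N)\neq\mathbb{P}_k(V)$ by the size bound above, so (i) applies. For (iii): by (ii), $\{W\}=\mathfrak{F}_k(N)\not\subseteq\mathfrak{F}_k(N^r)=\{U\}$, which forces $W\neq U$.

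For (iv): if $N=N^r$ then $N$ is pseudo-symmetric (as already checked inside the proof of Proposition \ref{flat}), hence $\mathfrak{F}(N)=V^2$ by Proposition \ref{flat}, and therefore $\mathfrak{F}_k(N)=C_k(V^2)=\mathbb{P}_k(V)$. For (v): by Proposition \ref{totale}, $\mathfrak{F}_k(N)=\mathbb{P}_k(V)$ iff $R=V^2$; since reversal of relations is an involution and $(V^2)^r=V^2$, this holds iff $R^r=V^2$, i.e.\ (using Proposition \ref{Reversal}) iff $\mathfrak{F}(N^r)=V^2$, i.e.\ (Proposition \ref{totale} again) iff $\mathfrak{F}_k(N^r)=\mathbb{P}_k(V)$.

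I do not foresee a real obstacle here: every step is a direct citation of a quoted result, and the only genuinely substantive ingredient, Proposition \ref{qA}, has already been proved; the one small point to handle with care is the combinatorial bound $|\mathbb{P}_k(V)|\geq 2$ that lets (ii) and (iii) be deduced from (i).
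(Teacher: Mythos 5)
Your proof is correct and follows essentially the same route as the paper's: (i) via Propositions \ref{totale}, \ref{qA} and \ref{Reversal}, (ii)--(iii) via the bound $|\mathbb{P}_k(V)|\ge 2$, and (v) via Propositions \ref{totale} and \ref{Reversal}. The only (harmless) deviation is in (iv), where you invoke Proposition \ref{flat} through the pseudo-symmetry of a network with $N=N^r$, whereas the paper derives (iv) directly from (i) by observing that $N=N^r$ gives $\mathfrak{F}_k(N)=\mathfrak{F}_k(N^r)$, so the inclusion $\mathfrak{F}_k(N)\subseteq\mathfrak{F}_k(N^r)$ forces $\mathfrak{F}_k(N)=\mathbb{P}_k(V)$.
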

\begin{proof}
$(i)$ Fix  $N\in \mathcal{N}$ with $\mathfrak{F}_k(N)\neq \mathbb{P}_k(V)$.  By Proposition \ref{totale}, we have $\mathfrak{F}(N)\neq V^2$. Since $\mathfrak{F}(N)\in \mathbf{T}$, by Propositions \ref{qA} and \ref{Reversal}, we get $\mathfrak{F}_k(N)=C_k(\mathfrak{F}(N))\not\subseteq C_k(\mathfrak{F}(N)^r)=C_k(\mathfrak{F}(N^r))=\mathfrak{F}_k(N^r).$

$(ii)$ Since $n\geq 2$ and $k\in\ldbrack n-1 \rdbrack$, we have that $|\mathbb{P}_k(V)|\geq 2.$ Thus $|\mathfrak{F}_k(N)|=1$ implies $\mathfrak{F}_k(N)\neq \mathbb{P}_k(V)$ and applying $(i)$, we get $\mathfrak{F}_k(N)\not\subseteq\mathfrak{F}_k(N^r).$

$(iii)$ It follows immediately from $(ii)$.

$(iv)$ Let $N=N^r$. Then, $\mathfrak{F}_k(N)=\mathfrak{F}_k(N^r)$ and $(i)$ gives $\mathfrak{F}_k(N)= \mathbb{P}_k(V)$.

$(v)$  By Proposition \ref{totale}, we have that $\mathfrak{F}_k(N)= \mathbb{P}_k(V)$ if and only if $\mathfrak{F}(N)=V^2$. But, obviously, $\mathfrak{F}(N)=V^2$ is also equivalent to $\mathfrak{F}(N)^r=V^2$ which, by Proposition \ref{Reversal}, is equivalent to $\mathfrak{F}(N^r)=V^2.$ This last fact,  by Proposition \ref{totale}, is in turn equivalent to $\mathfrak{F}_k(N^r)= \mathbb{P}_k(V).$
\end{proof}

The next proposition refers to the symmetry properties of the flow $k$-multiwinner network solution.

\begin{proposition}\label{sym-sol}
Let $N\in\mathcal{N}$ and $k\in \ldbrack n-1 \rdbrack $. Then $\mathfrak{F}_{k}(N)=\mathbb{P}_k(V)$ if and only if $N$ is pseudo-symmetric.
\end{proposition}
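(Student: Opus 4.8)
The statement is an immediate combination of two facts already established: Proposition \ref{totale}, which characterizes when $C_k(R)$ fills all of $\mathbb{P}_k(V)$ for a reflexive relation $R$, and Proposition \ref{flat}, which characterizes when $\mathfrak{F}(N)$ is the total relation $V^2$. The only point to check is that $\mathfrak{F}(N)$ satisfies the hypotheses needed to invoke Proposition \ref{totale}, namely reflexivity.

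\textbf{Key steps.} First I would recall that, by definition of the flow network method, $\mathfrak{F}(N)=\{(x,y)\in A:\varphi^N_{xy}\ge\varphi^N_{yx}\}\cup\Delta$, so $\mathfrak{F}(N)$ contains $\Delta$ and is therefore reflexive; also $\mathfrak{F}_k(N)=C_k(\mathfrak{F}(N))$ by definition. Since $k\in\ldbrack n-1\rdbrack$ and $\mathfrak{F}(N)$ is reflexive, Proposition \ref{totale} applies and gives that $\mathfrak{F}_k(N)=C_k(\mathfrak{F}(N))=\mathbb{P}_k(V)$ if and only if $\mathfrak{F}(N)=V^2$. Next I would observe that $\mathfrak{F}(N)=V^2$ means exactly that $\mathfrak{F}$ is flat on $N$, so by Proposition \ref{flat} this holds if and only if $N$ is pseudo-symmetric. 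Chaining the two equivalences yields the claim. (Alternatively, one could simply cite Proposition \ref{5/2}(v) together with Proposition \ref{flat}, since part (v) already records that $\mathfrak{F}_k(N)=\mathbb{P}_k(V)$ is equivalent to $\mathfrak{F}(N)=V^2$.)

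\textbf{Main obstacle.} There is essentially no obstacle: the proposition is a short corollary of earlier results, and the only thing requiring a word of justification is the reflexivity of $\mathfrak{F}(N)$, which is immediate from the presence of $\Delta$ in the definition. The proof is two lines once Propositions \ref{totale} and \ref{flat} are in hand.

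\begin{proof}
By definition, $\mathfrak{F}(N)$ contains $\Delta$ and hence is reflexive, and $\mathfrak{F}_k(N)=C_k(\mathfrak{F}(N))$. Since $k\in\ldbrack n-1\rdbrack$, Proposition \ref{totale} gives that $\mathfrak{F}_k(N)=C_k(\mathfrak{F}(N))=\mathbb{P}_k(V)$ if and only if $\mathfrak{F}(N)=V^2$, that is, if and only if $\mathfrak{F}$ is flat on $N$. By Proposition \ref{flat}, the latter holds if and only if $N$ is pseudo-symmetric.
\end{proof}
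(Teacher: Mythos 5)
Your proof is correct and follows essentially the same route as the paper, which also chains Proposition \ref{totale} (applied to the reflexive relation $\mathfrak{F}(N)$) with Proposition \ref{flat}. Your explicit remark on reflexivity is a small, welcome clarification but does not change the argument.
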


\begin{proof}
If $N$ is pseudo-symmetric then, by Proposition \ref{flat},  $\mathfrak{F}(N)=V^2$ so that $\mathfrak{F}_{k}(N)=\mathbb{P}_k(V)$. Assume now
$\mathfrak{F}_{k}(N)=\mathbb{P}_k(V)$. By Proposition \ref{totale} we get $\mathfrak{F}(N)=V^2$ and applying again Proposition \ref{flat}
we conclude that $N$ is pseudo-symmetric.
\end{proof}

Given $k\in\ldbrack n-1 \rdbrack$, the Borda $k$-multiwinner network solution $\mathfrak{B}_k$ and the  dual Borda $k$-multiwinner network solution $\widehat{\mathfrak{B}}_k$ are respectively defined, for every $N\in\mathcal{N}$,
by
$\mathfrak{B}_k(N)=C_k(\mathfrak{B}(N))$ and $\widehat{\mathfrak{B}}_k(N)=C_k(\widehat{\mathfrak{B}}(N)).$
Immediately, from Theorem \ref{OI}, we obtain that those classical solutions coincide with the $k$-multiwinner network solution on a wide class of networks.
\begin{proposition}\label{balan-ut-2}
Let $N\in \mathcal{B}\cup\mathcal{O}\cup\mathcal{I}$ and  $k\in \ldbrack n-1 \rdbrack$. Then $\mathfrak{F}_k(N)=\mathfrak{B}_k(N)=\widehat{\mathfrak{B}}_k(N).$
\end{proposition}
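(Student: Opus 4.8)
The plan is to reduce the statement directly to Theorem \ref{OI} together with the definitions of the three $k$-multiwinner network solutions. Recall that $\mathfrak{F}_k$, $\mathfrak{B}_k$ and $\widehat{\mathfrak{B}}_k$ are obtained by post-composing the corresponding network methods $\mathfrak{F}$, $\mathfrak{B}$, $\widehat{\mathfrak{B}}$ with the operator $C_k$, which takes a relation on $V$ to its set of $k$-maximum sets. Thus $\mathfrak{F}_k(N)=C_k(\mathfrak{F}(N))$, $\mathfrak{B}_k(N)=C_k(\mathfrak{B}(N))$ and $\widehat{\mathfrak{B}}_k(N)=C_k(\widehat{\mathfrak{B}}(N))$. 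Since $C_k$ is a well-defined function of the underlying relation, any equality among the relations $\mathfrak{F}(N)$, $\mathfrak{B}(N)$, $\widehat{\mathfrak{B}}(N)$ is immediately inherited by their images under $C_k$.

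First I would invoke Theorem \ref{OI}: for $N\in \mathcal{B}\cup\mathcal{O}\cup\mathcal{I}$ we have $\mathfrak{F}(N)=\mathfrak{B}(N)=\widehat{\mathfrak{B}}(N)$. Then I would apply $C_k$ to this chain of equalities, obtaining
\[
\mathfrak{F}_k(N)=C_k(\mathfrak{F}(N))=C_k(\mathfrak{B}(N))=\mathfrak{B}_k(N)=C_k(\widehat{\mathfrak{B}}(N))=\widehat{\mathfrak{B}}_k(N),
\]
which is exactly the claim. No case analysis on $\mathcal{B}$ versus $\mathcal{O}$ versus $\mathcal{I}$ is needed, since Theorem \ref{OI} already handles all three classes uniformly, and the hypothesis $k\in\ldbrack n-1\rdbrack$ only guarantees that the sets $C_k(\cdot)$ are meaningful (and, by Proposition \ref{ckR}, nonempty), which is not even required for the equality itself.

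There is essentially no obstacle here: the content of the proposition lies entirely in Theorem \ref{OI} (and hence in Propositions \ref{balan-ut} and \ref{cor-new}), and the passage from network methods to $k$-multiwinner network solutions is purely formal. The only thing to be careful about is to state clearly that $C_k$ depends on a relation and nothing else, so that equal relations yield equal families of $k$-maximum sets.
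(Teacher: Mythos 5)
Your proposal is correct and is exactly the paper's argument: the paper states this proposition as an immediate consequence of Theorem \ref{OI}, since $\mathfrak{F}_k$, $\mathfrak{B}_k$ and $\widehat{\mathfrak{B}}_k$ are all defined by applying $C_k$ to the corresponding relations, so equality of the relations yields equality of the $k$-maximum sets. Nothing further is needed.
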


\section{The Schulze network method}\label{schulze-sec}

On the basis of the so called Schulze method introduced by Schulze (2011), we can formulate what we are going to call the {\it Schulze network method}.

Let $N=(V,A,c)\in\mathcal{N}$ and $x,y\in V$ with $x\neq y.$ Define, for every $\gamma=x_1\cdots x_m\in \Gamma(N,x,y)$,
\[
\delta (\gamma)=\mathrm{min}\{c(x_i,x_{i+1}): i\in\ldbrack m-1 \rdbrack\}
\]
 and put
\[
s_{xy}=\left\{
\begin{array}{ll}
\max\{\delta (\gamma):\gamma\in \Gamma(N,x,y)\}&\mbox{ if } \Gamma(N,x,y)\neq\varnothing\\
\vspace{-2mm}\\
0 &\mbox{ if } \Gamma(N,x,y)=\varnothing\\
\end{array}
\right.
\]
Let $s:A\rightarrow \mathbb{N}_0 $ be the map associating to every $(x,y)\in A$ the number $s_{xy}.$ We call $s$ the Schulze function.
Note that, due to the definition of path, $s_{xy}=0$ if and only if there exists no path from $x$ to $y$ in $N.$ Note also that $s_{xy}\geq c(x,y).$
Inspired by (2.2.5) in Schulze (2011), we get the following result.

\begin{proposition}\label{lemma-schulze}
Let $N\in\mathcal{N}$. Then $s$ satisfies the Gomory-Hu condition.
\end{proposition}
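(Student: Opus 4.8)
The plan is to show that for any three distinct vertices $x,y,z\in V$ the Schulze function satisfies $s_{xz}\ge\min\{s_{xy},s_{yz}\}$. First I would dispose of the trivial cases: if $\min\{s_{xy},s_{yz}\}=0$ there is nothing to prove since $s$ takes values in $\mathbb{N}_0$, so I may assume $s_{xy}\ge 1$ and $s_{yz}\ge 1$, which in particular means $\Gamma(N,x,y)\neq\varnothing$ and $\Gamma(N,y,z)\neq\varnothing$. Set $w=\min\{s_{xy},s_{yz}\}$.

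By definition of $s_{xy}$ as a maximum over paths, there is a path $\gamma_1\in\Gamma(N,x,y)$ with $\delta(\gamma_1)=s_{xy}\ge w$, i.e.\ every arc $(u,v)\in A(\gamma_1)$ has $c(u,v)\ge w$; similarly there is $\gamma_2\in\Gamma(N,y,z)$ with every arc of $A(\gamma_2)$ having capacity $\ge w$. The natural idea is to concatenate $\gamma_1$ and $\gamma_2$ at $y$ to obtain a ``walk'' from $x$ to $z$ all of whose arcs have capacity at least $w$, and then extract from it an honest path from $x$ to $z$ (recall a path requires distinct vertices). The standard device is: let $x_0$ be the first vertex of $\gamma_1$ (reading from $x$ towards $y$) that also lies on $\gamma_2$; such a vertex exists since $y$ is on both. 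Then follow $\gamma_1$ from $x$ up to $x_0$ and follow $\gamma_2$ from $x_0$ to $z$. I would need to check that this yields a genuine path: the initial segment of $\gamma_1$ up to $x_0$ has distinct vertices (it is a subpath of $\gamma_1$) none of which except $x_0$ lies on $\gamma_2$ by minimality of $x_0$, and the segment of $\gamma_2$ from $x_0$ to $z$ has distinct vertices (it is a subpath of $\gamma_2$); hence the concatenation has all distinct vertices, starts at $x$ and ends at $z$, so it is an element $\gamma$ of $\Gamma(N,x,z)$. Every arc of $\gamma$ comes from $A(\gamma_1)$ or $A(\gamma_2)$, so $c$ on each such arc is $\ge w$, giving $\delta(\gamma)\ge w$ and therefore $s_{xz}\ge\delta(\gamma)\ge w=\min\{s_{xy},s_{yz}\}$, as required.

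The main obstacle is the bookkeeping in the concatenation-to-path step: one must be careful that $x_0$ is well defined (it could happen that $x$ itself lies on $\gamma_2$, in which case $x_0=x$ and the first segment is trivial — but then $x\ne z$ still and the second segment alone is the desired path, so this is fine), and that the resulting vertex sequence genuinely has no repeats, which is exactly where the ``first common vertex'' choice is used. It may be cleanest to phrase this as: among all vertices common to $V(\gamma_1)$ and $V(\gamma_2)$, pick the one, say $x_0$, occurring earliest along $\gamma_1$; write $\gamma_1=x\cdots x_0\cdots y$ and $\gamma_2=y\cdots x_0\cdots z$, and define $\gamma$ to be the sequence $x,\ldots,x_0$ (from $\gamma_1$) followed by the vertices strictly after $x_0$ in the $x_0\cdots z$ tail of $\gamma_2$. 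Distinctness then follows from the two observations that the $\gamma_1$-prefix meets $V(\gamma_2)$ only in $x_0$ and that the $\gamma_2$-suffix is vertex-disjoint from the $\gamma_1$-prefix except possibly at $x_0$, which is the shared endpoint. Once distinctness is in hand, the capacity bound and hence the Gomory-Hu inequality are immediate. Finally, note that this argument is essentially the one behind inequality (2.2.5) of Schulze (2011), so I would simply cite that source for the combinatorial core and present the verification in the network language above.
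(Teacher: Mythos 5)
Your proposal is correct and follows essentially the same argument as the paper: both take optimal paths realizing $s_{xy}$ and $s_{yz}$, splice them at the first vertex of the $x$--$y$ path lying on the $y$--$z$ path to obtain a genuine path in $\Gamma(N,x,z)$ whose bottleneck is at least $\min\{s_{xy},s_{yz}\}$. The only cosmetic difference is that the paper phrases this as a proof by contradiction (obtaining a path with $\delta(\gamma)>s_{xz}$), while you argue directly; the combinatorial core, including the first-common-vertex device and the distinctness check, is identical.
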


\begin{proof} We need to show that, for every $x,y,z\in V$ distinct, $s_{xz}\ge \min\{s_{xy},s_{yz}\}$. By contradiction, assume that there exist  $x,y,z\in V$ distinct such that $s_{xz}< \min\{s_{xy},s_{yz}\}$. Thus, $s_{xy}>s_{xz}$ and $s_{yz}>s_{xz}$. In particular, $s_{xy}>0$ and $s_{yz}>0.$

Thus, by definition of $s_{xy}$,  there exists a path $\gamma_{xy}=x_1\cdots x_m\in\Gamma(N,x,y) $, with $m\geq2$, such that $s_{xy}=\delta(\gamma_{xy})\ge \delta(\gamma)$ for all $\gamma\in \Gamma(N,x,y) $. In particular, for every $i\in \ldbrack m-1 \rdbrack$, we have
\begin{equation}\label{first -path}
c(x_i,x_{i+1})\ge s_{xy}>s_{xz}.
\end{equation}
Similarly, by definition of $s_{yz}$, there exists a path $\gamma_{yz}=y_1\cdots y_l\in\Gamma(N,y,z) $, with $l\geq 2$, such that $s_{yz}=\delta(\gamma_{yz})\ge \delta(\gamma)$ for all $\gamma\in \Gamma(N,y,z) $. In particular, for every $i\in \ldbrack l-1 \rdbrack$, we have
\begin{equation}\label{second -path}
c(y_i,y_{i+1})\ge s_{yz}>s_{xz}.
\end{equation}
Being $x_m=y=y_1$, the set $\{i\in \ldbrack m \rdbrack: x_i\in V(\gamma_{yz}) \}$ is nonempty and hence there exists $\nu=\mathrm{min}\{i\in \ldbrack m \rdbrack: x_i\in V(\gamma_{yz})\}.$
Since in a path the vertices are distinct, there exists a unique $\mu\in \ldbrack l \rdbrack$ such that $x_\nu=y_\mu.$  Moreover, due to $x\neq z$, we have $(\nu,\mu)\neq (1,l).$

 If $\mu=l$, then $x_\nu=z$ and $\nu\ge 2,$ so that, by \eqref{first -path},
 $\gamma=x_1\cdots x_\nu\in \Gamma(N,x,z)$ and $\delta (\gamma)>s_{xz}$ which contradicts the definition of $s_{xz}$. If instead $\mu\leq l-1$ then, by definition of $\nu$ and $\mu$,
the vertices $x_1,\dots, x_\nu,\,y_{\mu+1},\dots, y_l$ are all distinct and, by \eqref{first -path}  and \eqref{second -path}, all the arcs between two consecutive vertices in that list have capacity greater than $s_{xz}.$ It follows that
  $\gamma=x_1\cdots x_\nu\,y_{\mu+1}\cdots y_l\in \Gamma(N,x,z)$ and $\delta (\gamma)>s_{xz}$ which again contradicts the definition of $s_{xz}$.
\end{proof}

The Schulze network method $\mathfrak{S}$ is now defined associating with every $N\in \mathcal{N}$ the relation
$
\mathfrak{S}(N)=R(s).
$
Then, by Proposition \ref{lemma-schulze} and  Proposition \ref{schulze}, we immediately get the next result which is completely analogous to Theorem \ref{F-qt}.

\begin{theorem}\label{schulze-qt}
Let $N \in\mathcal{N}$. Then  $\mathfrak{S}(N)\in\mathbf{T}$.
\end{theorem}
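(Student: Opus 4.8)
The plan is to deduce this exactly as Theorem \ref{F-qt} was deduced, that is, as a direct application of the abstract criterion Proposition \ref{schulze}. First I would recall that, by the very definition of the Schulze network method, $\mathfrak{S}(N)=R(s)$, where $s:A\to\mathbb{N}_0\subseteq\mathbb{R}$ is the Schulze function of $N$. So the whole statement reduces to checking that $\vartheta=s$ satisfies the hypothesis of Proposition \ref{schulze}, namely the Gomory-Hu condition.

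But that is precisely the content of Proposition \ref{lemma-schulze}: for all distinct $x,y,z\in V$ one has $s_{xz}\ge\min\{s_{xy},s_{yz}\}$. Hence Proposition \ref{schulze} applies with $\vartheta=s$ and yields $\mathfrak{S}(N)=R(s)\in\mathbf{T}$, as wanted. Unwinding what this means: completeness of $R(s)$ is trivial, since for distinct $x,y$ at least one of $s_{xy}\ge s_{yx}$ and $s_{yx}\ge s_{xy}$ holds and $\Delta\subseteq R(s)$ by construction, while quasi-transitivity, i.e.\ transitivity of $S(R(s))$, is exactly the conclusion that Proposition \ref{schulze} extracts from the Gomory-Hu inequalities.

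The main obstacle has in fact already been disposed of: it is the proof of Proposition \ref{lemma-schulze}, the path-splicing argument in which, starting from optimal paths $\gamma_{xy}$ from $x$ to $y$ and $\gamma_{yz}$ from $y$ to $z$ whose bottleneck capacities both exceed $s_{xz}$, one takes the first vertex of $\gamma_{xy}$ lying on $\gamma_{yz}$ and concatenates the two pieces into a genuine path from $x$ to $z$ of bottleneck strictly larger than $s_{xz}$, contradicting the maximality defining $s_{xz}$. Given that lemma together with Proposition \ref{schulze}, the proof of the theorem itself is the two-line invocation above. Alternatively, one could bypass Proposition \ref{schulze} and replay its quasi-transitivity argument directly with $\vartheta=s$, but there is no reason to do so once both lemmas are available.
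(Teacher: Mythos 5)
Your proposal is correct and follows exactly the paper's own route: the theorem is obtained by combining Proposition \ref{lemma-schulze} (the Schulze function satisfies the Gomory-Hu condition) with Proposition \ref{schulze} applied to $\vartheta=s$, since $\mathfrak{S}(N)=R(s)$ by definition. Nothing is missing.
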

Thus,  we can replicate,  word by word, all the definitions given in the flow environment. We obtain then the {\it Schulze network rule} defined, for every $N\in\mathcal{N}$, by  $\mathfrak{S}_{\diamond}(N)=\mathbf{L}_\diamond(\mathfrak{S}(N))$;
the {\it Schulze $k$-multiwinner network solution} defined, for every $N\in\mathcal{N}$, by $\mathfrak{S}_k(N)=C_k(\mathfrak{S}(N))$.
Of course, all the results obtained in the flow environment using only the fact that the flow relation is complete and quasi-transitive continue to hold true in the Schulze environment. In particular,  since it is immediately observed that, for every $N\in\mathcal{N}$, we have $\mathfrak{S}(N^r)=\mathfrak{S}(N)^r$, Proposition \ref{5/2} remains true substituting the symbol $\mathfrak{F}$ with $\mathfrak{S}$. As a consequence, for every $k\in\ldbrack n-1 \rdbrack$, $|\mathfrak{S}_k(N)|=1$ implies  $\mathfrak{S}_k(N)\not\subseteq \mathfrak{S}_k(N^r)$ so that, remarkably, the Schulze $k$-multiwinner network solution $\mathfrak{S}_k$ is immune to the reversal bias.

On the other hand, there are some crucial differences between the flow and the Schulze network methods.
Indeed, given $N\in\mathcal{N}$ and $x,y\in V$, considering $s_{xy}$ instead of $\varphi_{xy}$ to judge whether $x$ is at least as good as $y$ means to choose to focus only on the ``best path'' from $x$ to $y$ disregarding the contributes of the other paths. In our opinion, taking into account the whole network by considering the contribution not only of the main stream from $x$ to $y$ but also those of the secondary creeks,
is more appropriate in many contexts. The next example concretely illustrates our point of view.

\begin{example}\label{flow-schulze-1} {\rm Consider the competition $\textsf {D}$ among the three teams in $V=\{\textsc{a}, \textsc{b},\textsc{c}\}$  described by the table
 \[
\begin{array}{|c||c|c||c|c|}
\hline
\textsc{a} & 1 & 1  &\textsc{b}\\
\hline
\textsc{a} & 1 & 1 &\textsc{c}\\
\hline
\textsc{b} & 0 & 1 &\textsc{c}\\
\hline
\end{array}
\]
and let $N_{\textsf {D}}$ be the associated network. Then, for every $x\in \{\textsc{b},\textsc{c}\}$, we have $c(\textsc{a},x)=c(x,\textsc{a})=1$ and so also
$s_{\textsc{a}x}=s_{x\textsc{a}}=1$. Moreover $s_{\textsc{b} \textsc{c} }=s_{\textsc{c} \textsc{b} }=1,$ so that $s$ is a constant function. It follows that  $\mathfrak{S}(N_{\textsf {D}})=V^2$, so that, in particular, $\mathfrak{S}_1(N_{\textsf {D}})=\{\{\textsc{a}\}, \{\textsc{b}\},\{\textsc{c}\}\}.$
On the other hand, being $\varphi_{\textsc{a}\textsc{b}}=2$, $\varphi_{\textsc{b}\textsc{a}}=1$, $\varphi_{\textsc{a}\textsc{c}}=1$, $\varphi_{\textsc{c}\textsc{a}}=2$,
$\varphi_{\textsc{b}\textsc{c}}=1$, $\varphi_{\textsc{c}\textsc{b}}=2$,
we instead have that $\mathfrak{F}(N_{\textsf {D}})$ is the linear order $L$ given by $\textsc{c}\succ\textsc{a}\succ\textsc{b}$ and so $\mathfrak{F}_1(N_{\textsf {D}})=\{\{\textsc{c}\}\}.$\footnote {Note that, on the network $N_{\textsf {D}}$, the Kemeny network rule and the ranked pair network rule  are equal to $\{L, \,\textsc{c}\succ\textsc{b}\succ\textsc{a},\, \textsc{a}\succ\textsc{c}\succ\textsc{b}\}$, while $\mathfrak{F}_{\diamond}(N_{\textsf {D}})=\{L\}$. }
}
\end{example}
An accurate analysis  of the Schulze network method as well as a comparison with the flow network method is surely an interesting issue. It goes beyond the aim of the present paper.
\section{Comparison functions}\label{dutta-sec}

Following Dutta and Laslier (1999), a comparison function on $V$ is a function $g:V^2\to \mathbb{R}$ such that, for every $(x,y)\in V^2$, $g(x,y)=-g(y,x)$. Denote by $\mathcal{G}$ the set of all the comparison functions on $V$. Any correspondence from $\mathcal{G}$ to $V$ is called a choice correspondence\footnote{Actually, this definition is less general than the one in the paper of Dutta and Laslier. Nevertheless, it is sufficient for our purposes.}. Some interesting choice correspondences are defined by Dutta and Laslier (1999) in their paper. They are called the uncovered set ($UC$), the sign-uncovered set ($SUC$), the minimal covering set ($MC$), the sign minimal covering set
($SMC$), the essential set ($ES$) and the sign essential set ($SES$) and, as their names suggest, can be seen as extensions of some classical tournament solutions.

Given $N=(V,A,c)\in\mathcal{N}$, it can be naturally associated with it the comparison function $g_N$, defined as follows
\[
g_N(x,y)=
\left\{
\begin{array}{ll}
c(x,y)-c(y,x) & \mbox{ if }x\neq y\\
0&\mbox{ if }x=y
\end{array}
\right.
\]
The function $g_N$ is called the margin function of $N$. As a consequence, any choice correspondence $F$ induces the 1-multiwinner network
solution $\mathfrak{G}_F$ associating with any $N\in\mathcal{N}$ the set $\{\{x\}\subseteq V: x\in F(g_N)\}$.
Thus, it can be interesting a comparison between the flow 1-multiwinner network
solution  and the 1-multiwinner network solutions induced by the choice correspondences above mentioned. Even though a deep analysis of that problem it is outside the purposes of this paper, we can easily prove that, in general,
$\mathfrak{F}_1\neq \mathfrak{G}_{UC}$ and $ \mathfrak{F}_1\neq \mathfrak{G}_{MC}$.
Indeed, consider $N=(V,A,c)$ where
\[
\begin{array}{l}
V=\{\textsc{a},\textsc{b},\textsc{c},\textsc{d}\};\\
\vspace{-2mm}\\
A=\{(x,y)\in V^2: x,y\in\{\textsc{a},\textsc{b},\textsc{c},\textsc{d}\} \mbox{ and } x\neq y\};\\
\vspace{-2mm}\\
c:A\to\mathbb{N}_0 \mbox{ is defined by }\\
c(\textsc{a},\textsc{b})=2,\; c(\textsc{b},\textsc{a})=0,\;c(\textsc{a},\textsc{c})=2,\; c(\textsc{c},\textsc{a})=0,\\
c(\textsc{a},\textsc{d})=1,\; c(\textsc{d},\textsc{a})=0,\;  c(\textsc{b},\textsc{c})=2,\;c(\textsc{c},\textsc{b})=0,\\
c(\textsc{b},\textsc{d})=1,\; c(\textsc{d},\textsc{b})=0,\;c(\textsc{c},\textsc{d})=1,\; c(\textsc{d},\textsc{c})=0.\\
\end{array}
\]
A simple computation shows that $\{\textsc{d}\}\not\in \mathfrak{F}_1(N)$. Moreover,
$g_N$ is such that
\[
\begin{array}{l}
g_N(\textsc{a},\textsc{b})=2,\; g_N(\textsc{b},\textsc{a})=-2,\;g_N(\textsc{a},\textsc{c})=2,\; g_N(\textsc{c},\textsc{a})=-2,\\
g_N(\textsc{a},\textsc{d})=1,\; g_N(\textsc{d},\textsc{a})=-1,\;  g_N(\textsc{b},\textsc{c})=2,\;g_N(\textsc{c},\textsc{b})=-2,\\
g_N(\textsc{b},\textsc{d})=1,\; g_N(\textsc{d},\textsc{b})=-1,\;g_N(\textsc{c},\textsc{d})=1,\; g_N(\textsc{d},\textsc{c})=-1.\\
\end{array}
\]
Since $g_N$ is the same comparison function considered in Example 4.1 in Dutta and Laslier (1999), we get that $\{\textsc{d}\}\in UC(g_N)=MC(g_N)$ so that
$\mathfrak{F}_1(N)\neq \mathfrak{G}_{UC}(N)$ and $ \mathfrak{F}_1(N)\neq \mathfrak{G}_{MC}(N)$.

\vspace{10mm}
\noindent {\Large{\bf References}}
\vspace{2mm}

\noindent Aziz, H., Brill, M., Fischer, F., Harrenstein, P., Lang, J., Seeding, H.G., 2015. Possible and necessary winners of partial tournaments. Journal of Artificial Intelligence Research 54, 493-534.
\vspace{2mm}

\noindent Bang-Jensen, J., Gutin, G., 2008. {\it Digraphs Theory, Algorithms and Applications}. Springer.
\vspace{2mm}

\noindent Belkin, A.R., Gvozdik. A.A., 1989. The flow model for ranking objects. Problems of Cybernetics, Decision Making and Analysis of Expert Information, Eds. Dorofeyuk, A., Litvak, B., Tjurin, Yu., Moscow, 109-117 (in Russian).
\vspace{2mm}

\noindent Bouyssou, D., 1992. Ranking methods based on valued preference relations: A characterization of the net flow method. European Journal of Operational Research
60, 61-67.\vspace{2mm}

\noindent Brandt, F., Brill, M., Harrenstein, P., 2016. Extending tournament solutions. Proceedings of the Twenty-Eighth AAAI Conference on Artificial Intelligence.
\vspace{2mm}

\noindent Bubboloni, D., Gori, M., 2015. Symmetric majority rules. Mathematical Social Sciences 76, 73-86.
\vspace{2mm}

\noindent Bubboloni, D., Gori, M., 2016a. On the reversal bias of the Minimax
social choice correspondence. Mathematical Social Sciences  81, 53-61.
 \vspace{2mm}

\noindent Bubboloni, D., Gori, M., 2016b. Resolute refinements of social choices correspondences. Mathematical Social Sciences 84, 37-49.
 \vspace{2mm}

\noindent Csat\'o, L., 2017. Some impossibilities of ranking in generalized tournaments. arXiv:1701.06539v4.
\vspace{2mm}

\noindent De Donder, P., Le Breton, M., Truchon, M.,  2000. Choosing from a weighted tournament. Mathematical Social Sciences 40, 85-109.
\vspace{2mm}

\noindent Dutta and Laslier, 1999. Comparison functions and choice correspondences. Social Choice and Welfare 16, 513-532.
 \vspace{2mm}

\noindent Elkind, E., Faliszewski, P., Skowron, P., Slinko, A., 2017. Properties of Multiwinner Voting Rules. Social Choice and Welfare. doi:10.1007/s00355-017-1026-z.
 \vspace{2mm}

\noindent  Gomory, R. E., Hu T. C., 1961. Multi-terminal network flows.
Journal of the Society for Industrial and Applied Mathematics 9, No. 4, 551-570.
\vspace{2mm}

\noindent Gonz\'alez-D\'iaz, J., Hendrickx, R., Lohmann, E., 2014.
Paired comparisons analysis: an axiomatic approach of ranking methods. Social Choice and Welfare 42, 139-169.
\vspace{2mm}

\noindent Gvozdik, A.A., 1987. Flow interpretation of the tournament model for ranking. Abstracts of the VI-th Moscow Conference of young scientists on cybernetics and computing, Moscow: Scientific Council on Cybernetics of RAS, 1987, p.56 (in Russian).
\vspace{2mm}

\noindent Hartmann, M., Schneider, M.H., 1993. Flow symmetry and algebraic flows. ZOR - Methods and Models of Operations Research 38, 261-267.
\vspace{2mm}

\noindent Kitti, M., 2016. Axioms for centrality scoring with principal eigenverctors. Social Choice and Welfare 46, 639-653.
\vspace{2mm}

\noindent Langville, A. N., Meyer, C. D., 2012. Who Is $\#1$? The Science of Rating and Ranking. Princeton University Press.
\vspace{2mm}

\noindent Laslier, J.-F., 1997. {\it Tournament solutions and majority voting}. Studies in Economic Theory, Volume 7. Springer.
\vspace{2mm}

\noindent Lov\'asz, L., 1973. Connectivity in digraphs. Journal of Combinatorial Theory (B) 15, 174-177.
\vspace{2mm}

\noindent Palacios-Huerta, I., Volij, O., 2004. The measurement of intellectual influence. Econometrica 72, 963-977.
\vspace{2mm}

\noindent Patel, Jaymin, 2015. College football rankings: maximum flow model. Industrial Engineering Undergraduate Honors Theses, paper 35.
\vspace{2mm}

\noindent Peris, J.E., Subiza, B., 1999. Condorcet choice correspondences for weak tournaments. Social Choice and Welfare 16, 217-231.
\vspace{2mm}

\noindent Saari, D.G., 1994. {\it Geometry of Voting}. In: Studies in Economic Theory, vol. 3. Springer.
\vspace{2mm}

\noindent Saari, D.G., Barney, S., 2003. Consequences of reversing preferences. The Mathematical Intelligencer 25, 17-31.
\vspace{2mm}

\noindent Schulze, M., 2011.   A new monotonic, clone-independent, reversal symmetric, and Condorcet-consistent single-winner election method. Social Choice and Welfare  36, 267-303.
\vspace{2mm}

\noindent Slutzki, G., Volij, O., 2005. Ranking participants in generalized tournaments. International Journal of Game Theory 33, 255-270.
\vspace{2mm}

\noindent Szpilrajn, E., 1930. Sur l'extension de l'ordre partiel.  Fundamenta Mathematicae 16, 386-389.
\vspace{2mm}

\noindent van den Brink, R., Gilles, R.P., 2009. The outflow ranking method for
weighted directed graphs. European Journal of Operational Research 193, 484-491.
\vspace{2mm}

\noindent Vaziri, B., Dabadghao, S., Yih, Y., Morin, T. L., 2017. Properties of sports ranking methods. Journal of the Operational Research Society, 1-12, doi:10.1057/s41274-017-0266-8.

\end{document}